%% file: main.tex
\documentclass[11pt, onecolumn]{article}

\usepackage{todonotes}
\usepackage{amsmath, amssymb, amsthm}
\usepackage{mathrsfs} 
\usepackage{geometry}
\usepackage[utf8]{inputenc}
\usepackage{hyperref}
\usepackage{graphicx}
\usepackage{tikz}\usepackage{tikz-cd}
\usepackage{braket}
\usetikzlibrary{patterns, positioning, shapes.geometric}

\usepackage{mathtools}  

\usepackage{bm}

\newtheorem{definition}{Definition}
\newtheorem{theorem}{Theorem}
\newtheorem{lemma}{Lemma}
\newtheorem{corollary}{Corollary}
\newtheorem{remark}{Remark}
\newtheorem{example}{Example}
\newtheorem{proposition}{Proposition}
\newtheorem*{lemmaA}{Lemma A}
\newtheorem*{corollaryA}{Corollary A}

\providecommand{\keywords}[1]{\par\medskip\noindent\textbf{Keywords:} #1\par}
\providecommand{\subclass}[1]{\par\noindent\textbf{Mathematics Subject Classification (2020):} #1\par}

\newcommand{\cH}{\mathcal{H}}

\newcommand{\cD}{\mathcal{D}}

\usepackage[T1]{fontenc}
\usepackage[utf8]{inputenc}
\usepackage[T1]{fontenc}
\usepackage[utf8]{inputenc}
\usepackage{amsmath}
\usepackage[all]{xy}
\usepackage{booktabs}   
\usepackage{graphicx}   
\usepackage{algorithm}      
\usepackage{algpseudocode}  

\usepackage[all]{xy}
\usepackage{amssymb} 
\usepackage{newunicodechar}
\newunicodechar{⋯}{\cdots}

\newcommand{\remove}[1]{}

\usepackage{tabularx}

\title{Finite-Precision Quantum Mechanics: Quantum Parcels, Information and Geometry }
\author{
    Abbas Edalat\\
    Imperial College London, UK\\
    \texttt{a.edalat@imperial.ac.uk}\\
    ORCID: \href{https://orcid.org/0000-0002-6211-1991}{0000-0002-6211-1991}}
\begin{document}\maketitle

\begin{abstract}
Quantum mechanics represents states by exactly specified density operators,
whereas experimentally available information is necessarily finite in
precision. We develop \emph{Interval Quantum Mechanics (IQM)}, in which
finite observational information is represented by a \emph{quantum parcel},
a convex open set of density operators. Experimental parcels are determined
by finitely many expectation intervals and form a basis for the state-space
topology. A single parcel represents the states compatible with the available
information, while a double parcel $(O_1,O_2)$ consists of disjoint possible
and excluded regions. Exact point states are recovered as ideal limits of
successive parcel refinement.

Unitary dynamics lifts to a reversible evolution of parcels. Finite-resolution
measurements are represented by fuzzy POVMs and Kraus updates; single parcels
are preserved, while double-parcel updates are order-compatible whenever they
remain double parcels. We give a useful sufficient condition for such
preservation, together with examples showing it is not necessary. Under
suitable conditions measurement contracts ambient Hilbert--Schmidt volume and
strictly increases the associated geometric information. This contrasts with
von Neumann entropy, which can remain unchanged under a selective measurement
even though information has been gained.

In this formulation, several familiar foundational paradoxes no longer arise
in standard form. Wave--particle duality becomes a continuous suppression of
interference as which-path resolution increases. Schr\"odinger's cat is
described by a finite parcel updated toward the observed outcome sector,
rather than an exact superposition undergoing abrupt collapse. Entanglement
remains genuinely nonclassical: CHSH violation persists throughout an open
parcel around a Bell state.

\keywords{Interval quantum mechanics \textbullet\ Quantum parcels \textbullet\
Finite-precision measurement \textbullet\ Density operators \textbullet\
Quantum foundations \textbullet\ Bell nonlocality}

\subclass{81P05 \textbullet\ 81P15 \textbullet\ 81P40 \textbullet\ 47L07}
\end{abstract}

\maketitle

\input{introduction}
\input{quantum-parcels}
\input{reduction-theorems}

\input{unitary-evolution}
\input{measurement}
\input{Uncertainty-and-Entropy}
\input{particle-wave}
\input{cat-paradox}
\input{entanglement}
\input{computational-aspects}

\input{conclusion}

\input{appendix}

\section*{Statements and Declarations}

\subsection*{Funding}
No funding was received to assist with the preparation of this manuscript.

\subsection*{Competing Interests}
The author has no competing interests to declare that are relevant to the
content of this article.

\subsection*{Data Availability}
This is a theoretical study. No datasets were generated or analysed during
the current study.

\bibliographystyle{plainurl}
\bibliography{refs}
\end{document}

%% file: introduction.tex
\section{Introduction}

Standard quantum mechanics is formulated in terms of idealised exact objects:
point states, exact real-valued probabilities, and sharp measurements. Physical
experiments, by contrast, have finite resolution and provide only finitely many
constraints on the state of a system. This distinction is particularly
pronounced for macroscopic systems, for which only a small number of
macroscopic observables, such as total energy, magnetisation, or particle
number, are experimentally accessible. The available data therefore do not
operationally determine a unique microscopic density matrix.

The tension between exact mathematical description and operational
accessibility has long been recognised. Heisenberg's original formulation of
quantum mechanics was motivated by the principle that the theory should be
expressed in terms of observable quantities~\cite{Heisenberg1925}. Von Neumann
likewise emphasised the limited relevance of complete microscopic
specification for macroscopic systems~\cite{vonNeumann1955}. More recently,
the work of Goldstein, Lebowitz, Tumulka, and Zanghì~\cite{GLTZ2010} on normal
typicality and of Reimann~\cite{Reimann2015} on quantum thermalisation has
shown that important macroscopic predictions can be largely insensitive to
the precise microscopic state.

In this paper we develop a finite-precision reformulation of quantum
mechanics that retains the standard Hilbert-space formalism but changes the
representation of experimentally accessible state information. A finite
record of strict expectation-value constraints determines a convex weak open
set of compatible density matrices, which we call a \emph{quantum parcel}.
The physically fundamental examples are \emph{experimental parcels}, defined
by finitely many such constraints; they form a basis for the state-space
topology. A single parcel records the states compatible with the available
information. A \emph{double-parcel} $(O_1,O_2)$ additionally records an
excluded region: $O_1$ contains the possible states and $O_2$ states ruled
out by the observational history. Reverse inclusion of possible regions,
together with inclusion of excluded regions, gives the natural information
order. We call the resulting framework \emph{Interval Quantum Mechanics}
(IQM).

Point states are not removed from the mathematical theory. Rather, IQM
separates exact states from finite observational descriptions and establishes
their precise relation by reduction. Experimental parcels can be successively
refined so that a nested sequence converges to any prescribed density
operator, in both finite and infinite dimensions. Correspondingly, the
prediction ranges of bounded observables converge to their standard
point-state values; for non-negative unbounded observables the analogous
result holds under a uniform moment condition. Standard point-state quantum
mechanics is therefore recovered as the infinite-precision limit of the
parcel formulation.

This relation between IQM and standard quantum mechanics is analogous to the
relation between finite numerical representations and exact real-number
analysis. Just as a floating-point or fixed-precision numeral represents not
a single real number but the range of reals consistent with its finite
specification, while the real continuum itself remains the exact ideal
theory approximated to arbitrary accuracy by refining the representation, a
quantum parcel represents not a single density operator but the range of
states consistent with a given finite observational record, while the
point-state Hilbert-space formalism remains the exact ideal theory recovered
in the limit of arbitrarily fine observational refinement
(Theorems~\ref{thm:reduction-finite} and~\ref{thm:reduction-infinite-bounded}
below). In both settings the finite-precision object is not a deficient
substitute for the exact one: it is the mathematically precise
representation of what is actually specified at any given stage, whether of
computation or of observation. This is the sense in which IQM should be
understood throughout the paper: not as a modification of quantum mechanics,
but as a reformulation of what it means to specify a quantum state with
finite information.

Unitary dynamics lifts directly from states to parcels. A unitary $U$ sends
a parcel $O$ to $UOU^\dagger$ and acts componentwise on double-parcels.
This evolution is deterministic, reversible and preserves the information
order. In finite dimensions it also preserves ambient Hilbert--Schmidt
volume, while operational volume is preserved when the observational
subspace is transported with the dynamics. Thus finite observational
descriptions evolve without introducing any modification of the underlying
unitary quantum dynamics.

Measurement, by contrast, changes the information represented by a parcel.
Finite-resolution measurements are modelled by fuzzy POVMs and their Kraus
updates. For non-sharp measurements the associated Kraus map is a
homeomorphism of state space and sends every single parcel to another parcel.
For double-parcels, the possible region is updated by the observed outcome,
while the excluded region retains previously excluded states together with
their updated images. We characterize when this update remains a
double-parcel and give a useful sufficient condition for preservation.
Under the hypotheses established in Section~\ref{sec:measurement},
sufficiently sharp measurements contract ambient Hilbert--Schmidt volume,
leading to a strict increase of the corresponding geometric information for
single parcels and, whenever the double-parcel structure is preserved, for
double-parcels. This information gain need not be reflected in the von
Neumann entropy: for example, a selective sharp measurement of a pure
superposition can produce a pure outcome state, so that the entropy remains
zero before and after measurement while the measurement has nevertheless
supplied information.

The usual scalar predictions of quantum mechanics acquire range-valued
counterparts at finite precision. Expectation values and Born probabilities
are obtained by ranging the corresponding affine functionals over the
parcel, while standard deviations likewise become uncertainty ranges. Von
Neumann entropy becomes an entropy range describing the mixedness of the
compatible states. These quantities are distinct from the geometric
information associated with the size of possible and excluded regions; the
latter measures refinement of the compatible state region rather than the
mixedness of its constituent states.

An important consequence of this finite-precision reformulation is that
several familiar foundational paradoxes no longer arise in their standard
form. Rather than requiring separate interpretational mechanisms,
wave--particle duality and Schr\"odinger's cat acquire geometric
descriptions in terms of parcels and their measurement-induced refinement,
while Bell nonclassicality remains robust throughout suitable open parcels
of entangled states. Sections~\ref{sec:double-slit}--\ref{sec:entanglement}
develop these geometric formulations explicitly.
\subsection{Relation to finite-information and foundational approaches}

The general claim that infinite precision need not be regarded as physically
fundamental has important precedents. Of particular relevance is the work of
Del Santo and Gisin~\cite{DelSantoGisin2025}. They identify a
\emph{principle of infinite precision} implicit in the conventional
interpretation of classical mechanics and replace infinitely specified real
quantities by \emph{finite-information quantities} (FIQs), whose
undetermined components are represented by propensities. The resulting
indeterminacy is ontic, and they show that several phenomena often associated
with quantum mechanics have analogues in an indeterministic interpretation
of classical physics.

IQM shares with this programme the motivation to take finite information
seriously, but its mathematical construction and purpose are different.
IQM retains the ordinary quantum state space $\cD(\cH)$ together with
non-commuting observables, Born probabilities, unitary dynamics, POVMs and
Kraus updates. Finite precision enters through the information available
about a quantum state: finitely many observational constraints determine a
convex weak open region of quantum state space. The problem is therefore not
to obtain indeterminism by modifying classical states, but to develop the
topology, dynamics, measurement theory and information structure of
finite-resolution quantum state descriptions.

The distinction is particularly clear for phenomena arising from
incompatible observables. Del Santo and Gisin emphasise that their
finite-information classical theory does not reproduce genuinely quantum
effects based on incompatibility; in particular, its analogue of
entanglement cannot violate Bell inequalities.

IQM, by contrast, operates inside the usual non-commutative quantum
structure: a sufficiently small parcel around an entangled state can itself
certify entanglement, and can be chosen so that every density matrix in the
parcel violates a specified Bell inequality (Theorem~\ref{bell-violation}).
Thus finite precision is common to the two approaches, while the underlying
state spaces and resulting theories are different.

IQM is also related to epistemic, operational and information-order
approaches to quantum theory. Helland~\cite{Helland2024} reconstructs
quantum structure from accessible variables and symmetry principles, whereas
IQM starts with the conventional Hilbert-space formalism and asks how its
states and dynamics should be represented when observational information has
finite resolution. Geometric logic and domain theory regard open sets and
refinement orders as natural representations of finite information
\cite{johnstone1982stone,vic89,abramsky1991domain,scott1970outline,
smyth1983,abramsky_domain_1994}; this motivates the use of weak open parcels
and reverse inclusion in IQM. Coecke and Martin~\cite{coecke2010partial}
instead order individual density matrices through their spectral information
content. Interval probabilities also occur in imprecise probability theory
\cite{walley1991statistical} and in earlier quantum settings
\cite{hartle1968quantum}; in IQM such intervals arise from ranging ordinary
Born probabilities over a finite-resolution state parcel.

The framework differs likewise from subjective epistemic interpretations
such as QBism~\cite{fuchs2010,fuchs2014introduction}: an experimental parcel
is fixed by a specified and shareable collection of observational
constraints rather than by an individual agent's degree of belief.
Generalised probabilistic theories~\cite{chiribella2010probabilistic,
Plavala2023} and topos-theoretic approaches
\cite{butterfield1999topos,doring2008topos,heunen2009topos} provide much
broader reconstructions or reformulations of quantum theory; IQM instead
retains ordinary quantum mechanics and studies the set-valued state
descriptions generated by finite observational information.

A different route to finite-resolution physics is taken by
Treppiedi~\cite{Treppiedi2026}, who derives the Hilbert-space formalism of
quantum mechanics, the Born rule, and the kinematic and dynamical structure
of quantum field theory from a single algebraic symmetry principle: finite-resolution
representations of the additive group $(\mathbb R,+)$ subject to four
consistency constraints. This is a representation-theoretic reconstruction
of quantum theory itself from finite resolution, extending to quantum field
theory and gravity; IQM instead retains the standard Hilbert-space formalism
throughout and constructs finite-precision state descriptions directly
within it.

IQM suggests an experimentally accessible proof of concept. In finite-sample
quantum tomography, finite-sample observations naturally determine a
statistically certified non-singleton region of state space compatible with
the measured data, whereas a single reconstructed density matrix requires an
additional inferential choice. By repeating the experiment with increasing
sample size, one can study the refinement of these certified regions and
compare their interval predictions with subsequent measurements. Such an
experiment cannot rule out an underlying exact density matrix, but it tests
the operational premise of IQM: that the finite-resolution object directly
warranted by experimental evidence is a parcel rather than a point state.

The parcel construction is not confined to the foundational examples studied
in the present paper. Subsequent work has extended it in two directions.
In~\cite{EdalatErgodicity2026}, Reimann's quantum ergodic and thermalisation
framework is formulated directly for quantum parcels: under an appropriate
effective-dimension condition, expectation intervals become concentrated
around equilibrium values for most late times. In
\cite{EdalatIAQFT2026}, the construction is extended to algebraic quantum
field theory, where finite local observations determine weak$^*$ open convex
parcels of states. The resulting Interval Algebraic Quantum Field Theory
develops parcel reduction, measurement and locality and reformulates several
operator-algebraic structures of AQFT within the finite-precision framework.

These developments indicate that quantum parcels are not specific to
finite-dimensional quantum tomography or to the foundational examples
considered below. Rather, they provide a common representation of finite
observational information that can be carried from ordinary quantum
mechanics to many-body thermalisation and relativistic quantum field theory.

%

\subsection{Main results}

The main mathematical results of the present paper are as follows:
\begin{itemize}

\item Experimental parcels generated by finitely many expectation intervals
form a basis for the weak topology in the density-operator state space.
Standard point-state descriptions are recovered under nested refinement,
both in finite dimensions and, under appropriate hypotheses, in infinite
dimensions, including for unbounded observables subject to uniform moment
bounds.

\item Unitary evolution lifts to a deterministic and reversible evolution of
parcels and, in finite dimensions, preserves ambient Hilbert--Schmidt volume,
while expectation values and Born probabilities become interval-valued.

\item Finite-resolution measurements are represented by fuzzy POVMs and
their Kraus updates. Single parcels are preserved, while double-parcel
updates are order-compatible whenever they remain double parcels; we give a
useful sufficient condition for preservation and examples showing that it
is not necessary.

\item For qubits and finite multi-qubit systems we establish contraction of
ambient Hilbert--Schmidt volume under suitable measurement updates, yielding
an associated geometric information measure that increases under the
conditions established below.

\item We apply the framework to wave--particle duality, Schr\"odinger's cat
and entanglement, and develop hyper-rectangular observational
representations of state-space outer approximations for computational use.

\end{itemize}

The remainder of the paper develops these results in turn, beginning with
the parcel formalism and its reduction to standard point-state quantum
mechanics, followed by dynamics, measurement, foundational applications,
geometric information, and computational representations.

\subsection{Technical Background}
\label{sec:technical-background}

We assume familiarity with the standard formalism of quantum mechanics:
Hilbert spaces, density matrices, observables, and positive
operator-valued measures (POVMs); see, for example,
\cite{Bengtsson2017,Conway1990,Watrous2018}. Let $\mathcal H$ be a
separable complex Hilbert space, let $\mathcal T(\mathcal H)$ denote the
trace-class operators, and let $\mathcal B(\mathcal H)$ denote the bounded
operators. These spaces are in duality under
\[
\langle T,B\rangle=\operatorname{Tr}(TB),
\qquad
\mathcal T(\mathcal H)^*=\mathcal B(\mathcal H).
\]
The state space is
\[
\mathcal D(\mathcal H)
=
\{\rho\in\mathcal T(\mathcal H):
\rho\geq0,\ \operatorname{Tr}\rho=1\}.
\]

We equip $\mathcal D(\mathcal H)$ with the relative weak topology
$\sigma(\mathcal T(\mathcal H),\mathcal B(\mathcal H))$. This is the
Banach-space weak topology on $\mathcal T(\mathcal H)$, not the weak*
topology on $\mathcal B(\mathcal H)$. By definition it is the coarsest
topology making every functional
$\rho\mapsto\operatorname{Tr}(\rho B)$, $B\in\mathcal B(\mathcal H)$,
continuous; see, for example, \cite[Ch.~V]{Conway1990}.

Since density matrices are self-adjoint, Hermitian operators suffice to
generate this topology. Indeed, every $B\in\mathcal B(\mathcal H)$ has
the decomposition
\[
B=A+iC,\qquad
A=\frac{B+B^*}{2},\qquad
C=\frac{B-B^*}{2i},
\]
with $A$ and $C$ Hermitian, and
\[
\operatorname{Re}\operatorname{Tr}(\rho B)=\operatorname{Tr}(\rho A),
\qquad
\operatorname{Im}\operatorname{Tr}(\rho B)=\operatorname{Tr}(\rho C).
\]
Thus the expectation cylinders
\[
\{\rho\in\mathcal D(\mathcal H):
a<\operatorname{Tr}(\rho H)<b\},
\qquad H=H^*,
\]
form a subbasis for the relative weak topology, and their finite
intersections form a basis. They are convex because the expectation
functionals are linear.

These expectation functionals also \emph{separate points}: if
$\rho\neq\sigma$, there exists a bounded Hermitian observable $H$ such
that
$\operatorname{Tr}(\rho H)\neq\operatorname{Tr}(\sigma H)$.
For example, one may take $H=\rho-\sigma$, since
\[
\operatorname{Tr}(\rho H)-\operatorname{Tr}(\sigma H)
=
\operatorname{Tr}((\rho-\sigma)^2)>0.
\]

On the state space $\mathcal D(\mathcal H)$, this weak topology in fact
coincides with the trace-norm topology, where the trace norm of
$T\in\mathcal T(\mathcal H)$ is
\[
\|T\|_1:=\operatorname{Tr}|T|
=\operatorname{Tr}\sqrt{T^\dagger T}.
\]
We first show that the weak topology is coarser. For every
$B\in\mathcal B(\mathcal H)$ and
$\rho,\sigma\in\mathcal D(\mathcal H)$,
\[
\bigl|\operatorname{Tr}((\rho-\sigma)B)\bigr|
\leq
\|\rho-\sigma\|_1\,\|B\|.
\]
Thus every evaluation map
$\rho\mapsto\operatorname{Tr}(\rho B)$ is trace-norm continuous. Since
$\sigma(\mathcal T(\mathcal H),\mathcal B(\mathcal H))$ is by definition
the coarsest topology making all these evaluation maps continuous, the
weak topology is coarser than the trace-norm topology.

For the
converse, let $\rho_\alpha\to\rho$ be any weakly convergent net in
$\mathcal D(\mathcal H)$; then $\rho_\alpha\to\rho$ in the weak operator
topology, with $\operatorname{Tr}\rho_\alpha=\operatorname{Tr}\rho=1$
fixed throughout. Gr\"umm's convergence theorem
\cite[Thm.~2.19]{Simon2005} (originally~\cite{Grumm1973}) therefore gives
$\|\rho_\alpha-\rho\|_1\to0$. Hence every trace-norm-open set is also
weak-open, and the two topologies coincide on $\mathcal D(\mathcal H)$.

Since $\mathcal H$ is separable, so is $\mathcal T(\mathcal H)$ in trace
norm, so $\mathcal D(\mathcal H)$ is a separable metric space in the
trace norm and therefore second countable
\cite[Thm.~30.3]{Munkres2000}. By the coincidence just established, the
relative weak topology on $\mathcal D(\mathcal H)$ is second countable
as well, hence first countable; consequently its convergence and closure
properties may be described using sequences rather than nets throughout
the remainder of the paper.

In finite dimensions this topology is the usual Euclidean topology and
$\mathcal D(\mathcal H)$ is compact. In infinite dimensions
$\mathcal D(\mathcal H)$ is not weakly compact in general. Accordingly,
compactness arguments below are confined to finite dimensions, while
Theorems~\ref{thm:reduction-infinite-bounded}
and~\ref{thm:unbounded-moment} use explicit trace-norm concentration and
moment bounds.

%% file: quantum-parcels.tex
\section{Quantum Parcels}
\label{sec:parcels}

Consider a quantum experiment prepared and characterized by finitely many
observations. Each observation is subject to the finite precision inherent
in any physical apparatus: it does not determine the corresponding physical
quantity exactly, but locates its value within an open interval consistent
with the resolution of the measurement. Such observations therefore do not
single out a density matrix as the state of the system. Rather, they determine
the set of all density matrices whose expectation values for the observed
quantities lie within the corresponding intervals. It is this set, rather
than any one of its constituent density matrices, that is warranted by the
finite experimental information. We formalize this operational state
description as a \emph{quantum parcel}.

\begin{definition}[Experimental parcel]
\label{def:experimental-parcel}
An \emph{experimental parcel} is a finite intersection of expectation cylinders,
\[
O = \{\rho \in \mathcal{D}(\mathcal{H}) : a_j < \operatorname{Tr}(\rho H_j) < b_j,\; j=1,\dots,m\},
\quad H_j = H_j^*\in\mathcal B(\mathcal H),
\]
for arbitrary bounded $H_1,\dots,H_m$ and finite or infinite $\dim\mathcal H$. Experimental parcels are convex weak open sets and are precisely the parcels directly certifiable by a finite set of coarse-grained measurements.
\end{definition}
Experimental parcels have a simple convex-geometric description. An
experimental parcel as defined above is an \emph{open spectrahedral
region}: it is obtained by intersecting the positive trace-one state
space with finitely many open affine strips determined by the
observational constraints. This description applies in both finite and
infinite dimensions. In finite dimensions, moreover, the closure of an
experimental parcel is a compact convex spectrahedron; in infinite
dimensions closures need not be compact without further assumptions.

Associated with $O$ is its \emph{state-space outer approximation},
obtained by dropping positivity while retaining the trace-one,
self-adjoint and observational constraints:
\[
O^{\mathrm{outer}}
=
\left\{
X\in\mathcal T(\mathcal H)_{\mathrm{sa}}:
\operatorname{Tr}X=1,\quad
a_j<\operatorname{Tr}(XH_j)<b_j,\quad j=1,\ldots,m
\right\}.
\]
Thus $O\subseteq O^{\mathrm{outer}}$.  This outer approximation retains
exactly the finite observational constraints defining $O$ but allows
trace-one self-adjoint operators that need not be positive. In infinite dimensions $O^{\mathrm{outer}}$ is necessarily unbounded
whenever it is non-empty. Indeed, finitely many observational
functionals together with the trace constraint leave a non-zero
infinite-dimensional common kernel in
$\mathcal T(\mathcal H)_{\mathrm{sa}}$; hence, starting from any
$X_0\in O^{\mathrm{outer}}$, one may move arbitrarily far along a
non-zero direction in this kernel without changing any of the defining
constraints. In finite dimensions, by contrast, the outer approximation
can be bounded when the observational directions span the full trace-one
affine space. Its
hyper-rectangular representation in finite observational coordinates is
developed in Section~\ref{sec:computational}.

Convexity is physically natural — if two density matrices are both consistent with the measured data, so is any mixture of them — and unitary evolution (Theorem~\ref{thm:unitary-evolution}) preserves it, as does a single fuzzy measurement update (Proposition~\ref{prop:singleupdate}), which maps a convex weak open set to a convex weak open set for every fuzziness $\eta\in(0,1)$; neither operation, in general, preserves the \emph{experimental} form. This motivates the more general notion:

\begin{definition}[Quantum parcel]
\label{def:single-parcel}
A \emph{quantum parcel}, or simply \emph{parcel}, is a non-empty convex weak open subset $O\subseteq\mathcal D(\mathcal H)$. Every experimental parcel is a parcel; general parcels arise from experimental ones under unitary evolution and fuzzy measurement.
\end{definition}

We say $O'$ is \emph{more informative} than $O$, written $O\sqsubseteq O'$, if $O'\subseteq O$.

A single parcel records states compatible with the available information but does not separately record states that have been explicitly excluded. This motivates the following extension.

\begin{definition}[Double-parcel]
A \emph{double-parcel} is a pair $(O_1,O_2)$ with $O_1,O_2\subseteq\mathcal D(\mathcal H)$ weak open and convex, $O_1\neq\emptyset$, and $O_1\cap O_2=\emptyset$. Here $O_1$ is the set of \emph{possible} states and $O_2$ the set of \emph{impossible} states.
\end{definition}
Disjointness follows automatically from the measurement apparatus: if $O_1$ is certified by a reading of some observable $A$ and $O_2$ by a macroscopically distinct, non-overlapping bin, then $O_1\cap O_2=\emptyset$ by construction, and $O_2\neq\emptyset$ whenever that bin meets the spectrum of $A$. The natural order on double-parcels is
\[
(O_1,O_2)\sqsubseteq(O_1',O_2') \iff O_1'\subseteq O_1 \text{ and } O_2\subseteq O_2',
\]
i.e.\ more informative means a smaller possible set and a larger impossible set. A single parcel cannot express that a state has been ruled out — it can only shrink (more knowledge) or grow (forgetting) — so the double-parcel is the minimal structure tracking both possible and excluded states; the measurement updates studied in Section~\ref{sec:measurement} preserve this order whenever both updated components remain genuine parcels (Proposition~\ref{prop:monotone}), which holds whenever the two updated components — not merely convex and open, which they are automatically — are also disjoint.

\subsection{Operational volume and geometric information}

Let
$\mathcal{HS}(\mathcal H)
=
\{A\in\mathcal B(\mathcal H):
\operatorname{Tr}(A^\dagger A)<\infty\}$
denote the space of Hilbert--Schmidt operators on $\mathcal H$, equipped
with the Hilbert--Schmidt inner product
$\langle A,B\rangle_{\mathrm{HS}}
=\operatorname{Tr}(A^\dagger B)$.
In finite dimensions
$\mathcal{HS}(\mathcal H)=\mathcal B(\mathcal H)$, whereas in infinite
dimensions $\mathcal{HS}(\mathcal H)$ is a proper subspace of
$\mathcal B(\mathcal H)$; for example,
$I\notin\mathcal{HS}(\mathcal H)$.

We first define a volume associated with the finitely many observables
used to specify an experimental parcel. Let $O$ be an experimental
parcel defined by bounded Hermitian Hilbert--Schmidt observables
$H_1,\ldots,H_m$ and intervals $(a_j,b_j)$.

Suppose first that $d=\dim\mathcal H<\infty$. Adding a multiple of the
identity to an observable merely shifts all its expectation values by
the same constant, since $\operatorname{Tr}(\rho I)=1$ for every
$\rho\in\mathcal D(\mathcal H)$. Thus the part of $H_j$ that distinguishes
states is its traceless part
\[
H_j^0
=
H_j-\frac{\operatorname{Tr}H_j}{d}I.
\]
We therefore define the observational subspace by
\[
V
=
\operatorname{span}_{\mathbb R}
\{H_1^0,\ldots,H_m^0\}
\qquad
(\dim\mathcal H<\infty).
\]

If $\dim\mathcal H=\infty$, the identity operator is not
Hilbert--Schmidt, so there is no identity direction within
$\mathcal{HS}(\mathcal H)$ to remove. In this case we define
\[
V
=
\operatorname{span}_{\mathbb R}
\{H_1,\ldots,H_m\}
\qquad
(\dim\mathcal H=\infty).
\]
In either case $V$ is a finite-dimensional real Hilbert space under the
Hilbert--Schmidt inner product. Write $m'=\dim V\leq m$ and choose a
Hilbert--Schmidt-orthonormal basis
$\Psi_1,\ldots,\Psi_{m'}$ of $V$.
The corresponding \emph{observational coordinate map} is
\[
\Phi_V(\rho)
=
\bigl(
\operatorname{Tr}(\rho\Psi_1),\ldots,
\operatorname{Tr}(\rho\Psi_{m'})
\bigr)
\in\mathbb R^{m'}.
\]
It records the component of the state visible through the chosen
finite-dimensional family of observables. Let $\lambda_{m'}$ denote
Lebesgue measure on $\mathbb R^{m'}$. We define the
\emph{operational volume} of $O$ relative to $V$ by
\[
\operatorname{Vol}_{\mathrm{op},V}(O)
:=
\lambda_{m'}\bigl(\Phi_V(O)\bigr).
\]
This definition is independent of the chosen
Hilbert--Schmidt-orthonormal basis of $V$, since two such bases are
related by an orthogonal transformation, which preserves Lebesgue
measure.

The image $\Phi_V(O)$ should be distinguished from the coordinate box
specified by the observational intervals. Positivity of density
matrices imposes additional constraints. In particular, when the
defining observables themselves form the chosen
Hilbert--Schmidt-orthonormal basis of $V$,
\[
\Phi_V(O)
=
\Phi_V(\mathcal D(\mathcal H))
\cap
\prod_{j=1}^{m'}(a_j,b_j),
\]
which may be a proper subset of the corresponding coordinate box.

In finite dimensions there is a second, generally different, notion of
volume. Let $d=\dim\mathcal H$ and take $V$ to be the full
$(d^2-1)$-dimensional real space of traceless Hermitian operators. Then
$\Phi_V$ is an affine isometric embedding of the state space, since
\[
\|\Phi_V(\rho)-\Phi_V(\sigma)\|_2
=
\|\rho-\sigma\|_{\mathrm{HS}}
\qquad
(\rho,\sigma\in\mathcal D(\mathcal H)).
\]
For this full choice of $V$ we write
\[
\operatorname{Vol}_{\mathrm{HS}}(S)
:=
\operatorname{Vol}_{\mathrm{op},V}(S)
\]
and call it the \emph{ambient Hilbert--Schmidt volume}. Thus
$\operatorname{Vol}_{\mathrm{HS}}(O)$ measures the volume of the whole
parcel in the full state space, whereas
$\operatorname{Vol}_{\mathrm{op},V}(O)$ measures the volume of its
lower-dimensional observational shadow. The two notions coincide when
$V$ is the full ambient space.

For example, consider a qubit. Its state space is the
three-dimensional Bloch ball. If the only observational direction is
the $z$ direction, an experimental parcel specified by
\[
a<\operatorname{Tr}(\rho\sigma_z)<b
\]
is a slab of the Bloch ball. Its observational shadow is the interval
obtained by projecting this slab onto the $z$ direction, so its
operational volume is the one-dimensional length of that interval
(up to the normalization associated with the
Hilbert--Schmidt-normalized Pauli observable). Its ambient
Hilbert--Schmidt volume, by contrast, is the three-dimensional volume
of the entire slab. Thus the two volumes quantify different aspects of
the same parcel: the variation visible through the chosen observations
and the size of the full region of compatible quantum states,
respectively. If all three Pauli directions are observed, then $V$ is
the full three-dimensional traceless Hermitian space and the two
notions of volume coincide.

\begin{remark}[Infinite dimensions]
\label{rem:volume-finite-infinite}
When $\dim\mathcal H=\infty$, the state space
$\mathcal D(\mathcal H)$ has no finite-dimensional ambient volume, and
a parcel defined by finitely many expectation constraints is generally
not contained in a finite-dimensional affine subspace. There is
therefore no analogue of $\operatorname{Vol}_{\mathrm{HS}}$ for the
whole state space. Nevertheless, if the finitely many observational
directions span a finite-dimensional Hilbert--Schmidt subspace $V$,
the shadow $\Phi_V(O)\subset\mathbb R^{\dim V}$ still has the ordinary
finite-dimensional Lebesgue volume
$\operatorname{Vol}_{\mathrm{op},V}(O)$. This volume is finite because
for every bounded observable $H$ and density matrix $\rho$,
$|\operatorname{Tr}(\rho H)|\leq\|H\|$.
\end{remark}

Corresponding to these two notions of volume are two geometric
information functionals. In finite dimensions we define the
\emph{ambient information} by
\[
I_{\mathrm{HS}}(O)
:=
\frac{1}{\operatorname{Vol}_{\mathrm{HS}}(O)},
\qquad
I_{\mathrm{HS}}(O_1,O_2)
:=
\frac{\operatorname{Vol}_{\mathrm{HS}}(O_2)}
{\operatorname{Vol}_{\mathrm{HS}}(O_1)}.
\]
These are the quantities used in the volume-contraction results of
Section~\ref{sec:measurement}, where sufficient conditions for their
increase under measurement are established.

For any finite-dimensional Hilbert--Schmidt observational subspace $V$,
we analogously define the \emph{observational information}
$I_{\mathrm{op},V}(O)$ and
$I_{\mathrm{op},V}(O_1,O_2)$ by replacing
$\operatorname{Vol}_{\mathrm{HS}}$ in the preceding formulas by
$\operatorname{Vol}_{\mathrm{op},V}$. These quantities remain meaningful
for suitable parcels in infinite-dimensional Hilbert spaces. In finite dimensions the observational and ambient
notions coincide when $V$ is the full space of traceless Hermitian
operators.

\paragraph{Extreme points.}
For convex $C\subseteq\mathcal T(\mathcal H)$ with weak closure $\overline C$, $\rho\in\overline C$ is an \emph{extreme point} of $\overline C$ if it is not a proper convex combination of two distinct points of $\overline C$. When $\overline C$ is weak compact, Krein--Milman gives $\overline C$ as the closed convex hull of its extreme points (a finite convex hull in finite dimensions, by Carathéodory), every weakly continuous linear functional attains its extrema on $\overline C$ at such points, and for a bounded polyhedron the extreme points are its vertices. Examples of weak compact sets in finite dimensions include $\mathcal D(\mathcal H)$ and the closure of any experimental parcel.

\paragraph{Openness and pure states.}
Any open neighbourhood containing a pure state necessarily also contains mixed states arbitrarily close to it. Thus finite observational localisation cannot certify purity exactly: purity appears only as a limiting point-state property under successive refinement.

\begin{proposition}[No open set of exclusively pure states]\label{prop:no-open-pure-states}
Let $\mathcal{H}$ be a Hilbert space of dimension $\ge 2$, and let $\mathcal{D}(\mathcal{H})$ be the space of density matrices endowed with the weak topology. No non-empty weak open set $O \subset \mathcal{D}(\mathcal{H})$ can consist exclusively of pure states.
\end{proposition}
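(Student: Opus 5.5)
The plan is to argue by contradiction, using the convexity of mixing together with the coincidence of the weak and trace-norm topologies on $\mathcal D(\mathcal H)$ established in Section~\ref{sec:technical-background}. Suppose $O$ is a non-empty weak open set all of whose elements are pure. Pick $\rho=|\psi\rangle\langle\psi|\in O$. Since $\dim\mathcal H\ge 2$, choose a unit vector $\phi\perp\psi$ and set $\sigma=|\phi\rangle\langle\phi|$. For $t\in(0,1)$ consider the mixture
\[
\rho_t=(1-t)\rho+t\sigma .
\]
This is a density operator with eigenvalues $1-t$ and $t$ (on $\psi,\phi$) and $0$ elsewhere. In particular $\operatorname{Tr}(\rho_t^2)=(1-t)^2+t^2<1$, so $\rho_t$ is not pure.

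Next we show $\rho_t\to\rho$ as $t\to0$. Directly, $\|\rho_t-\rho\|_1=t\|\sigma-\rho\|_1=2t\to0$, so the convergence holds in trace norm. By the coincidence of topologies, or simply because $|\operatorname{Tr}((\rho_t-\rho)B)|\le 2t\|B\|$ for every bounded $B$, the convergence also holds in the weak topology.

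Since $O$ is weak open and contains $\rho$, it follows that $\rho_t\in O$ for all sufficiently small $t>0$. This contradicts the assumption that $O$ contains only pure states. Equivalently, and more constructively, if $O\supseteq\{\tau:a_j<\operatorname{Tr}(\tau H_j)<b_j\}\ni\rho$ is a basic experimental neighbourhood, then any $t<\min_j \operatorname{dist}\bigl(\operatorname{Tr}(\rho H_j),\{a_j,b_j\}\bigr)/(2\|H_j\|)$ works.

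There is no real obstacle. The only points needing care are that dimension $\ge2$ guarantees an orthogonal $\phi$, and that purity is correctly characterized by $\operatorname{Tr}\rho^2=1$, or equivalently rank one, in infinite dimensions as well. Both are immediate.
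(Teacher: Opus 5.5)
Your proof is correct and follows essentially the same route as the paper: mix the pure state with a second state, note that the mixtures are non-pure and converge to it in trace norm (hence weakly), and contradict openness. The differences are cosmetic. You take $\sigma$ to be an orthogonal pure state and detect non-purity via $\operatorname{Tr}(\rho_t^2)<1$, where the paper allows any $\sigma\neq\rho_0$ and uses that pure states are exactly the extreme points. You also observe that only the easy bound $|\operatorname{Tr}((\rho_t-\rho)B)|\le\|\rho_t-\rho\|_1\|B\|$ is needed, rather than the full coincidence of the weak and trace-norm topologies.
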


%% file: reduction-theorems.tex
\section{Reduction to Standard Quantum Mechanics}\label{sec:reduction}

In IQM, exact point-state descriptions are recovered under successive refinement of finite observational information. We first formulate finite-resolution prediction ranges and then establish the reduction in finite and infinite dimensions.

In standard quantum mechanics, an observable \(H\) yields a definite expectation value \(\operatorname{Tr}(\rho H)\) for a given state \(\rho\). When the state is a quantum parcel (a weak open convex set), this single value generalises to a range of possible values.

\begin{definition}[Expectation range]
\label{def:expectation}
Let $O \subset \mathcal{D}(\mathcal{H})$ be a quantum parcel and $H$ a Hermitian operator. The \emph{expectation range} of $H$ over $O$ is
\[
E_O(H) := \{\operatorname{Tr}(\rho H) : \rho \in O\} \subseteq \mathbb{R}.
\]
Since $O$ is convex and the expectation functional is affine, $E_O(H)$ is an interval (Proposition~\ref{prop:expectation}); depending on endpoint attainment, it may be open, closed, half-open, or a singleton. We record the associated closed enclosure
\[
\overline E_O(H) := \Bigl[\inf_{\rho\in O}\operatorname{Tr}(\rho H),\ \sup_{\rho\in O}\operatorname{Tr}(\rho H)\Bigr],
\]
which bounds every predicted value regardless of endpoint attainment. The same observations apply verbatim to probability ranges in Definition~\ref{def:probability}, since $\rho\mapsto\operatorname{Tr}(\rho E)$ is likewise affine in $\rho$ for a POVM element $E$.
\end{definition}

\begin{proposition}[Expectation ranges are intervals]
\label{prop:expectation}
Let $O\subseteq\mathcal D(\mathcal H)$ be non-empty and convex, and let $H$ be Hermitian. Then $E_O(H)=\{\operatorname{Tr}(\rho H):\rho\in O\}$ is an interval with endpoints
\[
\alpha=\inf_{\rho\in O}\operatorname{Tr}(\rho H),
\qquad
\beta=\sup_{\rho\in O}\operatorname{Tr}(\rho H);
\]
every value strictly between $\alpha$ and $\beta$ is attained by some $\rho\in O$. Whether either endpoint itself is attained depends on $O$ and $H$.
\end{proposition}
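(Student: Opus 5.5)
The plan is to show that $E_O(H)$ is convex as a subset of $\mathbb R$; the interval structure, the endpoint description and the intermediate-value claim then follow at once. Define $f:O\to\mathbb R$ by $f(\rho)=\operatorname{Tr}(\rho H)$. This is well defined because $\rho$ is trace class and $H$ is bounded, and it is real-valued because $\rho$ and $H$ are self-adjoint. The key observation is that $f$ is affine on convex combinations:
\[
f(t\rho_1+(1-t)\rho_2)=t f(\rho_1)+(1-t)f(\rho_2),\qquad t\in[0,1],
\]
which holds by linearity of the trace.

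The core step is as follows. Take two values $x_1=f(\rho_1)$ and $x_2=f(\rho_2)$ in $E_O(H)$ with $x_1<x_2$, and any $x\in[x_1,x_2]$. Set $t=(x_2-x)/(x_2-x_1)\in[0,1]$. Convexity of $O$ gives $\rho_t=t\rho_1+(1-t)\rho_2\in O$, and the affine identity gives $f(\rho_t)=x$. Hence $E_O(H)$ is a convex subset of $\mathbb R$, and since $O\neq\emptyset$ it is non-empty. Non-empty convex subsets of $\mathbb R$ are exactly the intervals, possibly degenerate or unbounded. Here $E_O(H)$ is in fact bounded, because $|f(\rho)|\le\|H\|\,\|\rho\|_1=\|H\|$.

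Next, identify the endpoints. Set $\alpha=\inf f$ and $\beta=\sup f$; both are finite by the bound just given. For $\alpha<x<\beta$, the definitions of infimum and supremum supply $\rho_1,\rho_2\in O$ with $f(\rho_1)<x<f(\rho_2)$. The previous step then produces $\rho_t\in O$ with $f(\rho_t)=x$, so every value strictly between $\alpha$ and $\beta$ is attained. Combined with $E_O(H)\subseteq[\alpha,\beta]$, this gives $(\alpha,\beta)\subseteq E_O(H)\subseteq[\alpha,\beta]$.

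Finally, show by example that the endpoints may or may not be attained. For a qubit, take $H=\sigma_z$. If $O$ is the whole (open) state space, both endpoints $\pm1$ are attained at the pure states $|0\rangle\langle0|$ and $|1\rangle\langle1|$. If $O$ is the experimental parcel $\{-\tfrac12<\operatorname{Tr}(\rho\sigma_z)<\tfrac12\}$, neither endpoint is attained. None of these steps is a real obstacle. The only care needed is to justify that $f$ is real and finite in infinite dimensions, which the trace-class/bounded duality recalled in Section~\ref{sec:technical-background} supplies.
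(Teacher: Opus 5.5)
Your proposal is correct and follows essentially the paper's argument: mixing two states whose values lie on either side of the target and using that $\rho\mapsto\operatorname{Tr}(\rho H)$ is affine gives $(\alpha,\beta)\subseteq E_O(H)\subseteq[\alpha,\beta]$. The only differences are cosmetic: you solve for the mixing parameter explicitly where the paper invokes the intermediate value theorem, and your slab $\{-\tfrac12<\operatorname{Tr}(\rho\sigma_z)<\tfrac12\}$ is a more concrete example of non-attained endpoints than the one the paper gives.
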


\begin{theorem}[Reduction to point states: finite dimensions]
\label{thm:reduction-finite}
Let $\dim\mathcal{H} = N < \infty$ and let $\rho_0 \in \mathcal{D}(\mathcal{H})$. Let $\Psi_1,\dots,\Psi_{N^2-1}$ be a Hilbert--Schmidt-orthonormal basis of the traceless Hermitian operators of Section~\ref{sec:parcels}, so that $\Phi_V$ (for $V$ the full ambient space) is an affine isometric embedding of $\mathcal D(\mathcal H)$. Define the nested sequence of parcels
\[
O_n(\rho_0) = \Bigl\{\rho \in \mathcal{D}(\mathcal{H}) :
|\operatorname{Tr}(\rho \Psi_j) - \operatorname{Tr}(\rho_0 \Psi_j)| < 2^{-n},\;
j = 1,\dots,N^2-1\Bigr\}.
\]
Then $\rho_0 \in O_n(\rho_0)$ for all $n$, $\bigcap_{n=1}^\infty O_n(\rho_0) = \{\rho_0\}$, and $\operatorname{diam}_1(O_n(\rho_0))\to0$. Moreover, for every bounded Hermitian operator $H$ on $\mathcal{H}$,
\[
\lim_{n\to\infty} \inf_{\rho \in O_n(\rho_0)} \operatorname{Tr}(\rho H)
= \lim_{n\to\infty} \sup_{\rho \in O_n(\rho_0)} \operatorname{Tr}(\rho H)
= \operatorname{Tr}(\rho_0 H).
\]
\end{theorem}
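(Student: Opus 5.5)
The plan is to work entirely through the affine isometric embedding $\Phi_V$ from Section~\ref{sec:parcels}, with $V$ the full $(N^2-1)$-dimensional space of traceless Hermitian operators.

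\textbf{Membership and nesting.} Membership $\rho_0\in O_n(\rho_0)$ is immediate, since every defining difference is $0<2^{-n}$. Nesting is immediate because $2^{-(n+1)}<2^{-n}$. Each $O_n(\rho_0)$ is an experimental parcel in the sense of Definition~\ref{def:experimental-parcel}, being a finite intersection of expectation cylinders with $H_j=\Psi_j$.

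\textbf{Diameter bound.} For $\rho\in O_n(\rho_0)$, every coordinate of $\Phi_V(\rho)-\Phi_V(\rho_0)$ has absolute value below $2^{-n}$. Hence
\[
\|\rho-\rho_0\|_{\mathrm{HS}}=\|\Phi_V(\rho)-\Phi_V(\rho_0)\|_2<\sqrt{N^2-1}\,2^{-n}.
\]
The isometry identity holds because $\rho-\rho_0$ is traceless Hermitian, so it expands in the basis $\Psi_j$ with coefficients $\operatorname{Tr}((\rho-\rho_0)\Psi_j)$. Converting norms by Cauchy--Schwarz on the eigenvalues, $\|X\|_1\le\sqrt N\,\|X\|_{\mathrm{HS}}$. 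The triangle inequality through $\rho_0$ then gives
\[
\operatorname{diam}_1(O_n(\rho_0))\le 2\sqrt N\sqrt{N^2-1}\,2^{-n}\to0.
\]

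\textbf{Intersection.} If $\sigma$ lies in every $O_n(\rho_0)$, then $\|\sigma-\rho_0\|_1\le\operatorname{diam}_1(O_n(\rho_0))$ for all $n$. Therefore $\sigma=\rho_0$, and $\bigcap_n O_n(\rho_0)=\{\rho_0\}$.

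\textbf{Expectation limits.} For bounded Hermitian $H$ and $\rho\in O_n(\rho_0)$, use
\[
|\operatorname{Tr}(\rho H)-\operatorname{Tr}(\rho_0H)|\le\|\rho-\rho_0\|_1\|H\|,
\]
which is the inequality from Section~\ref{sec:technical-background}. This yields
\[
\operatorname{Tr}(\rho_0H)-\varepsilon_n\le\inf_{O_n}\operatorname{Tr}(\rho H)\le\operatorname{Tr}(\rho_0H)\le\sup_{O_n}\operatorname{Tr}(\rho H)\le\operatorname{Tr}(\rho_0H)+\varepsilon_n,
\]
with $\varepsilon_n=\sqrt N\sqrt{N^2-1}\,2^{-n}\|H\|$. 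The middle inequalities use $\rho_0\in O_n$, so the infimum and supremum are finite. The infima are nondecreasing and the suprema nonincreasing by nesting, so both limits exist, and the sandwich forces both to equal $\operatorname{Tr}(\rho_0H)$.

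\textbf{Main obstacle.} Only the isometry and norm-conversion step carries any content. The key point is that the identity component drops out because both states have trace one, which is exactly why a basis of the traceless operators suffices. Everything else is routine.
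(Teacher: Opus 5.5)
Your proposal is correct and follows essentially the same route as the paper's proof: Parseval in the Hilbert--Schmidt-orthonormal traceless basis, the conversion $\|X\|_1\le\sqrt N\,\|X\|_{\mathrm{HS}}$, the triangle inequality for the diameter, and the bound $|\operatorname{Tr}((\rho-\rho_0)H)|\le\|H\|\,\|\rho-\rho_0\|_1$ for the expectation limits. Your explicit checks of nesting and of the monotonicity of the infima and suprema are extra details that the paper leaves implicit.
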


\begin{remark}[Outer approximations collapse in finite dimensions]
\label{rem:finite-outer-collapse}
The outer set $O_n^{\mathrm{outer}}(\rho_0)$, obtained by dropping the positivity constraint from $O_n(\rho_0)$ above, satisfies the identical bound $\operatorname{diam}_1(O_n^{\mathrm{outer}}(\rho_0))\to0$ by the same computation, since the $\Psi_j$ already span the entire ambient trace-one hyperplane; hence it too collapses to $\{\rho_0\}$. This is special to the finite-dimensional case, where finitely many observables can span the whole state space, and does not extend to infinite dimensions.
\end{remark}

\begin{theorem}[Reduction to point states: infinite dimensions]
\label{thm:reduction-infinite-bounded}
Let $\dim\mathcal H=\infty$ and $\rho_0\in\mathcal D(\mathcal H)$. There exists a nested sequence of experimental parcels
\[
O_1(\rho_0)\supseteq O_2(\rho_0)\supseteq\cdots,
\qquad
\rho_0\in O_n(\rho_0)\subseteq B_1(\rho_0,2^{-n}):=\bigl\{\rho\in\mathcal D(\mathcal H):\|\rho-\rho_0\|_1<2^{-n}\bigr\},
\]
for every $n$. Consequently $\bigcap_n O_n(\rho_0)=\{\rho_0\}$, and for every bounded Hermitian $H$,
\[
\sup_{\rho\in O_n(\rho_0)}\bigl|\operatorname{Tr}(\rho H)-\operatorname{Tr}(\rho_0H)\bigr| \le \|H\|\,2^{-n}\xrightarrow[n\to\infty]{}0,
\]
so the expectation ranges collapse to the standard value:
\[
\operatorname{diam} E_{O_n(\rho_0)}(H)\longrightarrow0,
\qquad
\inf E_{O_n(\rho_0)}(H),\ \sup E_{O_n(\rho_0)}(H)\ \longrightarrow\ \operatorname{Tr}(\rho_0H).
\]
\end{theorem}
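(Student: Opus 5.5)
Theorem~\ref{thm:reduction-infinite-bounded} requires one thing: for each $n$, an experimental parcel containing $\rho_0$ inside the trace-norm ball $B_1(\rho_0,2^{-n})$. Nesting is then arranged by intersection, and the remaining claims follow formally. The key input is the coincidence of the relative weak topology with the trace-norm topology on $\mathcal D(\mathcal H)$, established in Section~\ref{sec:technical-background} via Gr\"umm's theorem. Since $B_1(\rho_0,\varepsilon)$ is trace-norm open, it is weak open. Experimental parcels form a basis for the relative weak topology, being finite intersections of expectation cylinders. Hence there is an experimental parcel $P_n$ with $\rho_0\in P_n\subseteq B_1(\rho_0,2^{-n})$.

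To make the sequence nested, set $O_n(\rho_0):=P_1\cap\cdots\cap P_n$. A finite intersection of experimental parcels is again experimental, since we simply concatenate the finitely many constraints. It contains $\rho_0$ and lies in $P_n\subseteq B_1(\rho_0,2^{-n})$.

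If a more explicit construction is wanted, one could instead fix an eigenbasis truncation. Take a spectral projection $P$ of finite rank with $\operatorname{Tr}(\rho_0 P)>1-\delta$, and constrain $\operatorname{Tr}(\rho P)>1-\delta$ together with the finitely many matrix entries of $P\rho P$. A gentle-measurement estimate then gives $\|\rho-P\rho P\|_1\le 2\sqrt{\delta}$, so $\|\rho-\rho_0\|_1$ is small. I would mention this only as an alternative: the abstract basis argument already suffices and avoids the main technical obstacle, namely controlling the off-diagonal tail uniformly.

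The consequences are routine.
\begin{itemize}
\item \emph{Intersection.} If $\rho\in\bigcap_n O_n(\rho_0)$, then $\|\rho-\rho_0\|_1<2^{-n}$ for all $n$, so $\rho=\rho_0$.
\item \emph{Uniform bound.} For bounded Hermitian $H$, the duality inequality $|\operatorname{Tr}((\rho-\rho_0)H)|\le\|\rho-\rho_0\|_1\|H\|$ from Section~\ref{sec:technical-background} gives the stated bound $\|H\|2^{-n}$.
\item \emph{Collapse of ranges.} By Proposition~\ref{prop:expectation}, $E_{O_n(\rho_0)}(H)$ is an interval contained in $[\operatorname{Tr}(\rho_0H)-\|H\|2^{-n},\operatorname{Tr}(\rho_0H)+\|H\|2^{-n}]$ and containing $\operatorname{Tr}(\rho_0H)$. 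Hence its diameter is at most $2\|H\|2^{-n}\to0$, and both endpoints converge to $\operatorname{Tr}(\rho_0H)$.
\end{itemize}

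The only nontrivial step is the weak/trace-norm coincidence. It is already proved earlier in the paper, so no new obstacle arises.
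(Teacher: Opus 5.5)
Your proposal is correct and follows essentially the paper's argument. The weak/trace-norm coincidence makes each ball $B_1(\rho_0,2^{-n})$ a weak neighbourhood, the basis property yields an experimental parcel inside it, and the duality bound gives the consequences; your nesting via $P_1\cap\cdots\cap P_n$ is equivalent to the paper's inductive choice of $O_{n+1}(\rho_0)\subseteq O_n(\rho_0)\cap B_1(\rho_0,2^{-(n+1)})$.
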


\begin{remark}
\label{rem:no-infinite-outer}
Unlike the finite-dimensional case, this argument does not extend to an outer set obtained by dropping positivity from finitely many linear constraints on $\mathcal T(\mathcal H)$: finitely many such constraints leave an infinite-dimensional kernel when $\dim\mathcal H=\infty$, so no analogue of Remark~\ref{rem:finite-outer-collapse} is available. The reduction result is therefore stated here directly for genuine experimental parcels.
\end{remark}

\begin{theorem}[Reduction for unbounded observables under a uniform moment bound]
\label{thm:unbounded-moment}
Let $\{O_n(\rho_0)\}_{n\in\mathbb N}$ be the nested experimental parcels of
Theorem~\ref{thm:reduction-infinite-bounded}, and let $H\ge 0$ be a
self-adjoint, possibly unbounded operator such that
\[
\operatorname{Tr}(\rho_0H)<\infty.
\]
Assume that there exists $\delta>0$ such that
\[
\sup_{n\in\mathbb N}\sup_{\rho\in O_n(\rho_0)}
\operatorname{Tr}(\rho H^{1+\delta})<\infty.
\]
Then
\[
\lim_{n\to\infty}
\sup_{\rho\in O_n(\rho_0)}
\left|
\operatorname{Tr}(\rho H)-\operatorname{Tr}(\rho_0H)
\right|=0.
\]
Equivalently,
\[
\inf E_{O_n(\rho_0)}(H),\;
\sup E_{O_n(\rho_0)}(H)
\longrightarrow
\operatorname{Tr}(\rho_0H).
\]
\end{theorem}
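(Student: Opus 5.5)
Theorem~\ref{thm:reduction-infinite-bounded} supplies parcels that shrink in trace norm, $\|\rho-\rho_0\|_1<2^{-n}$ for all $\rho\in O_n(\rho_0)$. The plan is to split $H$ into a bounded low-energy part, which is controlled by this trace-norm bound, and a high-energy tail, which is controlled uniformly by the $(1+\delta)$-moment hypothesis. Concretely, fix a cutoff $\Lambda>0$ and let $P_\Lambda=\mathbf 1_{[0,\Lambda]}(H)$ be the spectral projection. Set $H_\Lambda=HP_\Lambda$, which is bounded with $\|H_\Lambda\|\le\Lambda$, and $H^{>}_\Lambda=H(I-P_\Lambda)\ge0$. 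For $\rho\ge0$ (expectations of positive operators taken in $[0,\infty]$ via $\sum_k p_k\langle\psi_k,H\psi_k\rangle$), this gives
\[
\operatorname{Tr}(\rho H)=\operatorname{Tr}(\rho H_\Lambda)+\operatorname{Tr}(\rho H^{>}_\Lambda).
\]

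\textbf{Tail bound.} On the spectral region $\lambda>\Lambda$ we have $\lambda\le\lambda^{1+\delta}\Lambda^{-\delta}$. Hence $H^{>}_\Lambda\le\Lambda^{-\delta}H^{1+\delta}$ in the form sense, and so
\[
\operatorname{Tr}(\rho H^{>}_\Lambda)\le\Lambda^{-\delta}M,\qquad
M:=\sup_n\sup_{\rho\in O_n(\rho_0)}\operatorname{Tr}(\rho H^{1+\delta}),
\]
uniformly for every $\rho$ in every $O_n(\rho_0)$. Since $\rho_0\in O_n(\rho_0)$ for all $n$, the same bound holds for $\rho_0$. In particular every expectation involved is finite, and the subtraction $\operatorname{Tr}(\rho H)-\operatorname{Tr}(\rho_0H)$ is legitimate.

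\textbf{Combining.} For $\rho\in O_n(\rho_0)$,
\[
|\operatorname{Tr}(\rho H)-\operatorname{Tr}(\rho_0H)|
\le\|H_\Lambda\|\,\|\rho-\rho_0\|_1+2\Lambda^{-\delta}M
\le\Lambda2^{-n}+2\Lambda^{-\delta}M.
\]
Given $\varepsilon>0$, first choose $\Lambda$ with $2\Lambda^{-\delta}M<\varepsilon/2$, then choose $n$ large so that $\Lambda2^{-n}<\varepsilon/2$. Taking the supremum over $\rho\in O_n(\rho_0)$ gives the limit. Because the parcels are nested, the supremum is nonincreasing in $n$, so the bound holds for all subsequent $n$ as well. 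The equivalent statement about $\inf E_{O_n(\rho_0)}(H)$ and $\sup E_{O_n(\rho_0)}(H)$ then follows immediately, since both lie within that supremum of $\operatorname{Tr}(\rho_0H)$.

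\textbf{Main obstacle.} The delicate step is the tail bound for an unbounded $H$: interpreting $\operatorname{Tr}(\rho H)$ and $\operatorname{Tr}(\rho H^{1+\delta})$ for density operators whose range need not lie in the domain of $H$, and justifying the operator inequality. I would handle this through the spectral decomposition $\rho=\sum_kp_k|\psi_k\rangle\langle\psi_k|$. Each term $\langle\psi_k,H\psi_k\rangle$ is the integral of $\lambda$ against the spectral measure $\mu_{\psi_k}$, and the inequality then holds pointwise on spectra before summing with the nonnegative weights $p_k$, using monotone convergence. This reduces everything to scalar inequalities between nonnegative integrals.
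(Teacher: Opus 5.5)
Your proposal is correct and follows essentially the same route as the paper: the same spectral truncation $H_\Lambda=HP_\Lambda$, the same tail bound $H(I-P_\Lambda)\le\Lambda^{-\delta}H^{1+\delta}$ using the uniform moment hypothesis, and trace-norm control of the bounded part from Theorem~\ref{thm:reduction-infinite-bounded}. The only difference is that you bound the $\rho_0$ tail via $\rho_0\in O_n(\rho_0)$ rather than by monotone convergence. This gives the explicit rate $\Lambda 2^{-n}+2\Lambda^{-\delta}M$ and shows that the hypothesis $\operatorname{Tr}(\rho_0H)<\infty$ already follows from the moment bound.
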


\begin{remark}
The additional moment assumption in Theorem~\ref{thm:unbounded-moment}
is a uniform integrability condition controlling the high-energy tails of the
states in the parcels. It is satisfied for many physically important
unbounded observables on natural classes of states, including the harmonic
oscillator Hamiltonian
\[
H=\frac12(P^2+X^2)=a^\dagger a+\frac12,
\]
the number operator \(N=a^\dagger a\), and more general Schr\"odinger
operators of the form
\[
H=-\Delta+V(x)
\]
with confining potentials \(V(x)\). For thermal states, coherent states,
Gaussian states, and more generally for families of states with uniformly
bounded energy moments, one typically has
\[
\sup_n\sup_{\rho\in O_n}\operatorname{Tr}(\rho H^{1+\delta})<\infty.
\]

By contrast, even trace-norm convergence does not control expectations of
unbounded observables. Let \(H e_k=\lambda_k e_k\), with
\(\lambda_k\to\infty\), and define
\[
\rho_k=
\left(1-\frac1k\right)\rho_0
+
\frac1k|e_k\rangle\langle e_k|.
\]
Then
\[
\|\rho_k-\rho_0\|_1
\le \frac{2}{k}\to0,
\]
so $\rho_k\to\rho_0$ in trace norm, equivalently weakly (Section~\ref{sec:parcels}). Whereas
\[
\operatorname{Tr}(\rho_kH)
=
\left(1-\frac1k\right)\operatorname{Tr}(\rho_0H)
+
\frac{\lambda_k}{k},
\]
which diverges whenever $\lambda_k/k\to\infty$. Thus an additional
uniform moment condition is necessary in general even under trace-norm
convergence.
\end{remark}

\begin{corollary}[Standard QM as the infinite-precision limit]
\label{cor:limit}
Standard point-state quantum mechanics is recovered as the infinite-precision limit of interval quantum mechanics, under arbitrarily fine refinement of the defining observational constraints (Theorems~\ref{thm:reduction-finite} and~\ref{thm:reduction-infinite-bounded}); in finite dimensions this also holds for outer polyhedral approximations (Remark~\ref{rem:finite-outer-collapse}). For bounded observables the convergence is unconditional. For non-negative self-adjoint observables, the same conclusion holds under the uniform moment bound of Theorem~\ref{thm:unbounded-moment}.
\end{corollary}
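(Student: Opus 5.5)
The corollary assembles results already established, so the plan is to make precise what ``recovered as the infinite-precision limit'' means and then check each clause against the earlier theorems. I will read the claim as follows. For every $\rho_0\in\mathcal D(\mathcal H)$ there is a nested sequence of experimental parcels $O_n(\rho_0)$ that contain $\rho_0$ and shrink to $\{\rho_0\}$. Along this sequence, for every observable in the stated class, the expectation ranges $E_{O_n(\rho_0)}(H)$ (or their closed enclosures $\overline E_{O_n(\rho_0)}(H)$) collapse to the point-state value $\operatorname{Tr}(\rho_0H)$. The same then holds for Born probabilities $\operatorname{Tr}(\rho_0E)$, since a POVM element $E$ is a bounded Hermitian operator with $0\le E\le I$.

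\textbf{Bounded observables.} I will split into the two dimension cases. If $\dim\mathcal H=N<\infty$, Theorem~\ref{thm:reduction-finite} gives $\rho_0\in O_n(\rho_0)$, $\bigcap_n O_n(\rho_0)=\{\rho_0\}$, and convergence of $\inf$ and $\sup$ of $\operatorname{Tr}(\rho H)$ to $\operatorname{Tr}(\rho_0H)$ for every bounded Hermitian $H$. By Proposition~\ref{prop:expectation}, $E_{O_n(\rho_0)}(H)$ is an interval with exactly these endpoints, so its diameter tends to $0$. If $\dim\mathcal H=\infty$, Theorem~\ref{thm:reduction-infinite-bounded} gives the same conclusion together with the explicit rate $\|H\|2^{-n}$. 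Neither case needs any hypothesis beyond boundedness, which is the sense in which the convergence is ``unconditional''.

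\textbf{Outer approximations in finite dimensions.} I will invoke Remark~\ref{rem:finite-outer-collapse}: $\operatorname{diam}_1 O_n^{\mathrm{outer}}(\rho_0)\to0$ and $\rho_0\in O_n^{\mathrm{outer}}(\rho_0)$. For any $X\in O_n^{\mathrm{outer}}(\rho_0)$ I then want the bound $|\operatorname{Tr}(XH)-\operatorname{Tr}(\rho_0H)|\le\|H\|\,\|X-\rho_0\|_1$, which is the trace-norm/operator-norm duality and holds for self-adjoint trace-class $X$, not only for states. This shows that the outer ranges also collapse. Nothing new is needed beyond the remark, but I will note explicitly that this clause is restricted to finite dimensions, per Remark~\ref{rem:no-infinite-outer}.

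\textbf{Non-negative unbounded observables.} This case is exactly Theorem~\ref{thm:unbounded-moment}, applied to the infinite-dimensional nested parcels. The only point needing care, and the one place I expect any friction, is the finite-dimensional setting. There every self-adjoint operator is bounded, so the case reduces to the bounded one. I would also cite the remark following Theorem~\ref{thm:unbounded-moment}, whose counterexample shows the moment hypothesis cannot simply be dropped. Beyond this bookkeeping the proof is a direct concatenation of the cited results, and I anticipate no substantive obstacle.
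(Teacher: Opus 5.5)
Your proposal is correct and matches the paper. The paper gives no separate proof of this corollary: it obtains it, as you do, by combining Theorems~\ref{thm:reduction-finite}, \ref{thm:reduction-infinite-bounded} and~\ref{thm:unbounded-moment} with Remark~\ref{rem:finite-outer-collapse}. Your extra bookkeeping makes explicit what the paper leaves implicit: the duality bound $|\operatorname{Tr}(XH)|\le\|X\|_1\|H\|$ for non-positive $X$ in the outer sets, and the observation that in finite dimensions every self-adjoint observable is bounded, so the moment theorem is needed only in infinite dimensions.
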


At every finite stage of refinement, non-degenerate parcels generally yield non-singleton prediction ranges. The diameters of these prediction ranges quantify the residual uncertainty associated with the finite observational specification and vanish in the point-state limit established above.

%% file: unitary-evolution.tex
\section{Interval Unitary Dynamics}\label{sec:unitary}

Standard unitary evolution lifts naturally from point states to quantum parcels. If $U(t)$ is the unitary propagator, the point-state map $\rho\mapsto U(t)\rho U(t)^\dagger$ induces a reversible evolution of parcels by direct image.

\begin{definition}[Unitary evolution of a parcel]
\label{def:unitary}
Assume a unitary operator \(U\) on \(\mathcal{H}\). For a single quantum parcel \(O\subset\mathcal{D}(\mathcal{H})\) and a double quantum parcel with components \(O_1,O_2\subseteq\mathcal{D}(\mathcal{H})\), \(O_1\cap O_2=\emptyset\), define the evolved parcels, respectively, as
\[
U(O)=\{U\rho U^\dagger\mid\rho\in O\},\qquad U(O_1,O_2)=(U(O_1),U(O_2)).
\]
Since unitary conjugation is a linear weak homeomorphism of \(\mathcal D(\mathcal H)\), it preserves convexity, weak openness and disjointness; hence \(U(O)\) and \(U(O_1,O_2)\) are, respectively, a valid single and a valid double quantum parcel.
\end{definition}

\begin{theorem}[Properties of unitary evolution]
\label{thm:unitary-evolution}
Unitary evolution on single-parcels \(O,O'\) and on a double-parcel \((O_1,O_2)\) satisfies:
\begin{enumerate}
    \item \textbf{Reversibility:} \(U^{-1}\cdot(U\cdot(O))=O\) and \(U^{-1}\cdot(U\cdot(O_1,O_2))=(O_1,O_2)\).
    \item \textbf{Preservation of information order:} If \(O\sqsubseteq O'\) (i.e.\ \(O'\subseteq O\)), then \(U(O)\sqsubseteq U(O')\) (i.e.\ \(U(O')\subseteq U(O)\)). Similarly, if \((O_1,O_2)\sqsubseteq (O_1',O_2')\), then \(U\cdot(O_1,O_2)\sqsubseteq U\cdot(O_1',O_2')\).
    \item \textbf{Ambient volume preservation:} $\dim\mathcal H<\infty \implies \operatorname{Vol}_{\mathrm{HS}}(U(O)) = \operatorname{Vol}_{\mathrm{HS}}(O)$.
\end{enumerate}
Thus unitary evolution is deterministic, reversible, order-preserving, and ambient-volume-preserving.
\end{theorem}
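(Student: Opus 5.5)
The plan is to derive all three properties from the fact that unitary conjugation $\mathcal U(\rho)=U\rho U^\dagger$ is a bijection of $\mathcal D(\mathcal H)$ with inverse $\mathcal U^{-1}(\rho)=U^\dagger\rho U$. Both maps are linear, trace-preserving and positivity-preserving. They are weakly continuous because $\operatorname{Tr}(U\rho U^\dagger B)=\operatorname{Tr}(\rho\,U^\dagger BU)$ with $U^\dagger BU\in\mathcal B(\mathcal H)$. Definition~\ref{def:unitary} already records that $\mathcal U$ sends parcels to parcels and double-parcels to double-parcels, so the remaining work is set-theoretic for items 1 and 2 and measure-theoretic for item 3.

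\textbf{Reversibility.} For any subset $S\subseteq\mathcal D(\mathcal H)$ we have $U^{-1}(U(S))=\{U^\dagger U\rho U^\dagger U:\rho\in S\}=S$, using $U^\dagger U=I$. Applying this componentwise to $(O_1,O_2)$ gives the double-parcel statement.

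\textbf{Order preservation.} Direct images preserve inclusion, so $O'\subseteq O$ implies $U(O')\subseteq U(O)$. For double-parcels, $O_1'\subseteq O_1$ and $O_2\subseteq O_2'$ give $U(O_1')\subseteq U(O_1)$ and $U(O_2)\subseteq U(O_2')$, which is exactly $U\cdot(O_1,O_2)\sqsubseteq U\cdot(O_1',O_2')$. Both images are genuine double-parcels by Definition~\ref{def:unitary}, since injectivity preserves disjointness.

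\textbf{Volume preservation.} This is the only step with content. Fix $d=\dim\mathcal H<\infty$ and let $V$ be the full space of traceless Hermitian operators with Hilbert--Schmidt orthonormal basis $\Psi_1,\dots,\Psi_{d^2-1}$. The map $X\mapsto UXU^\dagger$ preserves tracelessness and hermiticity. It is a real-linear isometry of $V$ for the Hilbert--Schmidt inner product, because $\operatorname{Tr}((UXU^\dagger)^\dagger UYU^\dagger)=\operatorname{Tr}(X^\dagger Y)$ by cyclicity. Hence $Q_{ij}=\operatorname{Tr}(\Psi_i U\Psi_jU^\dagger)$ is an orthogonal $(d^2-1)\times(d^2-1)$ matrix.

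Next, write $\rho=I/d+\sum_j c_j\Psi_j$ with $c_j=\operatorname{Tr}(\rho\Psi_j)=\Phi_V(\rho)_j$. Since $UIU^\dagger=I$, we get
\[
\Phi_V(U\rho U^\dagger)_i=\operatorname{Tr}\Bigl(\Psi_i\,\frac{I}{d}\Bigr)+\sum_j c_j\operatorname{Tr}(\Psi_iU\Psi_jU^\dagger)=(Q\,\Phi_V(\rho))_i ,
\]
because $\operatorname{Tr}\Psi_i=0$. So $\Phi_V(U(O))=Q\,\Phi_V(O)$. Lebesgue measure is invariant under orthogonal maps ($|\det Q|=1$), which gives $\operatorname{Vol}_{\mathrm{HS}}(U(O))=\operatorname{Vol}_{\mathrm{HS}}(O)$. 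Measurability is not a concern: $\Phi_V(O)$ is an open subset of the image of the state space, or at worst the image of an open set under an affine injection, hence Borel.

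The main obstacle I expect is bookkeeping the identity component. Specifically, one must check that conjugation commutes with $\Phi_V$ up to the orthogonal $Q$ without introducing any affine shift. This works precisely because $UIU^\dagger=I$ and the $\Psi_i$ are traceless.
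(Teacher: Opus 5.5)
Your proposal is correct and follows essentially the same route as the paper: reversibility from $U^\dagger U=I$, order preservation from direct images preserving inclusion, and volume preservation from $X\mapsto UXU^\dagger$ being a Hilbert--Schmidt orthogonal map on traceless Hermitian operators that fixes $I/d$, so no translation arises. Your explicit coordinate matrix $Q_{ij}=\operatorname{Tr}(\Psi_iU\Psi_jU^\dagger)$, with $\Phi_V(U\rho U^\dagger)=Q\,\Phi_V(\rho)$, simply makes the paper's Jacobian-of-absolute-value-one argument concrete.
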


\begin{remark}
Part (iii) concerns ambient Hilbert--Schmidt volume. Operational volume relative to a fixed observational subspace $V$ need not be preserved, since $U$ need not leave $V$ invariant: if $O$ is defined by $a_j<\operatorname{Tr}(\rho H_j)<b_j$, then for $\sigma=U\rho U^\dagger$ one has $\rho=U^\dagger\sigma U$ and $\operatorname{Tr}(\rho H_j)=\operatorname{Tr}(\sigma\,UH_jU^\dagger)$, so $U(O)$ is naturally described by the transported observables $UH_jU^\dagger$, i.e.\ $V'=UVU^\dagger$, not $V$ itself. If the observational subspace is transported this way, the operational volume is exactly preserved:
\[
\operatorname{Vol}_{\mathrm{op},V'}(U(O))=\operatorname{Vol}_{\mathrm{op},V}(O),
\qquad V'=UVU^\dagger,
\]
since, writing $\Psi_j'=U\Psi_jU^\dagger$ for an orthonormal basis $\{\Psi_j\}$ of $V$, we have $\operatorname{Tr}(U\rho U^\dagger\Psi_j')=\operatorname{Tr}(\rho\Psi_j)$ for every $\rho$: the observational coordinate images of $U(O)$ in $V'$ and of $O$ in $V$ are literally identical.
\end{remark}

A probability in quantum mechanics is a special case of an expectation value: the probability of outcome \(i\) in a measurement described by a POVM \(\{E_i\}\) is \(\operatorname{Tr}(\rho E_i)\), where each \(E_i\) is a positive operator (\(0\le E_i\le I\)) and \(\sum_i E_i=I\). In IQM, this becomes a probability range, following Section~\ref{sec:reduction}.

\begin{definition}[Probability range]
\label{def:probability}
Let \(O\subset\mathcal{D}(\mathcal{H})\) be a convex weak open set and \(E\) a POVM element (\(0\le E\le I\)). The \emph{probability range} for outcome \(E\) over \(O\) is
\[
\Pr_O(E) := \{\operatorname{Tr}(\rho E) : \rho\in O\}\subseteq[0,1],
\]
an interval by Proposition~\ref{prop:expectation}, with closed enclosure $\overline{\Pr}_O(E) := \bigl[\inf_{\rho\in O}\operatorname{Tr}(\rho E),\ \sup_{\rho\in O}\operatorname{Tr}(\rho E)\bigr]$.

For an experimental parcel $P$ whose state-space outer approximation
$P_{\mathrm{outer}}$, defined in Section~\ref{sec:parcels}, is a bounded
polyhedron with vertices $v_1,\ldots,v_k$, linearity of
$\rho\mapsto\operatorname{Tr}(\rho E)$ implies that its extrema over
$P_{\mathrm{outer}}$ occur at vertices, giving the conservative estimate
\[
\overline{\Pr}_{P_{\mathrm{outer}}}(E)
=
\left[
\min_i\operatorname{Tr}(v_iE),\,
\max_i\operatorname{Tr}(v_iE)
\right]
\supseteq
\overline{\Pr}_P(E).
\]
\end{definition}
\begin{example}[Qubit measurement]
\label{ex:qubit}
Consider a qubit parcel $P$ whose state-space outer approximation
$P_{\mathrm{outer}}$, as defined in Section~\ref{sec:parcels}, is a
bounded polyhedron. Its vertices need not themselves lie in
$\mathcal D(\mathbb C^2)$. For a projective measurement along the
$z$-axis with $E=|0\rangle\langle0|$, let $z_{\min}$ and $z_{\max}$
denote the minimum and maximum $z$-coordinates of the vertices of
$P_{\mathrm{outer}}$. Then
\[
\overline{\Pr}_P(E)
\subseteq
\Bigl[
\tfrac{1+z_{\min}}2,\,
\tfrac{1+z_{\max}}2
\Bigr]\cap[0,1].
\]
Indeed, $\overline{\Pr}_P(E)\subseteq[0,1]$, while
\[
\Bigl[
\tfrac{1+z_{\min}}2,\,
\tfrac{1+z_{\max}}2
\Bigr]
\]
is the outer bound obtained by optimizing over
$P_{\mathrm{outer}}$ and may extend beyond $[0,1]$ because
$P_{\mathrm{outer}}$ need not satisfy positivity and may therefore
extend outside the Bloch ball. The width of the resulting enclosure
quantifies our uncertainty about the outcome probability given our
incomplete knowledge of the state.
\end{example}

%% file: measurement.tex
\section{Measurement in Interval Quantum Mechanics}\label{sec:measurement}

In Interval Quantum Mechanics, measurements are modelled by fuzzy Positive Operator-Valued Measures (POVMs) with finite resolution. The fuzzy construction preserves the open-set character of single parcels and, under suitable conditions, the double-parcel structure. Assume we have a finite-dimensional Hilbert space $\mathcal H$, $\dim\mathcal H=d$, in this section.

Let \(\{\Pi_i\}_{i=1}^m\) be mutually orthogonal projectors defining a projective measurement with $m$ outcomes, satisfying
\[
\sum_{i=1}^m\Pi_i = I.
\]
To account for finite precision, we define a fuzzy POVM \(\{E_i\}_{i=1}^m\) as:
\[
E_i = \eta\Pi_i + \frac{1-\eta}{m}I,
\]
where \(0<\eta<1\) is the fuzziness parameter, representing measurement imperfection. Each \(E_i\) is strictly positive (\(E_i>0\)), so the associated Kraus operators are invertible.

The corresponding Kraus operators are \(M_i=\sqrt{E_i}\) (positive square root). These satisfy:
\[
\sum_{i=1}^m M_i^\dagger M_i = \sum_{i=1}^m E_i = \eta\sum_{i=1}^m\Pi_i+(1-\eta)I = I.
\]
Note that for \(\eta=1\), the projector \(E_i=\Pi_i\) is rank deficient while for \(\eta\in(0,1)\), it has full rank and is thus invertible.

\begin{definition}[Measurement update]
\label{def:update}

Let \(O\) be a single parcel and \((O_1,O_2)\) a double parcel. Since $\dim\mathcal H=d<\infty$, all parcels below lie in the same finite-dimensional trace-one affine hyperplane of $\mathcal T(\mathcal H)_{\mathrm{sa}}$.

For a fuzzy POVM measurement with outcome \(j\) and Kraus operator \(M_j^{(\eta)}\), the Kraus map is
\[
f_j^{(\eta)}(\rho)
=
\frac{M_j^{(\eta)} \rho (M_j^{(\eta)})^\dagger}
     {\operatorname{Tr}(\rho E_j^{(\eta)})},
\]
a homeomorphism of $\mathcal D(\mathcal H)$ for $\eta\in(0,1)$ (Proposition~\ref{prop:singleupdate}).

We define the updated parcels by
\[
O' = f_j^{(\eta)}(O),
\]
for a single parcel, and
\[
O_1' = f_j^{(\eta)}(O_1),
\qquad
O_2' = \operatorname{Conv}\bigl(O_2\cup f_j^{(\eta)}(O_2)\bigr),
\]
for a double parcel: the possible region updates via the observed outcome $j$ alone, and the excluded region retains everything previously excluded together with its image under that same, actually-observed update. We do not additionally incorporate the images $f_i^{(\eta)}(O_1)$, $i\neq j$, of the possible region under outcomes that did \emph{not} occur: these are counterfactual states that would have arisen on unrealized branches, not states excluded by the actually observed history, and including them would break monotonicity of the update (Proposition~\ref{prop:monotone} below).

Both $O_2$ and $f_j^{(\eta)}(O_2)$ are open, so their union is open, and the convex hull of an open set in finite dimension is open; hence $O_2'$ is open, and so is $O_1'$. Thus $O_1'$ and $O_2'$ are open convex sets; the updated single parcel $O'$ is a valid quantum parcel, and $(O_1',O_2')$ forms a double-parcel precisely when $O_1'\cap O_2'=\varnothing$.

\end{definition}

\begin{proposition}[Measurement update for a single-parcel]
\label{prop:singleupdate}
Let \(O\subset\mathcal{D}(\mathcal{H})\) be a single-parcel. Then, for all \(\eta\in(0,1)\), the updated single-parcel
\[
O' = f_j^{(\eta)}(O) = \left\{\frac{M_j^{(\eta)}\rho (M_j^{(\eta)})^\dagger}{\operatorname{Tr}(\rho E_j^{(\eta)})}:\rho\in O\right\}
\]
is a convex weak open set.
\end{proposition}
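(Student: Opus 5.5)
The plan is to show that $f:=f_j^{(\eta)}$ is a homeomorphism of $\mathcal D(\mathcal H)$ onto itself that maps convex sets to convex sets. Then $O'=f(O)$ is the image of a non-empty convex open set under an open, convexity-preserving map, which is what the proposition asserts. Throughout, $\dim\mathcal H=d<\infty$, so the weak topology is the Euclidean topology on the trace-one hyperplane.

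\emph{Invertibility.} For $\eta\in(0,1)$ the element $E_j=\eta\Pi_j+\frac{1-\eta}{m}I$ has eigenvalues $\eta+\frac{1-\eta}{m}$ and $\frac{1-\eta}{m}$, both strictly positive. Hence $M_j=\sqrt{E_j}$ is positive and invertible, and $\operatorname{Tr}(\rho E_j)\ge\frac{1-\eta}{m}>0$ for every $\rho\in\mathcal D(\mathcal H)$, so the denominator never vanishes. I will check that
\[
g(\sigma)=\frac{M_j^{-1}\sigma M_j^{-1}}{\operatorname{Tr}(\sigma E_j^{-1})}
\]
is a two-sided inverse of $f$ on $\mathcal D(\mathcal H)$. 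First, $g(\sigma)\ge0$ and has trace one, with $\operatorname{Tr}(\sigma E_j^{-1})>0$ because $E_j^{-1}>0$. Second, both compositions $f\circ g$ and $g\circ f$ reduce to the identity: the positive scalar normalisations cancel because $M_jM_j^{-1}=I$. Both $f$ and $g$ are ratios of continuous affine maps with nonvanishing denominators, so they are continuous, and $f$ is a homeomorphism. Consequently $f(O)$ is open in $\mathcal D(\mathcal H)$, i.e.\ weak open, and it is non-empty because $O$ is.

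\emph{Convexity.} This is the step needing care, since $f$ is not affine: the denominator varies with $\rho$. The key observation is that $f$ is a projective (linear-fractional) map: it is the composition of the linear map $L(\rho)=M_j\rho M_j$ with normalisation by the trace. Note that $\operatorname{Tr}L(\rho)=\operatorname{Tr}(\rho E_j)$.

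Take $\sigma_1=f(\rho_1)$ and $\sigma_2=f(\rho_2)$ with $\rho_1,\rho_2\in O$, and $t\in[0,1]$. Set $c_i=\operatorname{Tr}(\rho_iE_j)>0$. I will find $s\in[0,1]$ with
\[
f\bigl(s\rho_1+(1-s)\rho_2\bigr)=t\sigma_1+(1-t)\sigma_2 .
\]
Expanding the left side gives $\dfrac{s c_1\sigma_1+(1-s)c_2\sigma_2}{s c_1+(1-s)c_2}$. Equating coefficients requires
\[
\frac{s c_1}{s c_1+(1-s)c_2}=t,
\]
which is solved by
\[
s=\frac{t/c_1}{t/c_1+(1-t)/c_2}\in[0,1].
\]
This $s$ varies monotonically and continuously in $t$. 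Since $O$ is convex, $s\rho_1+(1-s)\rho_2\in O$, so $t\sigma_1+(1-t)\sigma_2\in f(O)$. Hence $f(O)$ is convex.

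The main obstacle is exactly this non-affine denominator. The reparametrisation $t\leftrightarrow s$ resolves it; it is the standard fact that linear-fractional maps with positive denominator preserve segments. Everything else is routine.
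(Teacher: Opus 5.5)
Your proof is correct and follows essentially the same route as the paper: the explicit inverse $\sigma\mapsto M_j^{-1}\sigma M_j^{-1}/\operatorname{Tr}(\sigma E_j^{-1})$ shows that $f_j^{(\eta)}$ is a homeomorphism of $\mathcal D(\mathcal H)$, and the linear-fractional identity $f(s\rho_1+(1-s)\rho_2)=\frac{sc_1\sigma_1+(1-s)c_2\sigma_2}{sc_1+(1-s)c_2}$ gives convexity. Your explicit formula for $s$ in terms of $t$ is slightly more complete than the paper's write-up, which only shows that $f$ maps the segment $[\rho_1,\rho_2]$ into $[f(\rho_1),f(\rho_2)]$ and leaves implicit that every point of the latter segment is actually attained.
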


Thus, single-parcels are preserved under measurement.

\begin{proposition}[Monotonicity of the update formulas]
\label{prop:monotone}
For any measurement outcome \(j\) and any $\eta\in(0,1)$, the update formulas are monotone with respect to set inclusion: if $U_1\subseteq O_1$ and $O_2\subseteq U_2$, then
\[
U_1'\subseteq O_1', \qquad O_2'\subseteq U_2'.
\]
In particular, if \((O_1,O_2)\sqsubseteq (U_1,U_2)\) and both $(O_1',O_2')$ and $(U_1',U_2')$ are themselves double-parcels, then \((O_1',O_2')\sqsubseteq(U_1',U_2')\).
\end{proposition}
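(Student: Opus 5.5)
The statement is a pure set-inclusion claim, so the plan is to check the two update formulas of Definition~\ref{def:update} separately. Then I deduce the order statement from the definition of $\sqsubseteq$ on double-parcels. Throughout, $j$ and $\eta\in(0,1)$ are fixed, and I write $f=f_j^{(\eta)}$.

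\emph{Possible region.} The update is $O_1'=f(O_1)$, i.e.\ a direct image under a fixed map. Direct images are monotone: if $U_1\subseteq O_1$, then every $f(\rho)$ with $\rho\in U_1$ also has $\rho\in O_1$, so $U_1'=f(U_1)\subseteq f(O_1)=O_1'$. Neither convexity nor the homeomorphism property of Proposition~\ref{prop:singleupdate} is needed here. Only well-definedness is used, which holds because $E_j^{(\eta)}>0$ gives $\operatorname{Tr}(\rho E_j^{(\eta)})\ge(1-\eta)/m>0$.

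\emph{Excluded region.} The update is $O_2'=\operatorname{Conv}(O_2\cup f(O_2))$. Suppose $O_2\subseteq U_2$. By the image monotonicity just shown, $f(O_2)\subseteq f(U_2)$, and hence $O_2\cup f(O_2)\subseteq U_2\cup f(U_2)$. The convex hull is monotone for set inclusion, since the hull of the larger set is a convex set containing the smaller one and therefore contains its hull. This gives $O_2'\subseteq U_2'$.

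\emph{Order statement.} By definition, $(O_1,O_2)\sqsubseteq(U_1,U_2)$ means $U_1\subseteq O_1$ and $O_2\subseteq U_2$. These are exactly the hypotheses of the first part, so $U_1'\subseteq O_1'$ and $O_2'\subseteq U_2'$. When both updated pairs are double-parcels, this is precisely $(O_1',O_2')\sqsubseteq(U_1',U_2')$.

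There is no genuine obstacle. The only points requiring care are the direction of the reverse inclusion in the order, and the remark that the double-parcel hypothesis (disjointness) is needed only so that $\sqsubseteq$ is being applied to genuine double-parcels, not for the inclusions themselves.
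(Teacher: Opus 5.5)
Your proposal is correct and follows the paper's proof essentially step for step: monotonicity of direct images under $f_j^{(\eta)}$ gives the possible-region inclusion, and the same observation combined with monotonicity of union and convex hull gives the excluded-region inclusion. The order statement then follows directly from the definition of $\sqsubseteq$ on double-parcels.
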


Thus the updated pair is a double-parcel precisely when
\[
f_j^{(\eta)}(O_1)\ \cap\ \operatorname{Conv}\bigl(O_2\cup f_j^{(\eta)}(O_2)\bigr)=\varnothing.
\]
The following theorem gives a useful sufficient condition for this disjointness to hold for sufficiently sharp measurements.

\begin{theorem}[A sufficient condition for preservation of double-parcels]
\label{thm:double-preservation}
Suppose there exist a Hermitian operator \(H=\Pi_jH\Pi_j\) and constants \(c_1>c_2\) such that, for all \(\rho_1\in\overline{O_1}\) and \(\rho_2\in\overline{O_2}\),
\[
\operatorname{Tr}(\rho_1H) > c_1\operatorname{Tr}(\rho_1\Pi_j),\qquad
\operatorname{Tr}(\rho_2H) < c_2\operatorname{Tr}(\rho_2\Pi_j).
\]
\emph{(Separation.)} Then this hypothesis alone already forces $\operatorname{Tr}(\rho\Pi_j)>0$ for every $\rho\in\overline{O_1}\cup\overline{O_2}$, hence (by compactness) a uniform bound $\delta:=\min_{\overline{O_1}\cup\overline{O_2}}\operatorname{Tr}(\rho\Pi_j)>0$; no separate positivity hypothesis is needed. There exists $\eta_0<1$ such that, for every $\eta\in(\eta_0,1)$,
\[
f_j^{(\eta)}(O_1)\ \cap\ \operatorname{Conv}\bigl(O_2\cup f_j^{(\eta)}(O_2)\bigr)=\varnothing,
\]
so $(O_1'(\eta),O_2'(\eta))$ is a double-parcel.
\end{theorem}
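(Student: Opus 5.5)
The plan is to replace the linear-fractional separation by two linear half-space conditions and to show that both are preserved exactly by the Kraus map $f_j^{(\eta)}$. This holds because $M_j^{(\eta)}$ acts as a scalar on the range of $\Pi_j$. If that works, disjointness holds for every $\eta\in(0,1)$, and any $\eta_0<1$ can be taken.

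\emph{Step 1 (positivity is forced).} Suppose $\rho\in\overline{O_1}\cup\overline{O_2}$ and $\operatorname{Tr}(\rho\Pi_j)=0$. Since $\rho\ge0$, we get $\Pi_j\rho\Pi_j=0$. Because $H=\Pi_jH\Pi_j$, this gives $\operatorname{Tr}(\rho H)=\operatorname{Tr}(\Pi_j\rho\Pi_j H)=0$. The separation hypothesis would then read $0>0$ or $0<0$, a contradiction. Hence $\operatorname{Tr}(\rho\Pi_j)>0$ on the union of the closures. In finite dimensions these closures are compact (Section~\ref{sec:parcels}), and $\rho\mapsto\operatorname{Tr}(\rho\Pi_j)$ is continuous, so its minimum $\delta$ is attained and positive.

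\emph{Step 2 (invariance of the half-spaces under $f_j^{(\eta)}$).} Since $E_j=\eta\Pi_j+\frac{1-\eta}{m}I$, we have
\[
M_j^{(\eta)}=\alpha\,\Pi_j+\beta\,(I-\Pi_j),
\qquad \alpha=\sqrt{\eta+\tfrac{1-\eta}{m}},\quad \beta=\sqrt{\tfrac{1-\eta}{m}}.
\]
In particular $M_j$ commutes with $\Pi_j$, and $\Pi_jM_j=\alpha\Pi_j$. For any $\rho$ this yields
\[
\Pi_j f_j^{(\eta)}(\rho)\Pi_j=\frac{\alpha^2\,\Pi_j\rho\Pi_j}{\operatorname{Tr}(\rho E_j)}.
\]
Consequently $\operatorname{Tr}(f_j(\rho)K)=\frac{\alpha^2}{\operatorname{Tr}(\rho E_j)}\operatorname{Tr}(\rho K)$ for every $K=\Pi_jK\Pi_j$. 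Apply this to $K=H-c\,\Pi_j$. The sign of $\operatorname{Tr}(\rho(H-c\Pi_j))$ is then unchanged by $f_j^{(\eta)}$, since the prefactor is strictly positive.

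\emph{Step 3 (separation by a hyperplane).} Set
\[
S_2=\{\rho\in\mathcal D(\mathcal H):\operatorname{Tr}(\rho(H-c_2\Pi_j))<0\}.
\]
This is convex, being the intersection of $\mathcal D(\mathcal H)$ with an open half-space. By hypothesis $O_2\subseteq S_2$, and by Step~2 also $f_j^{(\eta)}(O_2)\subseteq S_2$. Hence $\operatorname{Conv}(O_2\cup f_j^{(\eta)}(O_2))\subseteq S_2$.

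Now take $\rho\in O_1$. Step~2 gives $\operatorname{Tr}(f_j(\rho)(H-c_1\Pi_j))>0$. Step~2 together with Step~1 gives $\operatorname{Tr}(f_j(\rho)\Pi_j)=\alpha^2\operatorname{Tr}(\rho\Pi_j)/\operatorname{Tr}(\rho E_j)>0$. Since $c_1>c_2$, it follows that
\[
\operatorname{Tr}(f_j(\rho)H)>c_1\operatorname{Tr}(f_j(\rho)\Pi_j)\ge c_2\operatorname{Tr}(f_j(\rho)\Pi_j),
\]
so $f_j(\rho)\notin S_2$. Therefore $f_j^{(\eta)}(O_1)\cap\operatorname{Conv}(O_2\cup f_j^{(\eta)}(O_2))=\varnothing$. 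Combined with the openness and convexity already noted after Definition~\ref{def:update}, the updated pair $(O_1'(\eta),O_2'(\eta))$ is a double-parcel.

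I do not expect a serious obstacle. The step to check carefully is Step~2, the exact commutation $\Pi_jM_j=\alpha\Pi_j$ together with the compression identity for $H=\Pi_jH\Pi_j$. Everything else is linear. If that identity were to fail, the fallback is a perturbative argument: as $\eta\to1$, the map $f_j^{(\eta)}$ tends uniformly on the compact closures to the projective update. One would then use the margin coming from compactness and $\delta>0$ to keep the ratio $\operatorname{Tr}(\cdot H)/\operatorname{Tr}(\cdot\Pi_j)$ separated, which is where a genuine threshold $\eta_0$ would enter.
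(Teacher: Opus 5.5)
Your proof is correct, but it takes a genuinely different route from the paper's, and it proves more. The paper's Step~1 matches yours. After that, the paper argues perturbatively:
\begin{itemize}
\item it shifts $H$ by a multiple of $\Pi_j$ so that $c_2>0$;
\item it passes to the sharp L\"uders map $f_j^{(1)}(\rho)=\Pi_j\rho\Pi_j/\operatorname{Tr}(\rho\Pi_j)$;
\item it shows that the compact sets $f_j^{(1)}(\overline{O_1})$ and $\operatorname{Conv}\bigl(\overline{O_2}\cup f_j^{(1)}(\overline{O_2})\bigr)$ are strictly separated by $\operatorname{Tr}(\cdot\,H)$, hence lie at positive distance;
\item it then uses uniform convergence $f_j^{(\eta)}\to f_j^{(1)}$ on the closures (this is where $\delta>0$ enters), together with Hausdorff continuity of convex hulls, to keep a gap for $\eta$ near $1$.
\end{itemize}
That last step is the only source of $\eta_0$.

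Your key identity is that $M_j^{(\eta)}$ commutes with $\Pi_j$ and acts on its range as the scalar $\alpha$, so that $\operatorname{Tr}\bigl(f_j^{(\eta)}(\rho)K\bigr)=\alpha^2\operatorname{Tr}(\rho K)/\operatorname{Tr}(\rho E_j^{(\eta)})$ for every $K=\Pi_jK\Pi_j$. This makes the half-spaces $\{\operatorname{Tr}(\rho(H-c_i\Pi_j))\gtrless 0\}$ exactly invariant, and the convex half-space containing $O_2$ then absorbs the hull. What this buys:
\begin{itemize}
\item Disjointness holds for every $\eta\in(0,1)$, so one may take $\eta_0=0$; the ``sufficiently sharp'' qualifier is superfluous for this fuzzy POVM.
\item The hypothesis is needed only on $O_1,O_2$, not on their closures.
\item No compactness, no $\delta$, no shift of $H$ and no Hausdorff estimates are used.
\item Even Step~1 is dispensable for the disjointness, since $\operatorname{Tr}\bigl(f_j^{(\eta)}(\rho)\Pi_j\bigr)\ge 0$ already gives $c_1t\ge c_2t$.
\end{itemize}

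What the paper's route buys in exchange is robustness to the noise model. It uses only that $f_j^{(\eta)}$ converges uniformly to the L\"uders map, and that the L\"uders map sees $\rho$ only through the ratio $\operatorname{Tr}(\rho H)/\operatorname{Tr}(\rho\Pi_j)$. It would therefore survive for Kraus operators that are not scalar on the range of $\Pi_j$, for example $E_j^{(\eta)}=\eta\Pi_j+(1-\eta)N$ with $N$ not commuting with $\Pi_j$; there your exact invariance breaks down. Your fallback paragraph is unnecessary, since Step~2 holds as stated.
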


The ratio-form condition of Theorem~\ref{thm:double-preservation} is sufficient but not necessary: it is neither required for the double-parcel structure to survive measurement, nor is its failure enough to conclude that it does not. The following two elementary qubit examples establish both directions at once, and both rest on a single explicit computation.

\begin{lemma}[M\"obius map for the qubit fuzzy update]
\label{lem:mobius}
Let $\mathcal H=\mathbb C^2$, $m=2$, $\Pi_j=|0\rangle\langle0|$, and $p(\rho):=\operatorname{Tr}(\rho\Pi_j)$. For the two-outcome fuzzy update of Section~\ref{sec:measurement}, $p(f_j^{(\eta)}(\rho))$ depends only on $p(\rho)$, via
\[
F_\eta(p)=\frac{(1+\eta)p}{1-\eta+2\eta p},
\]
which is strictly increasing on $(0,1)$ with $F_\eta(p)>p$ there. Moreover $f_j^{(\eta)}$ maps every $p$-slab $\{\rho: a<p(\rho)<b\}$ bijectively onto $\{\rho: F_\eta(a)<p(\rho)<F_\eta(b)\}$.
\end{lemma}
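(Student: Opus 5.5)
We compute $f_j^{(\eta)}$ explicitly for the qubit with $m=2$. Write $\Pi_0=|0\rangle\langle0|$ and $\Pi_1=|1\rangle\langle1|$, with $j=0$. Then
\[
E_j=\eta\Pi_0+\tfrac{1-\eta}{2}I=\tfrac{1+\eta}{2}\Pi_0+\tfrac{1-\eta}{2}\Pi_1 .
\]
This is diagonal in the computational basis, so its positive square root is
\[
M_j=\sqrt{\tfrac{1+\eta}{2}}\,\Pi_0+\sqrt{\tfrac{1-\eta}{2}}\,\Pi_1 .
\]
For a density matrix $\rho$ with diagonal entries $p$ and $1-p$, where $p=\operatorname{Tr}(\rho\Pi_0)$, we have
\[
\operatorname{Tr}(\rho E_j)=\tfrac{1+\eta}{2}p+\tfrac{1-\eta}{2}(1-p)=\tfrac{1-\eta+2\eta p}{2}.
\]
Conjugation by the diagonal $M_j$ gives $(0,0)$ entry $\tfrac{1+\eta}{2}p$. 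Dividing, we get
\[
p(f_j^{(\eta)}(\rho))=\frac{(1+\eta)p}{1-\eta+2\eta p}=F_\eta(p).
\]
This depends on $\rho$ only through $p$, which proves the first claim.

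For monotonicity, $F_\eta$ is a Möbius map $p\mapsto (ap)/(c+dp)$ with $a=1+\eta$, $c=1-\eta$, $d=2\eta$. Its derivative is
\[
F_\eta'(p)=\frac{(1+\eta)(1-\eta)}{(1-\eta+2\eta p)^2}>0 .
\]
The denominator is positive on $[0,1]$, so $F_\eta$ is strictly increasing there. For $F_\eta(p)>p$ on $(0,1)$, divide by $p>0$: the claim becomes $1+\eta>1-\eta+2\eta p$, i.e.\ $2\eta(1-p)>0$, which holds for $\eta>0$ and $p<1$. Note also that $F_\eta(0)=0$ and $F_\eta(1)=1$, so $F_\eta$ is an increasing bijection of $[0,1]$ onto itself.

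For the slab statement, inclusion in one direction is immediate. If $a<p(\rho)<b$, then monotonicity gives $F_\eta(a)<p(f_j(\rho))<F_\eta(b)$. Here the slab endpoints are taken in $[0,1]$, or are extended trivially outside it.

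For surjectivity onto the target slab, use Proposition~\ref{prop:singleupdate}, which says $f_j^{(\eta)}$ is a bijection (homeomorphism) of $\mathcal D(\mathcal H)$. We can also invert it explicitly: $M_j$ is invertible, and
\[
\sigma\mapsto \frac{M_j^{-1}\sigma M_j^{-1}}{\operatorname{Tr}(M_j^{-1}\sigma M_j^{-1})}
\]
is the inverse map. Given $\sigma$ in the target slab, let $\rho=f_j^{-1}(\sigma)$. Then $F_\eta(p(\rho))=p(\sigma)$. Since $F_\eta$ is strictly increasing, $F_\eta(a)<p(\sigma)<F_\eta(b)$ forces $a<p(\rho)<b$. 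So $\rho$ lies in the source slab.

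Injectivity follows from $f_j$ being a bijection on all of $\mathcal D(\mathcal H)$, so the restriction is a bijection of slabs.

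The only point needing care is confirming that $f_j^{(\eta)}$ is surjective onto $\mathcal D(\mathcal H)$. The explicit inverse above settles this: $M_j^{-1}\sigma M_j^{-1}$ is positive and nonzero, so its normalization is a density matrix. Everything else is direct computation.
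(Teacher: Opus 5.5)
Your proof is correct. The derivation of $F_\eta$, its monotonicity and the inequality $F_\eta(p)>p$ is essentially the paper's calculation. You read off the $(0,0)$ entry of $M_j\rho M_j$ directly, whereas the paper substitutes $z=2p-1$ into the Bloch update $z'=(z+\eta)/(1+\eta z)$.

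The genuine difference is in the slab statement. The paper argues fiberwise in the Bloch ball. The update rescales $(x,y)$ by the common factor $\sqrt{1-\eta^2}/(1+\eta z)$, so it carries the disk of states at height $z$ onto the disk at height $z'$. The slab, being a union of such disks, therefore goes onto the image slab. You instead combine three ingredients: global bijectivity of $f_j^{(\eta)}$ on $\mathcal D(\mathcal H)$ (via the explicit inverse built from $M_j^{-1}$), the intertwining relation $p\circ f_j^{(\eta)}=F_\eta\circ p$, and injectivity of $F_\eta$. Together these give $f_j^{(\eta)}\bigl(p^{-1}(S)\bigr)=p^{-1}\bigl(F_\eta(S)\bigr)$ for every $S$.

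Your route buys two things. First, it never needs the radius identity $1-z'^2=(1-\eta^2)(1-z^2)/(1+\eta z)^2$, on which the paper's disk-to-disk claim tacitly relies. Second, it is not tied to qubit geometry. For any $d$, $m$ and rank of $\Pi_j$, $\operatorname{Tr}(f_j^{(\eta)}(\rho)\Pi_j)$ is a strictly increasing function of $\operatorname{Tr}(\rho\Pi_j)$ alone, and $f_j^{(\eta)}$ is invertible for $\eta<1$. So your argument gives the slab statement verbatim in the general finite-dimensional setting. The paper's fiberwise argument, in exchange, shows concretely how each level set of $p$ is mapped: a uniform contraction of the transverse coherences. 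That is exactly the picture exploited later in the double-slit and Schr\"odinger-cat sections.
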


\begin{example}[Preservation without separation]
\label{ex:preserved-no-separation}
Let $0<a<b<c<d<1$ and set
\[
O_2=\{\rho: a<p(\rho)<b\},\qquad O_1=\{\rho: c<p(\rho)<d\}.
\]
Since $\Pi_j$ has rank one, every $H=\Pi_jH\Pi_j$ is a real scalar multiple of $\Pi_j$, so $\operatorname{Tr}(\rho H)/\operatorname{Tr}(\rho\Pi_j)$ is constant wherever defined, and no $c_1>c_2$ can satisfy Theorem~\ref{thm:double-preservation}'s separation condition. Nonetheless, by Lemma~\ref{lem:mobius}, $f_j^{(\eta)}(O_1)=\{F_\eta(c)<p<F_\eta(d)\}$, and $O_2'(\eta)=\{a<p<F_\eta(b)\}$ — whether or not the two slabs $O_2$ and $f_j^{(\eta)}(O_2)$ themselves overlap, their convex hull is the full $p$-slab spanning the combined range, since $p$ is affine and convex combinations of points at the two extreme $z$-levels reach every intermediate $z$ and, at each such $z$, every point of the corresponding Bloch disk. Since $b<c$ gives $F_\eta(b)<F_\eta(c)$ for every $\eta\in(0,1)$, these $p$-ranges — and hence $O_1'(\eta)$ and $O_2'(\eta)$ themselves — are disjoint for \emph{every} $\eta\in(0,1)$: $(O_1'(\eta),O_2'(\eta))$ is a double-parcel throughout, despite the failure of the ratio-form condition.
\end{example}

\begin{example}[Failure without separation]
\label{ex:failure-no-separation}
With the same $\Pi_j$, $p$, and $0<a<b<c<d<1$, now set
\[
O_1=\{\rho: a<p(\rho)<b\},\qquad O_2=\{\rho: c<p(\rho)<d\}.
\]
As before, no ratio-form separation is possible. By Lemma~\ref{lem:mobius}, $F_\eta(a)\to1$ as $\eta\to1^-$, so $F_\eta(a)>c$ for $\eta$ sufficiently close to $1$. Since $a<b<d$ and $F_\eta$ is strictly increasing, $F_\eta(a)<F_\eta(b)<F_\eta(d)$; combined with $F_\eta(a)>c$, the nonempty interval $(F_\eta(a),F_\eta(b))$ satisfies
\[
(F_\eta(a),F_\eta(b))\ \subset\ (c,F_\eta(d)),
\]
so $p(O_1'(\eta))=(F_\eta(a),F_\eta(b))$ is entirely contained in $p(O_2'(\eta))=(c,F_\eta(d))$ for such $\eta$. Consequently $O_1'(\eta)\cap O_2'(\eta)\neq\varnothing$: $(O_1'(\eta),O_2'(\eta))$ fails to be a double-parcel for $\eta$ sufficiently close to $1$. Thus failure of the ratio-form condition alone does not determine whether the double-parcel structure is preserved: Examples~\ref{ex:preserved-no-separation} and~\ref{ex:failure-no-separation} exhibit, respectively, preservation and failure under the same absence of ratio-form separation.
\end{example}

\subsection{Volume contraction under measurements}

Consider a qubit with Hilbert space \(\mathcal{H}=\mathbb{C}^2\) and a two-outcome ($m=2$) fuzzy POVM with $\Pi_j=|0\rangle\langle0|$ (the projector onto the state \(|0\rangle\)). For a finite-precision measurement with sharpness parameter \(\eta\in(0,1]\), the POVM element is
\[
E_j^{(\eta)} = \eta\Pi_j + \frac{1-\eta}{2}I.
\]
The corresponding Kraus operator is \(M_j^{(\eta)}=\sqrt{E_j^{(\eta)}}\) and the update map for outcome \(j\) is
\[
f_j^{(\eta)}(\rho) = \frac{M_j^{(\eta)}\rho M_j^{(\eta)\dagger}}{\operatorname{Tr}(\rho E_j^{(\eta)})}.
\]
In the Bloch representation \(\rho=\frac12(I+x\sigma_x+y\sigma_y+z\sigma_z)\) the map reads
\[
x' = \frac{\sqrt{1-\eta^2}\,x}{1+\eta z},\qquad
y' = \frac{\sqrt{1-\eta^2}\,y}{1+\eta z},\qquad
z' = \frac{z+\eta}{1+\eta z}.
\]
Its Jacobian determinant (with respect to the volume on \(\mathbb{R}^3\)) is
\begin{equation}\label{jacob}
J^{(\eta)}(z) = \frac{(1-\eta^2)^2}{(1+\eta z)^4}.
\end{equation}

\begin{theorem}[Volume contraction for a qubit]
\label{thm:volcontraction_qubit}
Let \(P\subset\mathcal{D}(\mathbb{C}^2)\) be an open set of qubit states and assume that
\[
\operatorname{Tr}(\rho\Pi_j)\ge c>0 \qquad\text{for every }\rho\in P.
\]
Then there exists \(\eta_0\in(0,1)\) such that for every \(\eta\in(\eta_0,1)\),
\[
\operatorname{Vol}_{\mathrm{HS}}(f_j^{(\eta)}(P))<\operatorname{Vol}_{\mathrm{HS}}(P).
\]
If, moreover, \(c\ge\frac12\), then the inequality holds for all \(\eta\in(0,1)\).
\end{theorem}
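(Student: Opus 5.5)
The plan is to use the change-of-variables formula in the Bloch coordinates. By the isometry remark in Section~\ref{sec:parcels}, $\operatorname{Vol}_{\mathrm{HS}}$ is a fixed constant multiple of the Lebesgue volume of the Bloch-coordinate image, so the constant cancels in the comparison. For $\eta\in(0,1)$ the map $f_j^{(\eta)}$ is a homeomorphism of $\mathcal D(\mathbb C^2)$ by Proposition~\ref{prop:singleupdate}, and in coordinates it is a smooth injection on the Bloch ball (note $1+\eta z>0$). Hence
\[
\operatorname{Vol}_{\mathrm{HS}}(f_j^{(\eta)}(P))=\int_{P}J^{(\eta)}(z)\,d\lambda_3,
\]
with $J^{(\eta)}$ the Jacobian in \eqref{jacob}, and $P$ identified with its Bloch image. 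It therefore suffices to show that $J^{(\eta)}(z)<1$ on $P$ uniformly, or at least on a subset of positive measure with $J^{(\eta)}\le1$ everywhere else. Since $P$ is open and non-empty it has positive volume; I assume non-emptiness, and if $P$ is empty both sides are $0$ and the statement is vacuous.

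Next, translate the hypothesis. Since $\operatorname{Tr}(\rho\Pi_j)=(1+z)/2$, the condition $\operatorname{Tr}(\rho\Pi_j)\ge c$ means $z\ge 2c-1=:z_0>-1$ on $P$. Because $J^{(\eta)}$ is decreasing in $z$ for $\eta>0$, we get $J^{(\eta)}(z)\le J^{(\eta)}(z_0)=(1-\eta^2)^2/(1+\eta z_0)^4$ throughout $P$. As $\eta\to1^-$ the numerator tends to $0$ while the denominator tends to $(1+z_0)^4=(2c)^4>0$. So there is $\eta_0<1$ with $J^{(\eta)}(z_0)<1$ for all $\eta\in(\eta_0,1)$. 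To get a single threshold that works for every larger $\eta$, one can either note that $g(\eta)=(1-\eta^2)/(1+\eta z_0)^2$ is eventually monotone or simply take $\eta_0$ explicitly. Concretely, $g(\eta)<1$ is equivalent to $1-\eta^2<(1+\eta z_0)^2$, which is the quadratic inequality $(1+z_0^2)\eta^2+2z_0\eta>0$, i.e.\ $\eta>-2z_0/(1+z_0^2)$ when $z_0<0$. So take $\eta_0=\max\bigl(0,-2z_0/(1+z_0^2)\bigr)$, which is $<1$ because $(1+z_0)^2>0$. This gives the strict bound $J^{(\eta)}<1$ uniformly on $P$, and integrating yields $\operatorname{Vol}_{\mathrm{HS}}(f_j^{(\eta)}(P))<\operatorname{Vol}_{\mathrm{HS}}(P)$.

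For the case $c\ge\tfrac12$ we have $z_0\ge0$. The same inequality $(1+z_0^2)\eta^2+2z_0\eta>0$ then holds for every $\eta\in(0,1)$, so $J^{(\eta)}<1$ on $P$ for all $\eta$ and the contraction holds throughout $(0,1)$. Even for $z_0=0$ the inequality $(1-\eta^2)^2<1$ is strict.

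The main obstacle is not the estimate but justifying the change of variables. We need the coordinate expression to be a $C^1$ diffeomorphism from the open Bloch image of $P$ onto its image, with the stated Jacobian. I would verify injectivity via the explicit inverse, obtained by replacing $\eta$ with $-\eta$ in the Bloch formulas; this follows from the invertibility of $M_j$. I would take the determinant formula \eqref{jacob} as given. Note also that $P$ lies in the interior of the ball only relatively, but openness in $\mathcal D$ suffices: the boundary sphere has Lebesgue measure zero, so it does not affect the integral.
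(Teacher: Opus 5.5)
Your proposal is correct and follows essentially the same route as the paper. Both arguments take the uniform bound $J^{(\eta)}(z)\le(1-\eta^2)^2/(1+\eta(2c-1))^4$ from $z\ge 2c-1$, use the identical threshold $\eta_0=2(1-2c)/\bigl(1+(1-2c)^2\bigr)$ when $c<\tfrac12$ (and $J^{(\eta)}\le(1-\eta^2)^2<1$ when $c\ge\tfrac12$), and then change variables, a step you justify more carefully than the paper does.

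One small correction: for empty $P$ the strict inequality would read $0<0$ and fail, so the statement is not vacuous there; non-emptiness is genuinely required, and it holds automatically for parcels.
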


\subsubsection*{General finite-dimensional case}
Now let \(\mathcal{H}\) be a $d$-dimensional Hilbert space and consider an $m$-outcome fuzzy POVM as in Section~\ref{sec:measurement} (the number of outcomes $m$ need not equal $d$). Let \(\Pi_j\) be a projector with rank \(r<d\) (the case \(r=d\) is trivial because then the measurement does not change the state). For \(\eta\in(0,1]\) define the POVM element:
\[
E_j^{(\eta)} = \eta\Pi_j + \frac{1-\eta}{m}I,
\]
and the corresponding update map
\[
f_j^{(\eta)}(\rho) = \frac{M_j^{(\eta)}\rho M_j^{(\eta)\dagger}}{\operatorname{Tr}(\rho E_j^{(\eta)})},\qquad M_j^{(\eta)}=\sqrt{E_j^{(\eta)}}.
\]

\begin{theorem}[Volume contraction, general finite-dimensional case]
\label{thm:volcontraction_nqubits}
Let \(P\subset\mathcal{D}(\mathcal{H})\) be a weak open subset of density matrices and assume
\[
\operatorname{Tr}(\rho\Pi_j)\ge c>0 \qquad\text{for every }\rho\in P.
\]
Then there exists \(\eta_0\in(0,1)\) such that for every \(\eta\in(\eta_0,1)\),
\[
\operatorname{Vol}_{\mathrm{HS}}(f_j^{(\eta)}(P))<\operatorname{Vol}_{\mathrm{HS}}(P).
\]
\end{theorem}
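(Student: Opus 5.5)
The plan is to compute the Jacobian of $f_j^{(\eta)}$ on the trace-one hyperplane exactly, in the spirit of the qubit formula \eqref{jacob}. Then I would show it is uniformly smaller than $1$ on $P$ once $\eta$ is close to $1$, and conclude by the change-of-variables formula. I assume $P\neq\emptyset$; otherwise there is nothing to prove.

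\textbf{Step 1: block form of the update.} Write $Q=I-\Pi_j$. The Kraus operator is diagonal in the decomposition $\mathcal H=\operatorname{ran}\Pi_j\oplus\operatorname{ran}Q$:
\[
M_j^{(\eta)}=\alpha\Pi_j+\beta Q,\qquad \alpha^2=\eta+\tfrac{1-\eta}{m},\qquad \beta^2=\tfrac{1-\eta}{m}.
\]
Hence the linear map $L(X)=M_jXM_j$ on the real space $\mathcal T(\mathcal H)_{\mathrm{sa}}$ (of dimension $n=d^2$) acts by the scalars $\alpha^2$, $\beta^2$ and $\alpha\beta$ on the blocks $\Pi_jX\Pi_j$, $QXQ$ and $\Pi_jXQ+QX\Pi_j$. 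These blocks have real dimensions $r^2$, $(d-r)^2$ and $2r(d-r)$. Therefore
\[
\det L=\alpha^{2rd}\beta^{2d(d-r)}.
\]
The normalisation is the linear functional
\[
\ell(X)=\operatorname{Tr}(XE_j^{(\eta)})=\beta^2\operatorname{Tr}X+\eta\operatorname{Tr}(X\Pi_j),
\]
and $\operatorname{Tr}(LX)=\ell(X)$. So $f_j^{(\eta)}(\rho)=L\rho/\ell(\rho)$ is a projective (linear-fractional) map of the affine hyperplane $\{\operatorname{Tr}X=1\}$ into itself.

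\textbf{Step 2: Jacobian of a projective map.} This is the step I expect to need the most care. I would prove that the Jacobian of $G(x)=Lx/\ell(x)$ on the $(n-1)$-dimensional hyperplane, with respect to Hilbert--Schmidt Lebesgue measure, is
\[
J(x)=\frac{\det L}{\ell(x)^{n}}.
\]
I would use cone coordinates $(t,x)\mapsto tx$ for $t>0$ and $\operatorname{Tr}x=1$. In these coordinates the volume element on the open cone is, up to a constant, $t^{n-1}\,dt\,d\mathrm{vol}(x)$. Since $L(tx)=t\ell(x)\,G(x)$, the map $L$ becomes $(t,x)\mapsto(s,y)=(t\ell(x),G(x))$. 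This map is triangular, with Jacobian $\ell(x)J(x)$. Equating the two expressions for how $L$ scales volume gives
\[
\det L\cdot t^{n-1}=(t\ell)^{n-1}\,\ell\,J,
\]
which is the claimed formula. As a sanity check, for $d=2$, $r=1$, $m=2$ this reproduces \eqref{jacob}: here $\det L=\alpha^4\beta^4=(1-\eta^2)^2/16$ and $\ell=(1+\eta z)/2$.

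\textbf{Step 3: a uniform bound.} On $P$ we have $\ell(\rho)=\beta^2+\eta\operatorname{Tr}(\rho\Pi_j)\ge\beta^2+\eta c\ge\eta c$. Hence
\[
\sup_{\rho\in P}J^{(\eta)}(\rho)\le\frac{\alpha^{2rd}\beta^{2d(d-r)}}{(\eta c)^{d^2}}.
\]
As $\eta\to1^-$ we have $\alpha^2\to1$ and $\beta\to0$. Since $r<d$, the exponent $2d(d-r)$ is positive, so the right-hand side tends to $0$. We may therefore pick $\eta_0$ so that it is at most some $\kappa<1$ for all $\eta\in(\eta_0,1)$.

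\textbf{Step 4: conclusion.} By Proposition~\ref{prop:singleupdate} and Definition~\ref{def:update}, $f_j^{(\eta)}$ is an injective smooth map of $\mathcal D(\mathcal H)$ for $\eta\in(0,1)$. Passing through the isometric embedding $\Phi_V$ onto the full traceless coordinates, the change-of-variables formula gives
\[
\operatorname{Vol}_{\mathrm{HS}}(f_j^{(\eta)}(P))=\int_P J^{(\eta)}\le\kappa\operatorname{Vol}_{\mathrm{HS}}(P).
\]
A nonempty open subset of $\mathcal D(\mathcal H)$ contains a relatively open subset of the interior of the state space, so it has positive volume, and its volume is finite because $\mathcal D(\mathcal H)$ is compact. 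Therefore $0<\operatorname{Vol}_{\mathrm{HS}}(P)<\infty$, and the inequality is strict.
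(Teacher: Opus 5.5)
Your proposal is correct and follows the paper's overall route. The paper (Lemma~A, Corollary~A and the proof of Theorem~\ref{thm:volcontraction_nqubits}) also writes $f_j^{(\eta)}$ as a projective map $\rho\mapsto L\rho/\ell(\rho)$. It obtains the Jacobian $\det L/\ell^{d^2}$ with $\det L=(\det M_j^{(\eta)})^{2d}$, bounds it uniformly on $P$ by a quantity tending to $0$ as $\eta\to1^-$, and concludes by change of variables. Your block computation of $\det L$ gives the same value as the paper's eigenvalue count. Your cruder lower bound $\ell\ge\eta c$, in place of $\eta c+\frac{1-\eta}{m}$, makes no difference.

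The one genuine difference is how the projective Jacobian formula is proved.
\begin{itemize}
\item The paper works in coordinates adapted to the section. It computes $\det D\psi$ via the matrix determinant lemma and a Schur complement, assuming the block $Q$ is invertible. It then removes that assumption by a polynomial-identity and density argument.
\item Your cone-coordinate argument factors $L$ as $(t,x)\mapsto(t\ell(x),G(x))$, which is block-triangular. Comparing the volume elements $t^{n-1}$ and $(t\ell)^{n-1}$ then yields $J=\det L/\ell^{n}$ directly for every invertible $L$. No genericity step is needed, so this is shorter and more transparent.
\end{itemize}

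You are also more careful than the paper at the last step. Strictness of $\int_P J^{(\eta)}<\int_P 1$ requires $0<\operatorname{Vol}_{\mathrm{HS}}(P)<\infty$, and you justify this; the paper leaves it implicit. One small correction: for $P=\emptyset$ the conclusion reads $0<0$ and is false. So non-emptiness is an implicit hypothesis (automatic for parcels), not a case in which there is nothing to prove.
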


\begin{corollary}[Strict increase of geometric information]\label{cor:strict-increase}
Suppose $(O_1'(\eta),O_2'(\eta))$ is a double-parcel (in particular, this holds under the hypotheses of Theorem~\ref{thm:double-preservation}), that $O_2\neq\varnothing$, and that the hypotheses of either Theorem~\ref{thm:volcontraction_qubit} or Theorem~\ref{thm:volcontraction_nqubits} hold for $O_1$. Then the geometric information
\[
I_{\mathrm{HS}}(O_1,O_2)=\frac{\operatorname{Vol}_{\mathrm{HS}}(O_2)}{\operatorname{Vol}_{\mathrm{HS}}(O_1)}
\]
strictly increases under measurement:
\[
I_{\mathrm{HS}}(O_1'(\eta),O_2'(\eta))>I_{\mathrm{HS}}(O_1,O_2).
\]
\end{corollary}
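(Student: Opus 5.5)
The inequality is a ratio comparison, so the plan is to show that the numerator does not decrease and the denominator strictly decreases:
\[
\operatorname{Vol}_{\mathrm{HS}}(O_2'(\eta))\ \ge\ \operatorname{Vol}_{\mathrm{HS}}(O_2)>0
\qquad\text{and}\qquad
0<\operatorname{Vol}_{\mathrm{HS}}(O_1'(\eta))<\operatorname{Vol}_{\mathrm{HS}}(O_1).
\]
Dividing the first by the second then gives $I_{\mathrm{HS}}(O_1'(\eta),O_2'(\eta))>I_{\mathrm{HS}}(O_1,O_2)$. The double-parcel hypothesis is needed only so that $I_{\mathrm{HS}}$ of the updated pair is defined as a double-parcel quantity. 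If it is obtained via Theorem~\ref{thm:double-preservation}, the statement should be read for $\eta$ in the intersection of the two ranges $(\eta_0,1)$ from that theorem and from the contraction theorem, which is again an interval of the form $(\max\eta_0,1)$.

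\emph{Numerator.} By Definition~\ref{def:update}, $O_2\subseteq\operatorname{Conv}(O_2\cup f_j^{(\eta)}(O_2))=O_2'(\eta)$, so monotonicity of Lebesgue measure gives $\operatorname{Vol}_{\mathrm{HS}}(O_2'(\eta))\ge\operatorname{Vol}_{\mathrm{HS}}(O_2)$. Positivity requires a short check. $O_2$ is a non-empty open subset of $\mathcal D(\mathcal H)$, and $\Phi_V$ is an affine isometry of the trace-one hyperplane onto $\mathbb R^{d^2-1}$ under which $\mathcal D(\mathcal H)$ has non-empty interior. A non-empty relatively open convex subset therefore contains a small ball of full dimension and has positive volume. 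The key point is that a relatively open set meeting the boundary of $\mathcal D(\mathcal H)$ also meets its interior, since the interior is dense in the convex body.

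\emph{Denominator.} Apply Theorem~\ref{thm:volcontraction_qubit} or Theorem~\ref{thm:volcontraction_nqubits} with $P=O_1$. This gives $\operatorname{Vol}_{\mathrm{HS}}(f_j^{(\eta)}(O_1))<\operatorname{Vol}_{\mathrm{HS}}(O_1)$ for $\eta\in(\eta_0,1)$. The left side is positive by the same argument as for the numerator, because $f_j^{(\eta)}$ is a homeomorphism (Proposition~\ref{prop:singleupdate}) and so $O_1'(\eta)$ is a non-empty open parcel. Both volumes are also finite, since $\mathcal D(\mathcal H)$ is compact. The ratios are therefore well defined and the comparison goes through.

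The only step I expect to be delicate is this positivity and finiteness bookkeeping. That is, the argument must use that $O_2\neq\varnothing$, as the corollary assumes, so that the numerator is strictly positive. Without it, the ratio would equal $0$ both before and after the update and could not strictly increase. Everything else follows directly from results already established.
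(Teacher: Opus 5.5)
Your proposal is correct and follows the paper's argument: $O_2\subseteq O_2'(\eta)$ together with $\operatorname{Vol}_{\mathrm{HS}}(O_2)>0$ handles the numerator, and the contraction theorem applied with $P=O_1$ handles the denominator. Your extra bookkeeping spells out what the paper leaves implicit: positivity of $\operatorname{Vol}_{\mathrm{HS}}(O_1'(\eta))$ via density of the interior of $\mathcal D(\mathcal H)$, finiteness by compactness, and intersecting the two $\eta$-ranges.
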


The preceding volume-contraction and information-increase results concern
the ambient Hilbert--Schmidt volume and the corresponding information
$I_{\mathrm{HS}}$. No analogous monotonicity is asserted here for the
operational volume $\operatorname{Vol}_{\mathrm{op},V}$ or the
observational information $I_{\mathrm{op},V}$: these quantities are
defined relative to the finite-dimensional observational subspace of an
experimental parcel, whereas measurement preserves general quantum parcels
but need not preserve their experimental form. Consequently, after a
measurement update there need not be a canonically corresponding
observational subspace with respect to which the pre- and post-measurement
operational volumes can be compared.

The results of this section establish the core measurement-theoretic
foundation of IQM. Proposition~\ref{prop:singleupdate} guarantees that single-parcels are
preserved under fuzzy measurement, and Proposition~\ref{prop:monotone} shows the update formulas are monotone with respect to set inclusion, and hence order-compatible whenever the updated pair remains a double-parcel. Theorem~\ref{thm:double-preservation} gives a useful sufficient condition for a double-parcel to survive
measurement; Examples~\ref{ex:preserved-no-separation} and~\ref{ex:failure-no-separation} show that this condition is not necessary for preservation and that its failure alone does not determine whether preservation occurs. Theorems~\ref{thm:volcontraction_qubit} and~\ref{thm:volcontraction_nqubits} establish volume contraction of the
possible set, and Corollary~\ref{cor:strict-increase} shows that $I_{\mathrm{HS}}(O_1, O_2)$ increases
strictly whenever the double-parcel structure is preserved and the corresponding volume-contraction hypotheses hold. These results are applied
to the three foundational paradoxes in Sections~\ref{sec:double-slit},~\ref{sec:cat} and~\ref{sec:entanglement}. A further
application, combining the measurement framework with Reimann's
thermalisation theorem to derive IQM-native thermalisation results
for double parcels, is developed in a companion paper.

%% file: Uncertainty-and-Entropy.tex
\section{Uncertainty and Entropy}\label{sec:uncertainty}

In Interval Quantum Mechanics, when observables $A,B$ are measured on states represented by a quantum parcel $P\subset\mathcal D(\mathcal H)$, their standard deviations become range-valued.

\begin{definition}[Standard deviation range]
\label{def:stddev}
For a quantum parcel $P$ and a bounded self-adjoint operator $A$, define
\[
\Delta_P(A) := \{\Delta_\rho A : \rho\in P\},
\qquad
\Delta_\rho A := \sqrt{\operatorname{Tr}(\rho A^2)-[\operatorname{Tr}(\rho A)]^2}.
\]
Since $P$ is convex, hence connected, and $\rho\mapsto\Delta_\rho A$ is continuous, $\Delta_P(A)$ is an interval — open, closed, half-open, or a singleton, depending on $P$ and $A$ — by the same argument used for expectation and probability ranges (Sections~\ref{sec:reduction}--\ref{sec:unitary}). We record its closed enclosure
\[
\overline\Delta_P(A) := \Bigl[\inf_{\rho\in P}\Delta_\rho A,\ \sup_{\rho\in P}\Delta_\rho A\Bigr].
\]
\end{definition}

\begin{theorem}[Uncertainty relation for quantum parcels]
\label{thm:uncertainty}
Let $P\subset\mathcal D(\mathcal H)$ be a quantum parcel and $A,B$ bounded self-adjoint operators. Then
\[
\inf_{\rho\in P}\bigl(\Delta_\rho A\cdot\Delta_\rho B\bigr)
\ \ge\
\frac12\inf_{\rho\in P}\bigl|\operatorname{Tr}(\rho[A,B])\bigr|.
\]
This follows by taking the infimum over $\rho\in P$ in the Robertson inequality $\Delta_\rho A\cdot\Delta_\rho B\ge\frac12|\operatorname{Tr}(\rho[A,B])|$, which holds pointwise for every $\rho$.
\end{theorem}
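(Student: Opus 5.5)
The plan is to establish the Robertson inequality pointwise for every $\rho\in\mathcal D(\mathcal H)$ and then pass to infima over $P$. The only subtlety in the second step is that the infimum of a product dominates the infimum of the lower bound.

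\emph{Pointwise step.} Fix $\rho\in P$, a density operator, and set $A_0=A-\operatorname{Tr}(\rho A)I$ and $B_0=B-\operatorname{Tr}(\rho B)I$. Both are bounded self-adjoint, so $(\Delta_\rho A)^2=\operatorname{Tr}(\rho A_0^2)$, $(\Delta_\rho B)^2=\operatorname{Tr}(\rho B_0^2)$, and $[A_0,B_0]=[A,B]$. On bounded operators define the sesquilinear form $\langle X,Y\rangle_\rho:=\operatorname{Tr}(\rho X^\dagger Y)$. It is positive semidefinite because $\operatorname{Tr}(\rho X^\dagger X)=\operatorname{Tr}(\rho^{1/2}X^\dagger X\rho^{1/2})\ge0$, and it is finite since $\rho$ is trace class and $X^\dagger Y$ is bounded. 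The Cauchy--Schwarz inequality for positive semidefinite forms gives
\[
|\operatorname{Tr}(\rho A_0B_0)|^2\le\operatorname{Tr}(\rho A_0^2)\operatorname{Tr}(\rho B_0^2).
\]
Since $A_0,B_0$ are self-adjoint, $\operatorname{Tr}(\rho B_0A_0)=\overline{\operatorname{Tr}(\rho A_0B_0)}$. Hence $\operatorname{Tr}(\rho[A,B])=2i\operatorname{Im}\operatorname{Tr}(\rho A_0B_0)$, and so
\[
\tfrac14|\operatorname{Tr}(\rho[A,B])|^2\le|\operatorname{Tr}(\rho A_0B_0)|^2.
\]
Taking square roots of nonnegative quantities yields $\Delta_\rho A\,\Delta_\rho B\ge\frac12|\operatorname{Tr}(\rho[A,B])|$. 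This holds in finite or infinite dimensions, because everything is bounded and traces of trace-class times bounded operators are well defined.

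\emph{Infimum step.} Write $g(\rho)=\Delta_\rho A\,\Delta_\rho B$ and $h(\rho)=\frac12|\operatorname{Tr}(\rho[A,B])|$, so $g\ge h$ on $P$. For every $\rho\in P$ we have $g(\rho)\ge h(\rho)\ge\inf_P h$. Thus $\inf_P h$ is a lower bound for $g$ on $P$, and therefore $\inf_P g\ge\inf_P h$. Non-emptiness of $P$ (a parcel is nonempty by Definition~\ref{def:single-parcel}) ensures that both infima are finite real numbers. Both functions are bounded: $h\le\|A\|\|B\|$ and $g\le\|A\|\|B\|$.

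The main obstacle, mild as it is, is justifying Cauchy--Schwarz for the $\rho$-form in infinite dimensions without assuming that $\rho$ is pure. Writing the form as $\operatorname{Tr}\big((X\rho^{1/2})^\dagger(Y\rho^{1/2})\big)$, the Hilbert--Schmidt inner product of the Hilbert--Schmidt operators $X\rho^{1/2}$ and $Y\rho^{1/2}$, reduces this to the standard Cauchy--Schwarz inequality in $\mathcal{HS}(\mathcal H)$. Finally, note that the inequality is generally strict: it compares separate infima, and the minimizers of $g$ and of $h$ need not coincide.
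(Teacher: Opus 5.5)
Your proof is correct and follows the same route as the paper: Robertson's inequality holds pointwise, so for every $\rho\in P$ one has $\Delta_\rho A\,\Delta_\rho B\ge\frac12|\operatorname{Tr}(\rho[A,B])|\ge\frac12\inf_{\sigma\in P}|\operatorname{Tr}(\sigma[A,B])|$, and taking the infimum over $\rho\in P$ gives the claim. The only difference is that the paper cites Robertson's inequality, whereas you derive it from Cauchy--Schwarz for the form $\operatorname{Tr}(\rho X^\dagger Y)=\langle X\rho^{1/2},Y\rho^{1/2}\rangle_{\mathrm{HS}}$, which makes the argument self-contained and shows it holds for mixed states in infinite dimensions.
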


Thus the uncertainty principle constrains the entire parcel: no state compatible with the finite observational specification can violate the Robertson bound. IQM therefore represents uncertainty not by assigning exact variances to an exact state, but by ranges of variances over the compatible state region.

A complementary measure of uncertainty is the von Neumann entropy
\[
S(\rho) = -\operatorname{Tr}(\rho\log\rho).
\]
For a quantum parcel $P$, define its entropy range by
\[
S_P := \{S(\rho):\rho\in P\},
\]
with closed enclosure
\[
\overline S_P := \Bigl[\inf_{\rho\in P}S(\rho),\ \sup_{\rho\in P}S(\rho)\Bigr].
\]
In finite dimensions $S$ is continuous, so the infimum and supremum over $P$ equal the minimum and maximum over the compact closure $\overline P$. Since von Neumann entropy is concave, its maximum over a convex parcel need not occur at an extreme point; consequently vertex enumeration of a polyhedral outer approximation does not in general determine the upper entropy bound.

The entropy range and geometric information quantify different aspects of
finite-precision information. The entropy range $S_P$ describes the
mixedness of the states compatible with a parcel, whereas
\[
I_{\mathrm{HS}}(O)
=
\frac{1}{\operatorname{Vol}_{\mathrm{HS}}(O)},
\qquad
I_{\mathrm{HS}}(O_1,O_2)
=
\frac{\operatorname{Vol}_{\mathrm{HS}}(O_2)}
     {\operatorname{Vol}_{\mathrm{HS}}(O_1)}
\]
measure, respectively, the inverse size of a possible region and the
relative sizes of the excluded and possible regions. No monotonicity of
$S_P$ under measurement is asserted. By contrast, under the
volume-contraction hypotheses of Theorem~\ref{thm:volcontraction_qubit}
or Theorem~\ref{thm:volcontraction_nqubits},
$I_{\mathrm{HS}}(O)$ increases strictly for a single parcel, while
Corollary~\ref{cor:strict-increase} gives strict increase of
$I_{\mathrm{HS}}(O_1,O_2)$ whenever the updated pair remains a
double-parcel.

This distinction is already visible for a pure superposition subjected to
a selective sharp measurement. The initial pure state has von Neumann
entropy zero, and the pure post-measurement outcome state also has entropy
zero, even though the measurement has supplied information about the
system. Geometric information is designed to capture this gain through
the contraction of the region of states compatible with the available
information, rather than through a change in their von Neumann entropy.

%% file: particle-wave.tex
\section{The Double-Slit Experiment in IQM}
\label{sec:double-slit}

Consider the two-dimensional path space spanned by $\ket{L}$ and $\ket{R}$.
Detection at a screen position with relative phase $\phi$ is represented by
\[
\Pi_\phi=\ket{\phi}\bra{\phi},
\qquad
\ket{\phi}
=
\frac{1}{\sqrt2}
\bigl(\ket L+e^{i\phi}\ket R\bigr).
\]
A finite-precision preparation is represented not by an exactly specified
density matrix but by an experimental parcel $O$.  For example, a preparation
centred on
\[
\ket{+}=\frac{\ket L+\ket R}{\sqrt2}
\]
may be represented by
\[
O_\varepsilon
=
\left\{
\rho\in\mathcal D(\mathbb C^2):
|\operatorname{Tr}(\rho\sigma_x)-1|<\varepsilon,\;
|\operatorname{Tr}(\rho\sigma_y)|<\varepsilon,\;
|\operatorname{Tr}(\rho\sigma_z)|<\varepsilon
\right\}.
\]
The corresponding screen probabilities are therefore the ranges
\[
\Pr_{O_\varepsilon}(\Pi_\phi)
=
\left\{
\operatorname{Tr}(\rho\Pi_\phi):
\rho\in O_\varepsilon
\right\}.
\]

A finite-resolution which-path measurement is described by the fuzzy POVM
of Section~\ref{sec:measurement},
\[
E_L^{(\eta)}
=
\eta\Pi_L+\frac{1-\eta}{2}I,
\qquad
E_R^{(\eta)}
=
\eta\Pi_R+\frac{1-\eta}{2}I,
\]
where $\Pi_L=\ket L\bra L$ and $\Pi_R=\ket R\bra R$.
Conditioned on outcome $L$, the preparation parcel becomes
\[
O_\varepsilon'
=
f_L^{(\eta)}(O_\varepsilon),
\]
and the subsequent screen prediction is
\[
\Pr_{O_\varepsilon'}(\Pi_\phi)
=
\left\{
\operatorname{Tr}(\rho\Pi_\phi):
\rho\in O_\varepsilon'
\right\}.
\]

The loss of interference can be seen directly from the qubit update.
Writing
\[
\rho=\frac12(I+x\sigma_x+y\sigma_y+z\sigma_z),
\]
Section~\ref{sec:measurement} gives, for outcome $L$,
\[
x'
=
\frac{\sqrt{1-\eta^2}\,x}{1+\eta z},
\qquad
y'
=
\frac{\sqrt{1-\eta^2}\,y}{1+\eta z},
\qquad
z'
=
\frac{z+\eta}{1+\eta z}.
\]
Since
\[
\operatorname{Tr}(\rho\Pi_\phi)
=
\frac12\bigl(1+x\cos\phi+y\sin\phi\bigr),
\]
the interference term is carried by the transverse components $x$ and $y$. On any parcel $O$ satisfying $\operatorname{Tr}(\rho\Pi_L)\ge c>0$, i.e.\ $z\ge2c-1$, the denominator obeys $1+\eta z\ge1+\eta(2c-1)$, so
\[
|x'|,|y'|
\ \le\
\frac{\sqrt{1-\eta^2}}{1+\eta(2c-1)}
\longrightarrow0
\qquad(\eta\to1^-),
\]
uniformly on $O$. Consequently
\[
\Pr_{f_L^{(\eta)}(O)}(\Pi_\phi)
\longrightarrow
\left\{\frac12\right\}.
\]
Thus increasingly sharp which-path information continuously suppresses the
interference pattern, with the usual sharp-measurement result recovered in
the limit.

The same experiment may be represented by a double parcel $(O_1,O_2)$.
After outcome $L$ its update is, by Definition~\ref{def:update},
\[
O_1'=f_L^{(\eta)}(O_1),
\qquad
O_2'
=
\operatorname{Conv}
\bigl(O_2\cup f_L^{(\eta)}(O_2)\bigr),
\]
whenever the updated sets remain disjoint. The double-parcel
description therefore records not only the range of states compatible with
the preparation and measurement history but also the states already excluded
by that history.

In this formulation the familiar wave--particle duality paradox does not
arise in its standard form. A single finite-precision parcel carries the
range of coherent path information compatible with the preparation, while
the measurement update continuously suppresses the transverse coherences
as which-path resolution increases. There is therefore no need to regard
the system as switching between two different kinds of quantum state,
``particle-like'' and ``wave-like'': the apparent duality is represented
geometrically by the continuous evolution of a single parcel as the
measurement resolution changes. Standard sharp-measurement complementarity
is recovered in the limit $\eta\to1^-$.

%% file: cat-paradox.tex
\section{Schr\"odinger's Cat in IQM}
\label{sec:cat}

Consider first the two-dimensional subspace spanned by macroscopically
distinguishable states $\ket{\text{alive}}$ and $\ket{\text{dead}}$.  A coherent
superposition
\[
\ket{\psi}
=
\alpha\ket{\text{alive}}+\beta\ket{\text{dead}},
\qquad
|\alpha|^2+|\beta|^2=1,
\]
corresponds in standard quantum mechanics to the point state
$\ket{\psi}\bra{\psi}$.  In IQM a finite-precision preparation centred on
this state is represented instead by an experimental parcel containing it.
For a qubit realization, with $\ket{\text{alive}}=\ket0$ and
$\ket{\text{dead}}=\ket1$, one may take
\[
O_\varepsilon
=
\left\{
\rho\in\mathcal D(\mathbb C^2):
|\operatorname{Tr}(\rho\sigma_x)-x_0|<\varepsilon,\;
|\operatorname{Tr}(\rho\sigma_y)-y_0|<\varepsilon,\;
|\operatorname{Tr}(\rho\sigma_z)-z_0|<\varepsilon
\right\},
\]
where $(x_0,y_0,z_0)$ is the Bloch vector of
$\ket{\psi}\bra{\psi}$.  Thus finite observational information need not
distinguish an exactly coherent state from nearby partially decohered
states.

A finite-resolution measurement of the cat's status is represented by the
fuzzy POVM of Section~\ref{sec:measurement},
\[
E_{\text{alive}}^{(\eta)}
=
\eta\Pi_{\text{alive}}+\frac{1-\eta}{2}I,
\qquad
E_{\text{dead}}^{(\eta)}
=
\eta\Pi_{\text{dead}}+\frac{1-\eta}{2}I.
\]
Conditioned on outcome ``alive'', the possible parcel becomes
\[
O'=f_{\text{alive}}^{(\eta)}(O).
\]

For the qubit model the behaviour follows directly from the update formula
already obtained in Section~\ref{sec:measurement}.  If
\[
\rho=\frac12(I+x\sigma_x+y\sigma_y+z\sigma_z),
\]
then the alive update gives
\[
x'
=
\frac{\sqrt{1-\eta^2}\,x}{1+\eta z},
\qquad
y'
=
\frac{\sqrt{1-\eta^2}\,y}{1+\eta z},
\qquad
z'
=
\frac{z+\eta}{1+\eta z}.
\]
Hence, on any parcel satisfying
$\operatorname{Tr}(\rho\Pi_{\text{alive}})\ge c>0$,
\[
|x'|,|y'|
\le
\frac{\sqrt{1-\eta^2}}
     {1+\eta(2c-1)}
\longrightarrow0,
\]
and
\[
1-z'
=
\frac{(1-\eta)(1-z)}
     {1+\eta z}
\longrightarrow0
\qquad(\eta\to1^-)
\]
uniformly on the parcel.  Thus
\[
\sup_{\rho\in O}
\left\|
f_{\text{alive}}^{(\eta)}(\rho)
-
\ket{\text{alive}}\bra{\text{alive}}
\right\|_1
\longrightarrow0
\qquad(\eta\to1^-).
\]
The analogous statement holds for outcome ``dead''.  The sharp-measurement behaviour of the usual cat
experiment is therefore recovered continuously from finite-resolution
parcel updates.

The same construction does not depend on the cat being literally a qubit.
For a macroscopic system, $\Pi_{\text{alive}}$ may project onto a large
macroscopic ``alive'' subspace rather than a single vector. A finite
collection of coarse-grained observables defines an experimental parcel in
the full state space, and conditioning on the observed outcome updates it
by the same Kraus rule. As the measurement becomes sharp, states with
nonzero probability for that outcome are driven towards states supported
on the corresponding macroscopic subspace. For rank-one $\Pi_{\text{alive}}$
this gives the unique state $\ket{\text{alive}}\bra{\text{alive}}$, whereas
for higher-rank $\Pi_{\text{alive}}$ the sharp map
\[
\rho\mapsto
\frac{\Pi_{\text{alive}}\rho\Pi_{\text{alive}}}
{\operatorname{Tr}(\rho\Pi_{\text{alive}})}
\]
generally retains dependence on $\rho$ within the alive subspace.

For a double parcel $(O_1,O_2)$ the same measurement is represented by
\[
O_1'=f_{\text{alive}}^{(\eta)}(O_1),
\qquad
O_2'
=
\operatorname{Conv}
\bigl(O_2\cup f_{\text{alive}}^{(\eta)}(O_2)\bigr),
\]
whenever the updated sets remain disjoint. The second component thereby
records states excluded by the accumulated preparation and measurement
history.

IQM therefore reformulates the cat experiment so that the familiar
Schr\"odinger-cat paradox does not arise in its standard form. A
finite-precision preparation does not warrant assigning the system an
exact coherent superposition: coherent and partially decohered states may
belong to the same experimentally specified parcel. When a macroscopic
outcome is observed, the measurement updates this parcel continuously
toward the corresponding outcome sector as the measurement resolution
increases. Thus the apparent paradox of an exactly specified macroscopic
superposition undergoing an abrupt collapse is replaced by the geometric
evolution and refinement of a finite-precision parcel, with the standard
sharp-measurement behaviour recovered in the limiting case.

%% file: entanglement.tex
\section{Entanglement and Bell Violation in IQM}
\label{sec:entanglement}

Consider two qubits in the Bell state
\[
\ket{\Phi^+}
=
\frac{1}{\sqrt2}(\ket{00}+\ket{11}).
\]
In IQM, a finite-precision preparation centred on this state is
represented by an experimental parcel containing it. We show that such a
parcel can be chosen so that every compatible state violates CHSH.

For the standard CHSH observables
\[
A=\sigma_x,\qquad
A'=\sigma_z,\qquad
B=\frac{\sigma_x+\sigma_z}{\sqrt2},
\qquad
B'=\frac{\sigma_x-\sigma_z}{\sqrt2},
\]
let
\[
S
=
A\otimes B+A\otimes B'
+A'\otimes B-A'\otimes B'.
\]
The Bell state satisfies
\[
\bra{\Phi^+}S\ket{\Phi^+}=2\sqrt2.
\]

\begin{theorem}[Finite-precision Bell violation]
\label{bell-violation}
There exists an experimental parcel $P$ containing
$\ket{\Phi^+}\bra{\Phi^+}$ such that
\[
E_P(S)\subset(2,2\sqrt2].
\]
Consequently every state in $P$ violates the CHSH inequality and is
entangled.
\end{theorem}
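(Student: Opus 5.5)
The plan is to use the single observable $S$ itself as the defining constraint of the parcel. I will take
\[
P=\bigl\{\rho\in\mathcal D(\mathbb C^2\otimes\mathbb C^2):\ 2<\operatorname{Tr}(\rho S)<2\sqrt2+1\bigr\},
\]
which is a single expectation cylinder and hence an experimental parcel in the sense of Definition~\ref{def:experimental-parcel}. It is convex and weak open, and it is non-empty because $\operatorname{Tr}(\ket{\Phi^+}\bra{\Phi^+}S)=2\sqrt2\in(2,2\sqrt2+1)$. If one prefers a parcel that also localises around the Bell state, one can intersect $P$ with the basic neighbourhoods $O_n(\ket{\Phi^+}\bra{\Phi^+})$ of Theorem~\ref{thm:reduction-finite}. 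By the convergence statement of that theorem applied to $H=S$, the expectation range of $S$ over $O_n$ lies inside $(2,2\sqrt2+1)$ for large $n$, so the intersection is again a non-empty experimental parcel containing the Bell state.

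The key step is the upper bound $E_P(S)\subseteq(-\infty,2\sqrt2]$, which is Tsirelson's bound: $\operatorname{Tr}(\rho S)\le\|S\|\le2\sqrt2$ for every density matrix $\rho$. I will prove it by the standard square computation. Since $A,A',B,B'$ are Hermitian involutions with $A,A'$ acting on the first factor and $B,B'$ on the second,
\[
S^2=4I-[A,A']\otimes[B,B'].
\]
Here $[\sigma_x,\sigma_z]=-2i\sigma_y$ and $[B,B']=[\sigma_x,\sigma_z]$, so $[A,A']\otimes[B,B']=-4\,\sigma_y\otimes\sigma_y$. This gives $S^2=4I+4\,\sigma_y\otimes\sigma_y$, which has operator norm $8$, hence $\|S\|=2\sqrt2$. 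Alternatively, one can write $S=\sqrt2(\sigma_x\otimes\sigma_x+\sigma_z\otimes\sigma_z)$ directly and note that these two commuting Pauli products have eigenvalues $\pm1$, so $\|S\|\le2\sqrt2$.

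Combining the two bounds, every $\rho\in P$ satisfies $2<\operatorname{Tr}(\rho S)\le2\sqrt2$, that is, $E_P(S)\subset(2,2\sqrt2]$. For the consequence, every separable state $\rho=\sum_k p_k\,\rho_k^A\otimes\rho_k^B$ satisfies $|\operatorname{Tr}(\rho S)|\le2$. This holds by the usual local-hidden-variable argument: for product states the expectation is $a(b+b')+a'(b-b')$ with $a,a',b,b'\in[-1,1]$, which has absolute value at most $2$, and convexity extends the bound to all separable states. Hence every state in $P$ violates CHSH and is therefore entangled.

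The main obstacle is only the Tsirelson bound, which ensures the range does not exceed $2\sqrt2$. It reduces to the explicit norm computation for $S$ above, so there is no real difficulty; the openness and non-emptiness of $P$ follow directly from the definitions.
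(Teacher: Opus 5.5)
Your proof is correct and is essentially the paper's argument: both use the open set $\{\rho:\operatorname{Tr}(\rho S)>2\}$ together with Tsirelson's bound. Yours is more direct, because you note that this set, capped at any $b>2\sqrt2$, is already a single expectation cylinder, so the paper's appeal to experimental parcels forming a basis is unnecessary; you also prove the Tsirelson bound and the separable-state bound $|\operatorname{Tr}(\rho S)|\le2$, which the paper only invokes. One harmless sign slip: in fact $[B,B']=-[\sigma_x,\sigma_z]=2i\sigma_y$, so $S^2=4I-4\,\sigma_y\otimes\sigma_y$, which still has norm $8$ and agrees with your direct form $S=\sqrt2(\sigma_x\otimes\sigma_x+\sigma_z\otimes\sigma_z)$.
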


More generally, finite precision does not prevent entanglement from being
certified by a witness.

\begin{corollary}[Entanglement witness]
Let $W$ be an entanglement witness and suppose
\[
\operatorname{Tr}(\rho_0W)<0.
\]
Then there exists an experimental parcel $P\ni\rho_0$ such that
\[
\operatorname{Tr}(\rho W)<0
\qquad\text{for every }\rho\in P.
\]
This follows by the same openness argument.
\end{corollary}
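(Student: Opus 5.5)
The statement immediately above is the entanglement-witness corollary. The plan is to use only that the witness expectation is weakly continuous and that experimental parcels form a basis of the weak topology (Section~\ref{sec:parcels}). The simplest route is to choose the parcel directly as a single expectation cylinder for $W$ itself, so that no continuity estimate is even required.

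Since $W$ is an entanglement witness, it is a bounded (in the two-qubit or finite-dimensional setting, automatically bounded) Hermitian operator. Put $w_0:=\operatorname{Tr}(\rho_0W)<0$, and choose any $a<w_0$ (for instance $a=-\|W\|-1$). Then define
\[
P=\{\rho\in\mathcal D(\mathcal H): a<\operatorname{Tr}(\rho W)<0\}.
\]
By Definition~\ref{def:experimental-parcel}, with $m=1$, $H_1=W$ and $(a_1,b_1)=(a,0)$, this is an experimental parcel, hence convex and weak open. It contains $\rho_0$ because $a<w_0<0$. Every $\rho\in P$ satisfies $\operatorname{Tr}(\rho W)<0$ by construction. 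Since a witness has nonnegative expectation on all separable states, every state in $P$ is certified as entangled.

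If one wants the parcel also to be small around $\rho_0$, mirroring Theorem~\ref{bell-violation}, intersect $P$ with any of the refining parcels $O_n(\rho_0)$ of Theorem~\ref{thm:reduction-finite} (or of Theorem~\ref{thm:reduction-infinite-bounded}). A finite intersection of experimental parcels that all contain $\rho_0$ is again an experimental parcel containing $\rho_0$. An alternative route, matching the phrase ``same openness argument'', uses $|\operatorname{Tr}(\rho W)-w_0|\le\|W\|\,\|\rho-\rho_0\|_1$: take $n$ with $\|W\|2^{-n}<|w_0|$ in Theorem~\ref{thm:reduction-infinite-bounded}.

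There is no genuine obstacle. The only point needing care is the boundedness of $W$, which is required for the cylinder to be an admissible experimental parcel. This holds in the finite-dimensional setting of the section, and is part of the standard definition of a witness otherwise.
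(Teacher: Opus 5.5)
Your argument is correct and is essentially the paper's: the paper (via the proof of Theorem~\ref{bell-violation}) uses continuity of $\rho\mapsto\operatorname{Tr}(\rho W)$ to obtain the open neighbourhood $\{\rho:\operatorname{Tr}(\rho W)<0\}$ of $\rho_0$ and then invokes the basis property to find an experimental parcel inside it. You instead observe that this neighbourhood is itself a one-cylinder experimental parcel with $H_1=W$, which makes the basis step unnecessary. Your secondary route via $O_n(\rho_0)$ is also fine, except that in finite dimensions you should use Theorem~\ref{thm:reduction-finite}, whose trace-norm radius is $\sqrt{N(N^2-1)}\,2^{-n}$, since Theorem~\ref{thm:reduction-infinite-bounded} is stated only for $\dim\mathcal H=\infty$.
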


A local finite-resolution measurement on Alice's subsystem is represented
by the fuzzy POVM and parcel update of Section~\ref{sec:measurement}.

Thus CHSH violation is not an artefact of an infinitely precise Bell
point state: it persists throughout an open finite-precision parcel.

%% file: computational-aspects.tex
\section{Hyper-rectangular observational representations}
\label{sec:computational}

The state-space outer approximation introduced in
Section~\ref{sec:parcels} has a particularly simple representation in
finite observational coordinates. We describe this representation here
and show how it can be used to obtain computationally inexpensive bounds
on expectation values.

Let $H_1,\ldots,H_m$ be a Hilbert--Schmidt-orthonormal basis of an
$m$-dimensional observational subspace
\[
V=\operatorname{span}_{\mathbb R}\{H_1,\ldots,H_m\},
\]
chosen as in Section~\ref{sec:parcels}; in finite dimensions the $H_j$
are traceless Hermitian operators. The associated observational
coordinate map is
\[
\Phi_V(\rho)
=
\bigl(
\operatorname{Tr}(\rho H_1),\ldots,
\operatorname{Tr}(\rho H_m)
\bigr)
\in\mathbb R^m.
\]
For intervals $(a_j,b_j)$, let
\[
R(a,b)
=
\prod_{j=1}^m(a_j,b_j)
\subset\mathbb R^m.
\]
The corresponding experimental parcel is
\[
O(a,b)
=
\{\rho\in\mathcal D(\mathcal H):
\Phi_V(\rho)\in R(a,b)\},
\]
and its observational shadow is
\[
\Phi_V(O(a,b))
=
\Phi_V(\mathcal D(\mathcal H))\cap R(a,b).
\]
Thus positivity generally makes the physical observational shadow a
proper subset of the coordinate box $R(a,b)$.

The state-space outer approximation defined in
Section~\ref{sec:parcels} is obtained by dropping positivity while
retaining the trace-one, self-adjoint and observational constraints.
In the coordinates above this amounts to allowing the whole box
$R(a,b)$ rather than only those points that arise from positive density
matrices. Thus
\[
\Phi_V(O(a,b))
=
\Phi_V(\mathcal D(\mathcal H))\cap R(a,b)
\subseteq R(a,b),
\]
whereas the state-space outer approximation has the full box $R(a,b)$
as its observational coordinate image. We therefore call $R(a,b)$ the
\emph{hyper-rectangular observational representation} of the
state-space outer approximation. It requires only the $2m$ endpoints
$(a_j,b_j)$ to store, while retaining all the defining expectation
intervals.

\[
\begin{tikzcd}[column sep=large,row sep=large]
O(a,b)
    \arrow[r, hook]
    \arrow[d, "\Phi_V"']
&
O^{\mathrm{outer}}(a,b)
    \arrow[d, two heads, "\Phi_V"]
\\
\Phi_V(\mathcal D(\mathcal H))\cap R(a,b)
    \arrow[r, hook]
&
R(a,b).
\end{tikzcd}
\]

This coordinate-box construction is meaningful whether $\mathcal H$ is
finite- or infinite-dimensional: only finitely many observational
directions are involved, so $\Phi_V$ always takes values in the
finite-dimensional space $\mathbb R^m$. In infinite dimensions the
corresponding outer approximation in the full trace-one self-adjoint
space is unbounded, as discussed in Section~\ref{sec:parcels}, but its
finite observational representation remains the bounded box $R(a,b)$.
We now specialize to $\dim\mathcal H=d<\infty$ in order to relate this
representation to the full Hilbert--Schmidt geometry and to obtain
explicit expectation-value bounds.

For an observable $A\in\mathbb RI\oplus V$, write
\[
A
=
\alpha I+\sum_{j=1}^m c_jH_j,
\qquad
\alpha=\frac{\operatorname{Tr}A}{d},
\qquad
c_j=\operatorname{Tr}(AH_j).
\]
This decomposition is unique because $I$ is Hilbert--Schmidt orthogonal
to the traceless operators $H_j$. For every state $\rho$,
\[
\operatorname{Tr}(\rho A)
=
\alpha+\sum_{j=1}^m
c_j\operatorname{Tr}(\rho H_j).
\]
Consequently the coordinate box gives the enclosure
\[
\overline E_{O(a,b)}(A)
\subseteq
\overline E^{\,\mathrm{box}}_{a,b}(A)
:=
\left[
\alpha+\sum_j(c_j^+a_j+c_j^-b_j),
\;
\alpha+\sum_j(c_j^+b_j+c_j^-a_j)
\right],
\]
where
\[
c_j^+=\max(c_j,0),
\qquad
c_j^-=\min(c_j,0).
\]
This bound is obtained directly from the coordinate intervals and
requires no vertex enumeration. Since the physical shadow
$\Phi_V(\mathcal D(\mathcal H))\cap R(a,b)$ may be strictly smaller
than $R(a,b)$, the box enclosure can be conservative relative to the
true expectation range $\overline E_{O(a,b)}(A)$.

Under unitary evolution, let
\[
\sigma=U\rho U^\dagger.
\]
Then
\[
\operatorname{Tr}(\rho H_j)
=
\operatorname{Tr}(\sigma\,UH_jU^\dagger).
\]
Hence the evolved parcel is represented by the same coordinate box
attached to the transported observational subspace
\[
V'=UVU^\dagger.
\]
Equivalently, if the output observables $H_j$ are held fixed, their
expectation values on the evolved state may be evaluated on the incoming
state using the Heisenberg-transformed observables $U^\dagger H_jU$:
\[
\operatorname{Tr}(U\rho U^\dagger H_j)
=
\operatorname{Tr}(\rho U^\dagger H_jU).
\]
These are the Schr\"odinger and Heisenberg descriptions, respectively,
of the same transformation. The former is the coordinate counterpart
of the transported operational volume discussed in
Section~\ref{sec:unitary}.

Measurement updates are nonlinear and need not preserve experimental
parcels or their hyper-rectangular observational representations.
Consequently there is, in general, no canonical post-measurement box
obtained simply by transporting the original one. A practical outer
representation may instead be constructed by choosing a finite family
of post-measurement observables and recomputing bounds on their
expectation values, for example by optimization over the original
parcel. The resulting intervals again determine a hyper-rectangular
observational representation. Developing efficient algorithms for this
propagation is beyond the mathematical parcel formalism considered here.

Hyper-rectangular representations therefore provide a simple finite data
structure for the observational content of an experimental parcel. The
coordinate box $R(a,b)$ is the hyper-rectangular observational
representation of its state-space outer approximation, whereas
\[
\Phi_V(\mathcal D(\mathcal H))\cap R(a,b)
\]
is the actual observational shadow after positivity of the quantum state
has been imposed.

%% file: conclusion.tex
\section{Conclusion and outlook}
\label{sec:conclusion}

We have developed Interval Quantum Mechanics (IQM) as a finite-precision
formulation of quantum theory in which an experimentally specified state is
represented by a quantum parcel rather than by an exactly specified density
operator. Experimental parcels are convex open sets determined by finitely
many expectation constraints; point states remain part of the mathematical
theory and are recovered as ideal limits of increasingly refined
observational information.

The main results establish that this replacement is compatible with the
standard structure of quantum mechanics while introducing a genuinely
finite-resolution level of description. Experimental parcels form a basis
for the state-space topology, and nested refinements recover point states in
both finite and infinite dimensions. Bounded-observable predictions converge
to their standard point-state values, while the corresponding result for
non-negative unbounded observables holds under a uniform moment condition.
Unitary evolution acts reversibly and preserves ambient Hilbert--Schmidt
volume in finite dimensions. Finite-resolution measurements are represented
by fuzzy POVMs and Kraus updates: single parcels are preserved, the revised
double-parcel update is order-compatible whenever it remains within the
double-parcel domain, and a useful sufficient condition guarantees preservation for sufficiently sharp measurements.
For sufficiently sharp measurements we established contraction of ambient
Hilbert--Schmidt volume under the stated hypotheses and the corresponding
strict increase of geometric information.

The framework also gives finite-precision formulations of familiar
foundational phenomena. Which-path measurements continuously suppress the
transverse coherences responsible for double-slit interference. In the
Schr\"odinger-cat model, finite-resolution preparation and measurement are
represented without assigning an experimentally inaccessible exact state,
with sharp measurements concentrating the parcel on the observed outcome
sector. Most importantly, Bell violation survives finite precision: an
entire open experimental parcel around a Bell state can violate CHSH.
Thus quantum nonclassicality is not an artefact of an infinitely precise
point-state description.

The distinction between parcels and point states should not be understood as
a rejection of standard quantum mechanics. Rather, IQM separates two levels
of description: exact density operators provide the limiting mathematical
theory, while parcels represent finite observational specifications. The
reduction results make this relation precise — an instance, as discussed in
the Introduction, of the general relation between finite numerical
representations and exact real-number analysis.

The same viewpoint has begun to extend beyond the setting developed here.
An algebraic formulation of finite-precision quantum field theory has been
developed in~\cite{EdalatIAQFT2026}, and quantum ergodic and thermalisation
results in the parcel framework are studied in~\cite{EdalatErgodicity2026}.
Together these
developments suggest that the parcel construction is not tied to the
finite-dimensional foundational examples considered in this paper, but can
serve as a common finite-precision language across different parts of quantum
theory.

Several questions remain for further work. The sufficient ratio-form condition for
preservation of double parcels under measurement is not necessary, and a
more intrinsic characterization of useful preservation conditions would be
valuable. Efficient propagation of finite-dimensional observational
representations under general measurements also remains to be developed.
More broadly, the relation between finite observational parcels and the
operational precision attainable in concrete experiments deserves systematic
study.

A related question is empirical rather than purely mathematical. IQM and
standard point-state quantum mechanics agree in the infinite-precision
limit, and at any finite stage IQM's ranges are constructed to contain
whatever standard quantum mechanics would predict for a compatible point
state; identifying a finite-precision experimental setting in which the two
descriptions could in principle be told apart, rather than merely restated
in different language, is left for future work.

IQM therefore provides a mathematically explicit separation between exact
quantum states and finite observational information. Its principal claim is
that finite precision can be incorporated into the
state-space formalism itself, while standard point-state quantum mechanics
is retained as the limiting theory.

%% file: appendix.tex
\section*{Appendix: Proofs}

\subsection*{Proofs from Section~\ref{sec:parcels} (Quantum parcels)}

\noindent{\bf Proposition}~{\bf \ref{prop:no-open-pure-states}}~(No open set of exclusively pure states)

\begin{proof}
Suppose, for contradiction, that $O$ is a non-empty weak open subset of $\mathcal D(\mathcal H)$ consisting exclusively of pure states, and fix $\rho_0\in O$. Since $\dim\mathcal H\ge2$, $\mathcal D(\mathcal H)$ contains a state $\sigma\ne\rho_0$. For $\varepsilon\in(0,1]$ define
\[
\rho_\varepsilon := (1-\varepsilon)\rho_0+\varepsilon\sigma.
\]
Since $\rho_0\ne\sigma$, each $\rho_\varepsilon$ with $\varepsilon\in(0,1)$ is a proper convex combination of two distinct elements of $\mathcal D(\mathcal H)$, hence is not an extreme point of $\mathcal D(\mathcal H)$; since the extreme points of $\mathcal D(\mathcal H)$ are exactly the pure states, $\rho_\varepsilon$ is not pure for any $\varepsilon\in(0,1)$.

On the other hand,
\[
\|\rho_\varepsilon-\rho_0\|_1=\varepsilon\|\sigma-\rho_0\|_1\longrightarrow0
\qquad(\varepsilon\to0^+),
\]
so $\rho_\varepsilon\to\rho_0$ in trace norm, hence — by the weak/trace-norm coincidence established at the start of this section — in the weak topology as well. Since $O$ is open and $\rho_0\in O$, there exists $\varepsilon_0\in(0,1)$ such that $\rho_\varepsilon\in O$ for all $\varepsilon\in(0,\varepsilon_0)$. Any such $\rho_\varepsilon$ is a non-pure element of $O$, contradicting the assumption that $O$ consists exclusively of pure states.
\end{proof}

\subsection*{Proofs from Section~\ref{sec:reduction} (Reduction to Standard Quantum Mechanics)}

\noindent{\bf Proposition}~{\bf \ref{prop:expectation}}~(Expectation ranges are intervals)

\begin{proof}
Let $\alpha=\inf_{\rho\in O}\operatorname{Tr}(\rho H)$ and $\beta=\sup_{\rho\in O}\operatorname{Tr}(\rho H)$. If $\alpha=\beta$, every $\rho\in O$ satisfies $\operatorname{Tr}(\rho H)=\alpha$, so $E_O(H)=\{\alpha\}$.

Assume $\alpha<\beta$ and let $r\in(\alpha,\beta)$. Since $r>\alpha=\inf_{\rho\in O}\operatorname{Tr}(\rho H)$, there exists $\rho_-\in O$ with $\operatorname{Tr}(\rho_-H)<r$; since $r<\beta=\sup_{\rho\in O}\operatorname{Tr}(\rho H)$, there exists $\rho_+\in O$ with $\operatorname{Tr}(\rho_+H)>r$. Since $O$ is convex, $\rho_t=t\rho_-+(1-t)\rho_+\in O$ for all $t\in[0,1]$, and $t\mapsto\operatorname{Tr}(\rho_tH)$ is continuous and affine, so by the intermediate value theorem there is $t_0\in(0,1)$ with $\operatorname{Tr}(\rho_{t_0}H)=r$. Hence every $r\in(\alpha,\beta)$ lies in $E_O(H)$, and together with $E_O(H)\subseteq[\alpha,\beta]$ (by definition of $\alpha,\beta$) this gives $(\alpha,\beta)\subseteq E_O(H)\subseteq[\alpha,\beta]$, i.e. $E_O(H)$ is an interval with endpoints $\alpha,\beta$. Whether $\alpha$ or $\beta$ themselves lie in $E_O(H)$ depends on whether some $\rho\in O$ attains them; both occur in examples (e.g.\ for $O=\mathcal D(\mathcal H)$ itself, or for $O=\{\rho:\operatorname{Tr}(\rho H)>c\}$ containing a pure state that maximizes $\operatorname{Tr}(\rho H)$ over $\mathcal D(\mathcal H)$).
\end{proof}

\noindent{\bf Theorem}~{\bf \ref{thm:reduction-finite}}~(Reduction to point states: finite dimensions)

\begin{proof}
For $\rho\in O_n(\rho_0)$, the difference $\rho-\rho_0$ is traceless and Hermitian, so by Parseval's identity in the Hilbert--Schmidt-orthonormal basis $\{\Psi_j\}_{j=1}^{N^2-1}$ of the traceless Hermitian operators (Section~\ref{sec:parcels}),
\[
\|\rho-\rho_0\|_{\mathrm{HS}}^2
=
\sum_{j=1}^{N^2-1}|\operatorname{Tr}((\rho-\rho_0)\Psi_j)|^2
<(N^2-1)\,4^{-n}.
\]
Since $\rho-\rho_0$ is an $N\times N$ Hermitian matrix, Cauchy--Schwarz applied to its (real) eigenvalues gives $\|\rho-\rho_0\|_1\le\sqrt N\,\|\rho-\rho_0\|_{\mathrm{HS}}$, so
\[
\|\rho-\rho_0\|_1 < \sqrt{N(N^2-1)}\,2^{-n} =: \varepsilon_n \xrightarrow[n\to\infty]{}0.
\]
For $\rho,\sigma\in O_n(\rho_0)$, the triangle inequality gives $\|\rho-\sigma\|_1<2\varepsilon_n$, so $\operatorname{diam}_1(O_n(\rho_0))\le2\varepsilon_n\to0$. Since $\rho_0\in O_n(\rho_0)$ for every $n$ and the diameters shrink to $0$, any element of $\bigcap_n O_n(\rho_0)$ lies within trace distance $\varepsilon_n$ of $\rho_0$ for every $n$, so $\bigcap_n O_n(\rho_0)=\{\rho_0\}$.

For any bounded Hermitian $H$ and $\rho\in O_n(\rho_0)$,
\[
|\operatorname{Tr}(\rho H)-\operatorname{Tr}(\rho_0H)| \le \|H\|\,\|\rho-\rho_0\|_1 < \|H\|\,\varepsilon_n,
\]
so $\sup_{\rho\in O_n(\rho_0)}|\operatorname{Tr}(\rho H)-\operatorname{Tr}(\rho_0H)|\to0$, giving the stated limits.
\end{proof}

\noindent{\bf Theorem}~{\bf \ref{thm:reduction-infinite-bounded}}~(Reduction to point states: infinite dimensions)

\begin{proof}
By Section~\ref{sec:parcels}, the weak topology on $\mathcal D(\mathcal H)$ coincides with the trace-norm topology, so for each $n$ the ball $B_1(\rho_0,2^{-n})$ is a weak open neighbourhood of $\rho_0$; since experimental parcels form a basis for this topology, every weak open neighbourhood of $\rho_0$ contains an experimental parcel containing $\rho_0$.

We construct the sequence $O_n(\rho_0)$ inductively. Choose an experimental parcel $O_1(\rho_0)$ with $\rho_0\in O_1(\rho_0)\subseteq B_1(\rho_0,2^{-1})$. Given $O_n(\rho_0)$, the set $O_n(\rho_0)\cap B_1(\rho_0,2^{-(n+1)})$ is a weak open neighbourhood of $\rho_0$, so there is an experimental parcel $O_{n+1}(\rho_0)$ with
\[
\rho_0\in O_{n+1}(\rho_0)\subseteq O_n(\rho_0)\cap B_1(\rho_0,2^{-(n+1)}).
\]
By construction $O_{n+1}(\rho_0)\subseteq O_n(\rho_0)$ and $O_{n+1}(\rho_0)\subseteq B_1(\rho_0,2^{-(n+1)})$, so the sequence is nested and $O_n(\rho_0)\subseteq B_1(\rho_0,2^{-n})$ for every $n$, giving $\sup_{\rho\in O_n(\rho_0)}\|\rho-\rho_0\|_1\le2^{-n}$. For bounded Hermitian $H$ and $\rho\in O_n(\rho_0)$,
\[
|\operatorname{Tr}(\rho H)-\operatorname{Tr}(\rho_0H)| \le \|H\|\,\|\rho-\rho_0\|_1 \le \|H\|\,2^{-n}
\xrightarrow[n\to\infty]{}0.
\]
\end{proof}

\noindent{\bf Theorem}~{\bf \ref{thm:unbounded-moment}}~(Reduction for unbounded observables under a uniform moment bound)

\begin{proof}
Let $\{E_\lambda\}_{\lambda\ge0}$ denote the spectral family of $H$, and define the bounded truncations
\[
H_R := H\,E_{[0,R]}(H)
\]
for $R>0$. Then $H_R$ is bounded and satisfies $0\le H_R\le H$.

We decompose
\[
\operatorname{Tr}(\rho H)-\operatorname{Tr}(\rho_0 H)
=
\bigl[\operatorname{Tr}(\rho H_R)-\operatorname{Tr}(\rho_0 H_R)\bigr]
+
\operatorname{Tr}\bigl(\rho(H-H_R)\bigr)
-
\operatorname{Tr}\bigl(\rho_0(H-H_R)\bigr).
\]
Hence
\begin{align}
\left|
\operatorname{Tr}(\rho H)-\operatorname{Tr}(\rho_0 H)
\right|
\le\;&
\left|
\operatorname{Tr}(\rho H_R)-\operatorname{Tr}(\rho_0 H_R)
\right|
\nonumber\\
&+
\operatorname{Tr}\bigl(\rho(H-H_R)\bigr)
+
\operatorname{Tr}\bigl(\rho_0(H-H_R)\bigr).
\label{eq:three-term-bound}
\end{align}
(The tail terms are non-negative since $H - H_R \ge 0$, so the absolute values may be dropped.)

\medskip
\noindent
\textbf{Step 1: Convergence for the bounded truncations.}
Since $H_R$ is bounded, Theorem~\ref{thm:reduction-infinite-bounded} implies that
\[
\lim_{n\to\infty}
\sup_{\rho\in O_n(\rho_0)}
\left|
\operatorname{Tr}(\rho H_R)-\operatorname{Tr}(\rho_0 H_R)
\right|
=0
\]
for every fixed $R>0$.

\medskip
\noindent
\textbf{Step 2: Uniform control of the tails.}
Observe that $H-H_R = H\,E_{(R,\infty)}(H)$. Since $\lambda \le \lambda^{1+\delta}/R^\delta$ for all $\lambda>R$, the functional calculus gives $H\,E_{(R,\infty)}(H)\le R^{-\delta} H^{1+\delta}$. Therefore, for every $\rho\in O_n(\rho_0)$,
\[
\operatorname{Tr}\bigl(\rho(H-H_R)\bigr)
\le
R^{-\delta}\operatorname{Tr}(\rho H^{1+\delta}).
\]
By the uniform moment assumption, there exists a constant $C<\infty$ such that
\[
\operatorname{Tr}\bigl(\rho(H-H_R)\bigr)
\le
C R^{-\delta}
\]
uniformly in $n$ and $\rho\in O_n(\rho_0)$. Hence
\[
\lim_{R\to\infty}
\sup_n\sup_{\rho\in O_n(\rho_0)}
\operatorname{Tr}\bigl(\rho(H-H_R)\bigr)
=0.
\]
Similarly, because $\operatorname{Tr}(\rho_0 H)<\infty$, monotone convergence implies $\lim_{R\to\infty}\operatorname{Tr}(\rho_0(H-H_R))=0$.

\medskip
\noindent
\textbf{Step 3: Conclusion.}
Let $\varepsilon>0$. Choose $R>0$ large enough that
\[
\sup_n\sup_{\rho\in O_n(\rho_0)}\operatorname{Tr}(\rho(H-H_R))<\frac\varepsilon3,
\qquad
\operatorname{Tr}(\rho_0(H-H_R))<\frac\varepsilon3.
\]
For this fixed $R$, choose $n_0$ such that for all $n\ge n_0$,
\[
\sup_{\rho\in O_n(\rho_0)}|\operatorname{Tr}(\rho H_R)-\operatorname{Tr}(\rho_0 H_R)|<\frac\varepsilon3.
\]
Then Equation~\eqref{eq:three-term-bound} gives, for all $n\ge n_0$ and all $\rho\in O_n(\rho_0)$,
\[
|\operatorname{Tr}(\rho H)-\operatorname{Tr}(\rho_0 H)|<\varepsilon.
\]
Therefore $\lim_{n\to\infty}\sup_{\rho\in O_n(\rho_0)}|\operatorname{Tr}(\rho H)-\operatorname{Tr}(\rho_0 H)|=0$, and the convergence of the upper and lower expectation bounds follows immediately.
\end{proof}

\subsection*{Proofs from Section~\ref{sec:unitary} (Interval Unitary Dynamics)}

\noindent{\bf Theorem}~{\bf \ref{thm:unitary-evolution}}~(Properties of unitary evolution)

\begin{proof}
(i) Reversibility follows immediately from $U^{-1} = U^\dagger$ and the
definition of the evolved parcel.

(ii) Preservation of the information order follows from the fact that
$\rho \mapsto U\rho U^\dagger$ is a bijection preserving set inclusion: if
$O'\subseteq O$ then $U(O')\subseteq U(O)$, i.e.\ $O\sqsubseteq O'$
implies $U(O)\sqsubseteq U(O')$; similarly for double parcels.

(iii) For traceless Hermitian $X,Y$,
\[
\langle UXU^\dagger,\,UYU^\dagger\rangle_{\mathrm{HS}}
=
\operatorname{Tr}\bigl((UXU^\dagger)^\dagger UYU^\dagger\bigr)
=
\operatorname{Tr}(UXU^\dagger UYU^\dagger)
=
\operatorname{Tr}(XY)
=
\langle X,Y\rangle_{\mathrm{HS}},
\]
using $X^\dagger=X$, cyclicity of the trace, and $U^\dagger U=I$. Thus $X\mapsto UXU^\dagger$ is an orthogonal transformation of the real Hilbert space of traceless Hermitian operators. Writing $\rho=I/N+X$ for the trace-one affine hyperplane (finite $N=\dim\mathcal H$), $U\rho U^\dagger=I/N+UXU^\dagger$ since $U(I/N)U^\dagger=I/N$: the map $\rho\mapsto U\rho U^\dagger$ fixes the centre $I/N$ of the hyperplane and acts on it via the orthogonal transformation $X\mapsto UXU^\dagger$ alone — no additional translation is involved. An orthogonal linear transformation has Jacobian of absolute value $1$, hence
\[
\operatorname{Vol}_{\mathrm{HS}}(U(O))=\operatorname{Vol}_{\mathrm{HS}}(O).
\]
\end{proof}

\subsection*{Proofs from Section~\ref{sec:measurement} (Measurement in Interval Quantum Mechanics)}

\noindent{\bf Proposition}~{\bf \ref{prop:singleupdate}}~(Measurement update for a single-parcel)

\begin{proof}
 We need to show that $O'$ is weak open.
The map $f_j^{(\eta)}: \mathcal{D}(\mathcal{H}) \to \mathcal{D}(\mathcal{H})$, is a composition of the linear map $\rho \mapsto M_j^{(\eta)} \rho (M_j^{(\eta)})^\dagger$ and the normalization by the positive scalar function $\rho \mapsto \operatorname{Tr}(\rho E_j^{(\eta)})$. Since $0<\eta<1$, the term $(1-\eta)I/m$ makes
$E_j^{(\eta)}$ positive definite, so the denominator is strictly positive. Moreover, $M_j^{(\eta)}$ is invertible, so $f_j^{(\eta)}$ is a homeomorphism: its inverse is given by \[\tau \mapsto
\frac{
(M_j^{(\eta)})^{-1}\tau((M_j^{(\eta)})^\dagger)^{-1}
}{
\operatorname{Tr}\!\left(
(M_j^{(\eta)})^{-1}\tau((M_j^{(\eta)})^\dagger)^{-1}
\right)
}.\] Therefore, $f_j^{(\eta)}$ maps open sets to open sets, and $O'$ is open. It remains to show that $f_j^{(\eta)}$ preserves convexity.
Let $\rho_1,\rho_2\in O$ and $t\in[0,1]$, set
$\rho_t=t\rho_1+(1-t)\rho_2$ and $p_k=\operatorname{Tr}(\rho_k E_j^{(\eta)})$
for $k=1,2$. A direct computation gives
\[
f_j^{(\eta)}(\rho_t)
=\frac{tp_1}{tp_1+(1-t)p_2}\,f_j^{(\eta)}(\rho_1)
+\frac{(1-t)p_2}{tp_1+(1-t)p_2}\,f_j^{(\eta)}(\rho_2).
\]
The coefficients are non-negative and sum to $1$, so $f_j^{(\eta)}(\rho_t)$
is a convex combination of $f_j^{(\eta)}(\rho_1)$ and $f_j^{(\eta)}(\rho_2)$.
Hence $f_j^{(\eta)}$ maps convex sets to convex sets, and $O'$ is convex.
Thus $O'$ is a valid quantum parcel.
\end{proof}

\noindent{\bf Proposition}~{\bf \ref{prop:monotone}}~(Monotonicity of the update formulas)

\begin{proof}
Assume $U_1\subseteq O_1$ and $O_2\subseteq U_2$ (this holds in particular whenever $(O_1,O_2)\sqsubseteq(U_1,U_2)$).

Since $f_j^{(\eta)}$ is a function, it preserves inclusion: $U_1'=f_j^{(\eta)}(U_1)\subseteq f_j^{(\eta)}(O_1)=O_1'$.

Also $O_2\subseteq U_2$ gives $f_j^{(\eta)}(O_2)\subseteq f_j^{(\eta)}(U_2)$, hence
\[
O_2\cup f_j^{(\eta)}(O_2)\ \subseteq\ U_2\cup f_j^{(\eta)}(U_2),
\]
and since the convex hull is monotone with respect to set inclusion,
\[
O_2'=\operatorname{Conv}\bigl(O_2\cup f_j^{(\eta)}(O_2)\bigr)\ \subseteq\ \operatorname{Conv}\bigl(U_2\cup f_j^{(\eta)}(U_2)\bigr)=U_2'.
\]
Thus $U_1'\subseteq O_1'$ and $O_2'\subseteq U_2'$; whenever both $(O_1',O_2')$ and $(U_1',U_2')$ are themselves double-parcels, these are precisely the two inclusions defining $(O_1',O_2')\sqsubseteq(U_1',U_2')$.
\end{proof}

\noindent{\bf Theorem}~{\bf \ref{thm:double-preservation}}~(A sufficient condition for preservation of double-parcels)

\begin{proof}
\textbf{Positivity is automatic.} Suppose $\rho\in\overline{O_1}\cup\overline{O_2}$ has $\operatorname{Tr}(\rho\Pi_j)=0$. Since $\rho\ge0$, $\Pi_j\rho\Pi_j$ is positive semidefinite with $\operatorname{Tr}(\Pi_j\rho\Pi_j)=\operatorname{Tr}(\rho\Pi_j)=0$ (cyclicity of the trace), so $\Pi_j\rho\Pi_j=0$ (a positive semidefinite operator of trace zero is the zero operator). Then, for $H=\Pi_jH\Pi_j$, $\operatorname{Tr}(\rho H)=\operatorname{Tr}(\rho\Pi_jH\Pi_j)=\operatorname{Tr}(\Pi_j\rho\Pi_j\cdot H)=0$. Both separation inequalities would then read $0>c_1\cdot0=0$ or $0<c_2\cdot0=0$, which are impossible. Hence $\operatorname{Tr}(\rho\Pi_j)>0$ for every $\rho\in\overline{O_1}\cup\overline{O_2}$; since this set is compact and $\rho\mapsto\operatorname{Tr}(\rho\Pi_j)$ is continuous, $\delta:=\min_{\overline{O_1}\cup\overline{O_2}}\operatorname{Tr}(\rho\Pi_j)>0$.

\medskip
\noindent\textbf{Sharp-measurement separation.}
Let $\phi(\rho)=\operatorname{Tr}(\rho H)$. By replacing $H$ with $H+\alpha\Pi_j$ if necessary, we may assume $c_2>0$ (preserving $H=\Pi_jH\Pi_j$ and the gap $c_1-c_2>0$).

Define $A:=f_j^{(1)}(\overline{O_1})$ and $B:=\overline{O_2}\cup f_j^{(1)}(\overline{O_2})$; both are compact, being continuous images of, or unions of continuous images of, the compact sets $\overline{O_1},\overline{O_2}$.

For $\sigma=f_j^{(1)}(\rho_1)\in A$, $\rho_1\in\overline{O_1}$: $\sigma=\Pi_j\rho_1\Pi_j/\operatorname{Tr}(\rho_1\Pi_j)$, and since the hypothesis $\operatorname{Tr}(\rho_1H)>c_1\operatorname{Tr}(\rho_1\Pi_j)$ is strict and $\operatorname{Tr}(\rho_1\Pi_j)>0$ (shown above), dividing gives $\phi(\sigma)=\operatorname{Tr}(\rho_1H)/\operatorname{Tr}(\rho_1\Pi_j)>c_1$ strictly.

For $\sigma=f_j^{(1)}(\rho_2)$, $\rho_2\in\overline{O_2}$: likewise $\phi(\sigma)=\operatorname{Tr}(\rho_2H)/\operatorname{Tr}(\rho_2\Pi_j)<c_2$ strictly.

For $\sigma\in\overline{O_2}$: the hypothesis $\operatorname{Tr}(\sigma H)<c_2\operatorname{Tr}(\sigma\Pi_j)$ is strict, and $\operatorname{Tr}(\sigma\Pi_j)\le1$ with $c_2>0$, so $\phi(\sigma)=\operatorname{Tr}(\sigma H)<c_2\operatorname{Tr}(\sigma\Pi_j)\le c_2$, giving $\phi(\sigma)<c_2$ strictly.

Thus $\phi(\sigma)<c_2$ for every $\sigma\in B$; since $B$ is compact and $\phi$ continuous, the maximum $\max_B\phi$ is attained, so $\max_B\phi<c_2$ strictly. By linearity, any convex combination of points with $\phi<c_2$ again has $\phi<c_2$, so $\phi(\tau)<c_2$ for all $\tau\in\operatorname{Conv}(B)$ as well, and $\max_{\operatorname{Conv}(B)}\phi\le\max_B\phi<c_2$. Similarly $\phi(\sigma)>c_1$ for every $\sigma\in A$, and since $A$ is compact, $\min_A\phi>c_1$. Hence
\[
\min_A\phi \;>\; c_1 \;>\; c_2 \;>\; \max_{\operatorname{Conv}(B)}\phi,
\]
so $A$ and $\operatorname{Conv}(B)$ are separated by the continuous linear functional $\phi$ with a gap of at least $c_1-c_2>0$, and since both are compact,
\[
\varepsilon:=d(A,\operatorname{Conv}(B))>0.
\]

\medskip
\noindent\textbf{Uniform convergence of $f_j^{(\eta)}$ to $f_j^{(1)}$.}
Since $\delta>0$ (shown above), the denominators in the sharp update are uniformly bounded below on $\overline{O_1}\cup\overline{O_2}$. Writing $E_j^{(\eta)}=\bigl(\eta+\frac{1-\eta}{m}\bigr)\Pi_j+\frac{1-\eta}{m}(I-\Pi_j)$, so $M_j^{(\eta)}=a_\eta\Pi_j+b_\eta(I-\Pi_j)$ with $a_\eta=\bigl(\eta+\frac{1-\eta}{m}\bigr)^{1/2}$, $b_\eta=\bigl(\frac{1-\eta}{m}\bigr)^{1/2}$, we have $\|M_j^{(\eta)}-\Pi_j\|_\infty\to0$ and $\operatorname{Tr}(\rho E_j^{(\eta)})\to\operatorname{Tr}(\rho\Pi_j)$ uniformly on $\overline{O_1}\cup\overline{O_2}$, giving
\[
\sup_{\rho\in\overline{O_1}\cup\overline{O_2}}\|f_j^{(\eta)}(\rho)-f_j^{(1)}(\rho)\|_1\longrightarrow0.
\]
Hence there is $\eta_0<1$ such that for all $\eta\in(\eta_0,1)$, this supremum is $<\varepsilon/3$.

\medskip
\noindent\textbf{Stability of separation for $\eta<1$.}
Define $B_\eta:=\overline{O_2}\cup f_j^{(\eta)}(\overline{O_2})$. Uniform convergence gives $d_H(f_j^{(\eta)}(\overline{O_1}),A)<\varepsilon/3$ and $d_H(B_\eta,B)<\varepsilon/3$, hence (convex hull being continuous in Hausdorff metric on compacta) $d_H(\operatorname{Conv}(B_\eta),\operatorname{Conv}(B))<\varepsilon/3$ after possibly increasing $\eta_0$. Therefore $d(f_j^{(\eta)}(\overline{O_1}),\operatorname{Conv}(B_\eta))\ge\varepsilon/3>0$, and since $O_1'(\eta)\subseteq f_j^{(\eta)}(\overline{O_1})$, $O_2'(\eta)\subseteq\operatorname{Conv}(B_\eta)$, it follows that $O_1'(\eta)\cap O_2'(\eta)=\emptyset$ for all $\eta\in(\eta_0,1)$, i.e.\ $(O_1'(\eta),O_2'(\eta))$ is a double-parcel.
\end{proof}

\noindent{\bf Lemma}~{\bf \ref{lem:mobius}}~(M\"obius map for the qubit fuzzy update)

\begin{proof}
Write $\rho$ in Bloch coordinates, $z=2p-1$. The qubit update of Section~\ref{sec:measurement} gives $z'=(z+\eta)/(1+\eta z)$; substituting $z=2p-1$, $z'=2p'-1$ and simplifying yields $p'=(1+\eta)p/(1-\eta+2\eta p)=F_\eta(p)$, independent of $x,y$. That $F_\eta(0)=0$, $F_\eta(1)=1$, $F_\eta'(p)=(1-\eta^2)/(1-\eta+2\eta p)^2>0$, and $F_\eta(p)-p=2\eta p(1-p)/(1-\eta+2\eta p)>0$ on $(0,1)$ follow by direct computation.

Since $x'=\sqrt{1-\eta^2}\,x/(1+\eta z)$ and $y'=\sqrt{1-\eta^2}\,y/(1+\eta z)$ scale $x,y$ by a common non-zero factor depending only on $z$, the map sends the disk of allowed $(x,y)$ at each fixed $z$ (equivalently, fixed $p$) bijectively onto the corresponding disk at $z'$ (equivalently, $p'=F_\eta(p)$). Hence $f_j^{(\eta)}$ maps the $p$-slab $\{a<p<b\}$ — the union of these disks over $z\in(2a-1,2b-1)$ — bijectively onto $\{F_\eta(a)<p<F_\eta(b)\}$.
\end{proof}

\noindent{\bf Example}~\ref{ex:preserved-no-separation}~(Preservation without separation)~{\bf and}~\ref{ex:failure-no-separation}~(Failure without separation)

\begin{proof}
Both are direct consequences of Lemma~\ref{lem:mobius} and the monotonicity of $F_\eta$, together with the observation that a Hermitian $H=\Pi_jH\Pi_j$ on a rank-one $\Pi_j$ is necessarily a real scalar multiple $h\Pi_j$ of $\Pi_j$, so $\operatorname{Tr}(\rho H)/\operatorname{Tr}(\rho\Pi_j)=h$ identically wherever $\operatorname{Tr}(\rho\Pi_j)\neq0$; no choice of $h$ can satisfy $h>c_1>c_2>h$. The remaining verifications are given in the text of each example.
\end{proof}

\noindent{\bf Theorem}~{\bf \ref{thm:volcontraction_qubit}}~(Volume contraction for a qubit)

\begin{proof}
The condition \(\operatorname{Tr}(\rho\Pi_j)=\frac{1+z}{2}\ge c\) implies \(z\ge 2c-1\) for every Bloch vector
\((x,y,z)\) of a state in \(P\).  Hence the denominator in \eqref{jacob} satisfies
\(1+\eta z \ge 1+\eta(2c-1)\).

If \(c\ge\frac12\) then \(2c-1\ge0\) and therefore \(1+\eta z\ge 1\).  Consequently
\(J^{(\eta)}(z)\le(1-\eta^2)^2<1\) for all \(\eta\in(0,1)\) and all \(\rho\in P\).
Integration over \(P\) yields the desired contraction.

If \(0<c<\frac12\) we use the uniform bound
\[
J^{(\eta)}(z) \le \frac{(1-\eta^2)^2}{(1+\eta(2c-1))^4}=:K(\eta).
\]
A direct computation shows that \(K(\eta)<1\) precisely when \(\eta > \dfrac{2(1-2c)}{1+(1-2c)^2}\).
Because \(\eta\mapsto K(\eta)\) is continuous and tends to \(0\) as \(\eta\to1^-\),
there exists \(\eta_0<1\) such that \(K(\eta)<1\) for all \(\eta\in(\eta_0,1)\).
For those \(\eta\) we have \(J^{(\eta)}(z)<1\) uniformly on \(P\), and integration gives
\(\operatorname{Vol}_{\mathrm{HS}}(f_j^{(\eta)}(P))<\operatorname{Vol}_{\mathrm{HS}}(P)\).
\end{proof}

\medskip
\noindent
The general finite-dimensional case (Theorem~\ref{thm:volcontraction_nqubits} below) uses the Jacobian of a projective Kraus update, which we record as two auxiliary results. These are not stated in the body of the paper, so — to avoid disturbing the numbering of the Lemmas and Corollaries stated there — we label them Lemma~A and Corollary~A rather than folding them into the main numbering.

\begin{lemmaA}[Jacobian of a projective linear map]
\label{lem:projective-jacobian}
Let $V$ be a real vector space of dimension $D$, $L:V\to V$ an invertible linear map, and $\ell:V\to\mathbb R$ a linear functional with $\ell(L(x))\neq0$ on the affine section $\{x:\ell(x)=1\}$. The induced map $f(x)=L(x)/\ell(L(x))$ on this $(D-1)$-dimensional affine section has Jacobian
\[
|J_f(x)| = \frac{|\det_{\mathbb R}L|}{|\ell(L(x))|^D}.
\]
\end{lemmaA}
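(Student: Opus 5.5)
The plan is to treat $f$ as the composition of the linear map $L$ with a radial rescaling, and to compute its Jacobian by splitting $V$ into a direction tangent to the affine section and a direction transverse to it.

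\textbf{Setting up the section.} Write $A=\{x:\ell(x)=1\}$ and $A_0=\ker\ell$, the $(D-1)$-dimensional tangent space of $A$. Since $f(x)=L(x)/\ell(L(x))$ satisfies $\ell(f(x))=1$, $f$ maps $A$ into $A$. Differentiating at $x\in A$ in a direction $v\in A_0$ gives
\[
Df_x(v)=\frac{Lv}{\ell(Lx)}-\frac{\ell(Lv)}{\ell(Lx)^2}\,Lx .
\]
This is $\frac{1}{\ell(Lx)}\,P_{y}(Lv)$, where $y=f(x)$ and $P_y(w)=w-\ell(w)\,y$. The map $P_y$ is the projection of $V$ onto $A_0$ along the line $\mathbb R y$; this is well defined because $\ell(y)=1$, so $y\notin A_0$.

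\textbf{The determinant computation.} I will compare volumes using a full-dimensional parallelepiped. Consider the linear map $\Lambda:V\to V$ that sends $v\in A_0$ to $Lv$ and sends $x$ to $Lx$. This map is just $L$, written in the decomposition $V=A_0\oplus\mathbb Rx$, and the analogous decomposition $V=A_0\oplus\mathbb Ry$ is used on the target side.

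Write $Lv=P_y(Lv)+\ell(Lv)\,y$ and $Lx=\ell(Lx)\,y$. Then, with respect to the splittings $A_0\oplus\mathbb Rx\to A_0\oplus\mathbb Ry$, $L$ is block triangular. Its $A_0\to A_0$ block is $P_y L|_{A_0}=\ell(Lx)\,Df_x$, and its $\mathbb Rx\to\mathbb Ry$ block is multiplication by $\ell(Lx)$.

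Taking determinants gives
\[
\det L=\ell(Lx)^{D-1}\det(Df_x)\cdot\ell(Lx)\cdot\kappa ,
\]
where $\kappa$ accounts for the change of basis between the two transverse choices $x$ and $y$. Volume on $A$ is induced from a fixed volume on $A_0$. Measure $V$ as (volume on $A_0$) $\times$ (the $\ell$-coordinate). Under this normalization, both $x$ and $y$ have transverse coordinate $\ell=1$, so $\kappa=1$. Hence
\[
|\det Df_x|=\frac{|\det L|}{|\ell(Lx)|^{D}} .
\]

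\textbf{Main obstacle.} The delicate point is this normalization: $\det_{\mathbb R}L$ must be computed with respect to a volume on $V$ compatible with the volume used on $A$. I will fix that volume as $\lambda_{A_0}\otimes d\ell$ and check that $\kappa=1$ holds in this convention. The invertibility of $L$ and the hypothesis $\ell(Lx)\neq0$ are exactly what is needed for $f$ and $P_y$ to be defined.

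In the application, $V$ is the space of Hermitian operators, $L(\rho)=M\rho M^\dagger$, and $\ell=\operatorname{Tr}$. I will also confirm that the Hilbert--Schmidt normalization on the trace-one hyperplane is of this product form up to a constant factor, which cancels in the ratio.
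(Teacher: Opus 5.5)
Your proof is correct, and it takes a genuinely different route from the paper's.

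\textbf{The paper's route.} The paper fixes global coordinates in which $\ell$ is the first coordinate. It writes $L(1,y)=(p(y),q(y))$ and $f$ on the section as $\psi(y)=q(y)/p(y)$. It then computes $\det D\psi$ with the matrix determinant lemma and a Schur-complement identity, in which the $y$-dependence of $p(y)-p_1^TQ^{-1}q(y)$ cancels to give $\det L/\det Q$. Finally it needs a density, or polynomial-identity, argument to remove the assumption that the block $Q$ is invertible.

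\textbf{Your route.} You split $V=\ker\ell\oplus\mathbb R x$ on the source side and $V=\ker\ell\oplus\mathbb R f(x)$ on the target side, both adapted to the point and its image. Because $Lx=\ell(Lx)\,f(x)$ has no $\ker\ell$-component, $L$ is block triangular in these splittings. Its diagonal blocks are $\ell(Lx)\,Df_x$ and multiplication by $\ell(Lx)$. So $\det L=\ell(Lx)^D\det Df_x$ follows at once, for every $x$. This needs no genericity condition, no density argument, and not even invertibility of $L$; if $L$ is singular, both sides are simply $0$. In effect, choosing $f(x)$ as the transverse direction on the target side does the work that the Schur complement does in the paper's fixed coordinates. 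The paper's computation, in exchange, gives an explicit formula for $D\psi$ in coordinates.

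\textbf{On the normalization you flag as the main obstacle.} This is not actually an issue. $\det L$ is the determinant of an endomorphism of $V$, and $\det Df_x$ is the determinant of an endomorphism of $\ker\ell$. Both are independent of any choice of basis or volume. Your factor $\kappa$ is just the determinant of the change of basis from $(e_1,\dots,e_{D-1},x)$ to $(e_1,\dots,e_{D-1},f(x))$. It equals $1$ because $x-f(x)\in\ker\ell$, so that change-of-basis matrix is unipotent triangular, whatever volume you put on $V$. You therefore do not need to fix $\lambda_{A_0}\otimes d\ell$. For the same reason, the Jacobian in Corollary~A does not depend on how the Hilbert--Schmidt volume on the trace-one hyperplane is normalized, so no constant has to cancel there either.
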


\begin{proof}
Choose coordinates $x=t(1,y)$, $y\in\mathbb R^{D-1}$, so that $\ell(x)=t$ picks out the first coordinate. Write $L(1,y)=(p(y),q(y))$ with $p(y)=p_0+p_1\cdot y$ and $q(y)=q_0+Qy$ affine-linear in $y$ (since $L$ is linear in $(1,y)$). In these coordinates $f$ corresponds to $\psi(y)=q(y)/p(y)$, and
\[
D\psi(y) = \frac{Q\,p(y)-q(y)\,p_1^T}{p(y)^2}.
\]
Suppose first that $Q$ is invertible. By the matrix determinant lemma, $\det(Qp-qp_1^T)=p^{D-2}\det(Q)\bigl[p-p_1^TQ^{-1}q\bigr]$, and the Schur complement formula applied to the block matrix $L=\begin{psmallmatrix}p_0&p_1\\q_0&Q\end{psmallmatrix}$ gives $p(y)-p_1^TQ^{-1}q(y)=\det(L)/\det(Q)$ for every $y$ (the $y$-dependence cancels). Hence
\[
\det(D\psi)(y)=\frac{p(y)^{D-2}\det(L)}{p(y)^{2(D-1)}}=\frac{\det(L)}{p(y)^D}.
\]
For general invertible $L$, both sides of $\det(Qp(y)-q(y)p_1^T)=p(y)^{D-2}\det(L)$ are polynomial in the entries of $L$ (for fixed $y$), and this identity has just been established on the dense open set of $L$ for which $Q$ is invertible — the complement is the zero locus of the polynomial $\det Q$, a proper algebraic subvariety. By continuity, the identity, and hence $\det(D\psi)(y)=\det(L)/p(y)^D$, extends to all invertible $L$. Taking absolute values gives the stated formula, since $\ell(L(x))=p(y)$.
\end{proof}

\begin{corollaryA}[Jacobian of a projective Kraus update]
\label{cor:kraus-jacobian}
Apply Lemma~A with $V=$ the $D=d^2$-dimensional real space of Hermitian $d\times d$ matrices, $L(Y)=MYM^\dagger$, and $\ell=\operatorname{Tr}$. Then, for $M=M_j^{(\eta)}\ge0$,
\[
J^{(\eta)}(\rho) = \frac{(\det M_j^{(\eta)})^{2d}}{\operatorname{Tr}(\rho E_j^{(\eta)})^{d^2}}.
\]
\end{corollaryA}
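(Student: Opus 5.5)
The plan is to instantiate Lemma~A directly and reduce everything to computing two quantities: the normalising functional $\ell(L(\rho))$ and the real determinant $\det_{\mathbb R}L$ of the conjugation map $L(Y)=MYM^\dagger$ on the $d^2$-dimensional real space of Hermitian matrices.

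\textbf{Step 1: setting and nondegeneracy.} Take $V$ to be the Hermitian $d\times d$ matrices, $\ell=\operatorname{Tr}$ and $M=M_j^{(\eta)}$. The affine section $\{\ell=1\}$ is the trace-one hyperplane containing $\mathcal D(\mathcal H)$. By cyclicity of the trace,
\[
\ell(L(\rho))=\operatorname{Tr}(M\rho M^\dagger)=\operatorname{Tr}(\rho M^\dagger M)=\operatorname{Tr}(\rho E_j^{(\eta)}),
\]
so $L(\rho)/\ell(L(\rho))$ is exactly the Kraus map $f_j^{(\eta)}$. For $\eta\in(0,1)$ the operator $E_j^{(\eta)}$ is positive definite, so this denominator is strictly positive on $\mathcal D(\mathcal H)$. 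We therefore apply Lemma~A on the open neighbourhood of $\mathcal D(\mathcal H)$ in the hyperplane where $\operatorname{Tr}(\rho E_j^{(\eta)})>0$. Since $M$ is invertible, $L$ is invertible, with inverse $Y\mapsto M^{-1}Y(M^\dagger)^{-1}$.

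We should also note that the Jacobian of a self-map of the section is independent of which affine coordinates are used, provided the same coordinates are used on source and target. Hence Lemma~A's value coincides with the Jacobian relative to the Hilbert--Schmidt volume $\operatorname{Vol}_{\mathrm{HS}}$.

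\textbf{Step 2: computing $\det_{\mathbb R}L$.} This is the only real content. Since $M\ge0$, write $M=W\operatorname{diag}(\mu_1,\dots,\mu_d)W^\dagger$ with $W$ unitary and $\mu_k>0$. Then $L=C_W\circ L_{\mathrm{diag}}\circ C_W^{-1}$, where $C_W(Y)=WYW^\dagger$ and $L_{\mathrm{diag}}$ is conjugation by $\operatorname{diag}(\mu_1,\dots,\mu_d)$. Determinants are invariant under this similarity, so it suffices to treat diagonal $M$.

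In the real basis consisting of the $E_{kk}$, together with $E_{kl}+E_{lk}$ and $i(E_{kl}-E_{lk})$ for $k<l$, the map $L_{\mathrm{diag}}$ is diagonal. Each $E_{kk}$ is scaled by $\mu_k^2$, and each of the two off-diagonal directions for a pair $k<l$ is scaled by $\mu_k\mu_l$. Hence
\[
\det_{\mathbb R}L=\prod_k\mu_k^2\prod_{k<l}(\mu_k\mu_l)^2 .
\]
Each index $k$ occurs in $d-1$ pairs, so the exponent of $\mu_k$ is $2+2(d-1)=2d$. This gives $\det_{\mathbb R}L=(\det M)^{2d}$, which is positive, so no absolute value is needed.

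As a cross-check, complexify: $L$ extends to all matrices as $M\otimes\overline M$, whose determinant is $(\det M)^d(\det\overline M)^d=|\det M|^{2d}$. The Hermitian matrices form a real form of this complex space, so the real and complex determinants agree.

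\textbf{Step 3: conclusion.} Substitute into Lemma~A with $D=d^2$. This gives
\[
J^{(\eta)}(\rho)=\frac{(\det M_j^{(\eta)})^{2d}}{\operatorname{Tr}(\rho E_j^{(\eta)})^{d^2}},
\]
as claimed.

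I expect Step~2 to be the only point needing care, specifically the bookkeeping of exponents over the real off-diagonal directions. Step~1 should need only a remark about restricting to the open region where the denominator is positive.
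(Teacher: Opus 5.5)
Your proposal is correct and follows the paper's proof essentially step for step. Like the paper, you diagonalize $M$, read off the scalings $\mu_k^2$ on the diagonal directions and $\mu_k\mu_l$ on the real and imaginary off-diagonal Hermitian directions to get $\det_{\mathbb R}L=(\det M)^{2d}$, identify $\ell(L(\rho))=\operatorname{Tr}(\rho E_j^{(\eta)})$ by cyclicity, and substitute into Lemma~A. Your additions are sound refinements rather than a different route: the explicit unitary-similarity reduction, the $M\otimes\overline M$ complexification cross-check, and restricting Lemma~A to the open region where $\operatorname{Tr}(\rho E_j^{(\eta)})>0$ (a point the paper leaves implicit).
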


\begin{proof}
Diagonalizing $M$ with real eigenvalues $\mu_1,\dots,\mu_d\ge0$, and taking as a basis of $V$ the $d$ diagonal directions $E_{ii}$ together with, for each $i<j$, the real and imaginary parts of the off-diagonal entry: $L$ scales each $E_{ii}$ by $\mu_i^2$ and each off-diagonal pair's two-real-dimensional block by $\mu_i\mu_j$ in both directions (determinant $(\mu_i\mu_j)^2$ per pair), so
\[
\det_{\mathbb R}L = \prod_i\mu_i^2\cdot\prod_{i<j}(\mu_i\mu_j)^2 = (\det M)^2\cdot(\det M)^{2(d-1)} = (\det M)^{2d},
\]
using $\prod_{i<j}\mu_i\mu_j=\prod_i\mu_i^{d-1}=(\det M)^{d-1}$ (each eigenvalue occurs in exactly $d-1$ pairs). Since $M=M_j^{(\eta)}\ge0$, $\det M\ge0$ and all quantities are positive, so the absolute values in Lemma~A may be dropped, giving the stated formula by direct substitution, since $\ell(L(\rho))=\operatorname{Tr}(M\rho M^\dagger)=\operatorname{Tr}(\rho M^\dagger M)=\operatorname{Tr}(\rho E_j^{(\eta)})$.
\end{proof}

\noindent{\bf Theorem}~{\bf \ref{thm:volcontraction_nqubits}}~(Volume contraction, general finite-dimensional case)

\begin{proof}
By Corollary~A,
\[
J^{(\eta)}(\rho)=\frac{(\det M_j^{(\eta)})^{2d}}{\operatorname{Tr}(\rho E_j^{(\eta)})^{d^2}}.
\]
The eigenvalues of \(E_j^{(\eta)}\) are \(a_\eta:=\eta+\frac{1-\eta}{m}\) (on the range of \(\Pi_j\),
multiplicity \(r\)) and \(b_\eta:=\frac{1-\eta}{m}\) (on the orthogonal complement, multiplicity
\(d-r\)). Since \(M_j^{(\eta)}=\sqrt{E_j^{(\eta)}}\) (positive square root), its eigenvalues
are \(\sqrt{a_\eta}\) (multiplicity \(r\)) and \(\sqrt{b_\eta}\) (multiplicity \(d-r\)), so
\[
\det M_j^{(\eta)} = a_\eta^{r/2}\,b_\eta^{(d-r)/2},
\]
and consequently
\[
\bigl(\det M_j^{(\eta)}\bigr)^{2d}
      = a_\eta^{rd}\,b_\eta^{(d-r)d}
      = \Bigl(\eta+\frac{1-\eta}{m}\Bigr)^{rd}
        \Bigl(\frac{1-\eta}{m}\Bigr)^{(d-r)d}.
\]

Using the assumption \(\operatorname{Tr}(\rho\Pi_j)\ge c\) we obtain
\[
\operatorname{Tr}(\rho E_j^{(\eta)}) = \eta\operatorname{Tr}(\rho\Pi_j)+\frac{1-\eta}{m}
                       \ge \eta c + \frac{1-\eta}{m}.
\]

Define
\[
\alpha(\eta)=\Bigl(\eta+\frac{1-\eta}{m}\Bigr)^{rd}
              \Bigl(\frac{1-\eta}{m}\Bigr)^{(d-r)d},
\qquad
\beta(\eta)=\Bigl(\eta c + \frac{1-\eta}{m}\Bigr)^{d^2}.
\]
Then $J^{(\eta)}(\rho) \le \alpha(\eta)/\beta(\eta)$ for all $\rho\in P$.

As \(\eta\to1^-\): since \(d-r>0\), \(\bigl(\frac{1-\eta}{m}\bigr)^{(d-r)d}\to0\) while \(\eta+\frac{1-\eta}{m}\to1\), so \(\alpha(\eta)\to0\); and \(\eta c+\frac{1-\eta}{m}\to c>0\), so \(\beta(\eta)\to c^{d^2}>0\). Hence \(\alpha(\eta)/\beta(\eta)\to0\), and by continuity there exists \(\eta_0<1\) such that \(J^{(\eta)}(\rho)<1\) uniformly on \(P\) for all \(\eta\in(\eta_0,1)\).

Since $f_j^{(\eta)}$ is a homeomorphism (Proposition~\ref{prop:singleupdate}), the change-of-variables formula gives
\[
\operatorname{Vol}_{\mathrm{HS}}\!\left(f_j^{(\eta)}(P)\right)
= \int_P J^{(\eta)}(\rho) \, \mathrm{d}\rho
< \int_P 1 \, \mathrm{d}\rho
= \operatorname{Vol}_{\mathrm{HS}}(P).
\]
\end{proof}

\noindent{\bf Corollary}~{\bf \ref{cor:strict-increase}}~(Strict increase of geometric information)

\begin{proof}
Since $O_2\neq\varnothing$ is open, $\operatorname{Vol}_{\mathrm{HS}}(O_2)>0$. By construction $O_2\subseteq O_2'(\eta)$, so $\operatorname{Vol}_{\mathrm{HS}}(O_2'(\eta))\ge\operatorname{Vol}_{\mathrm{HS}}(O_2)>0$. By Theorem~\ref{thm:volcontraction_qubit} or Theorem~\ref{thm:volcontraction_nqubits}, $\operatorname{Vol}_{\mathrm{HS}}(O_1'(\eta))<\operatorname{Vol}_{\mathrm{HS}}(O_1)$. Therefore
\[
I_{\mathrm{HS}}(O_1'(\eta),O_2'(\eta))
=\frac{\operatorname{Vol}_{\mathrm{HS}}(O_2'(\eta))}{\operatorname{Vol}_{\mathrm{HS}}(O_1'(\eta))}
>\frac{\operatorname{Vol}_{\mathrm{HS}}(O_2)}{\operatorname{Vol}_{\mathrm{HS}}(O_1)}
=I_{\mathrm{HS}}(O_1,O_2).
\]
\end{proof}

\subsection*{Proofs from Section~\ref{sec:entanglement} (Entanglement and Bell Violation in IQM)}

\noindent{\bf Theorem}~{\bf \ref{bell-violation}}~(Finite-precision Bell violation)

\begin{proof}
Since $\rho\mapsto\operatorname{Tr}(\rho S)$ is continuous and equals $2\sqrt2>2$ at $\ket{\Phi^+}\bra{\Phi^+}$, the set $\{\rho:\operatorname{Tr}(\rho S)>2\}$ is an open neighbourhood of the Bell state; since experimental parcels form a basis for the state-space topology (Section~\ref{sec:parcels}), some experimental parcel $P$ contains the Bell state and lies inside this neighbourhood. Tsirelson's bound $\operatorname{Tr}(\rho S)\le2\sqrt2$ for every $\rho$ then gives the stated range.
\end{proof}